  \newcommand{\whendraft}[1]{#1}%
  \newcommand{\whendraft}[1]{}%
\NewDocumentEnvironment{Note}{+b}{%
  \whendraft{%
    \ignorespaces%
    \marginpar{\footnotesize\noindent#1}%
    \ignorespaces%
  }
}%
\NewDocumentCommand{\undernote}{s O{blue} m m}{
  \IfBooleanT{#1}{\smash}%
  {\color{#2} %
    \underbrace{\normalcolor%
      #4}_{\mathclap{\text{#3}}} %
  }%
  \IfBooleanT{#1}{\vphantom{#4}}
}
\newcommand{\defterm}[1]{{\boldmath\normalfont \bfseries #1}}%
\renewcommand{\defterm}{\emph}%
\g@addto@macro\bfseries{\boldmath}
\NewDocumentEnvironment{rightfigmp}{O{1em} m m}{
  \noindent%
  \begin{minipage}[t]{\linewidth}%
    \noindent%
    \begin{minipage}[t]{\linewidth - #2 - #1}%
    }{
    \end{minipage}%
    \hfill%
    \begin{minipage}[t]{#2}%
      \includegraphics[width=#2,valign=t]{figures/#3}%
    \end{minipage}
  \end{minipage}
}
\titlespacing*{\paragraph}{%
  0pt}{
  {\medskipamount}}{
  1em}
\providecommand{\wrt}{with respect to\xspace}%
\begin{document}

\newcommand{\dsg}{\textsc{DSG}\xspace}    %
\newcommand{\arboricity}{\fparnew{\operatorname{arb}}}
\providecommand{\y}{\fparnew{y}}
\providecommand{\bary}{\fparnew{\bar{y}}}
\providecommand{\cut}{\fparnew{\partial}}%

\providecommand{\incut}{\fparnew{\delta^-}}
\providecommand{\outcut}{\fparnew{\delta^+}}
\providecommand{\indegree}{\fparnew{d^-}}

\providecommand{\lpopt}{\opt_{\operatorname{LP}}}%
\newcommand{\edge}[2]{\setof{#1,#2}}%


\NewDocumentCommand{\indegreeT}{m}{
  \indegree{#1}_T%
}

\providecommand{\outdegree}{\fparnew{d^+}}
\providecommand{\inneighbors}{\fparnew{N^-}}
\providecommand{\outneighbors}{\fparnew{N^+}}

\providecommand{\mini}{i_0}%
\providecommand{\alphamore}{\parof{1 + \alpha}}%
\providecommand{\level}{\fparnew{\operatorname{level}}}%
\providecommand{\lev}{\level}%
\providecommand{\lab}{\fparnew{\operatorname{label}}}%
\renewcommand{\lab}{\fparnew{\varphi}}%
\providecommand{\alphaless}{\parof{1 - \alpha}}%
\providecommand{\therank}{r}%
\providecommand{\thesize}{p}%

\title{$(1-\eps)$-approximate fully dynamic densest subgraph: \\ linear
  space and faster update time}

 \author{ Chandra
   Chekuri\thanks{Dept.\ of Computer Science, Univ.\ of Illinois,
     Urbana-Champaign, Urbana, IL 61801. {\tt
       chekuri@illinois.edu}. Supported in part by NSF grants
     CCF-1910149 and CCF-1907937.}  \and Kent Quanrud\thanks{Dept.\ of Computer
     Science, Purdue University, West Lafayette, IN 47907. {\tt
       krq@purdue.edu}. Supported in part by NSF grant CCF-2129816.}  }

\date{}

\maketitle

\begin{abstract}
  We consider the problem of maintaining a $(1-\epsilon)$-approximation to
  the densest subgraph (\dsg) in an undirected multigraph as it
  undergoes edge insertions and deletions (the fully dynamic setting).
  Sawlani and Wang \cite{SW20} developed a data structure that, for
  any given $\epsilon > 0$, maintains a $(1-\epsilon)$-approximation with
  $O(\log^4 n/\epsilon^6)$ worst-case update time for edge operations, and
  $O(1)$ query time for reporting the density value.  Their data
  structure was the first to achieve near-optimal approximation, and
  improved previous work that maintained a $(1/4 - \epsilon)$
  approximation in amortized polylogarithmic update time
  \cite{BHNT15}.  In this paper we develop a data structure for
  $(1-\epsilon)$-approximate \dsg that improves the one from \cite{SW20}
  in two aspects. First, the data structure uses linear space
  improving the space bound in \cite{SW20} by a logarithmic
  factor. Second, the data structure maintains a
  $(1-\epsilon)$-approximation in amortized $O(\log^2 n/\epsilon^4)$ time per
  update while simultaneously guaranteeing that the worst case update
  time is $O(\log^3 n \log \log n/\epsilon^6)$. We believe that the space
  and update time improvements are valuable for current large scale
  graph data sets.  The data structure extends in a natural fashion to
  hypergraphs and yields improvements in space and update times over
  recent work \cite{BeraBCG22} that builds upon \cite{SW20}.
\end{abstract}

\section{Introduction}
The \emph{densest subgraph problem} (\dsg) is the following. Given an
undirected (multi)graph $G=(V,E)$ find
$\max_{S \subseteq V} \frac{|E(S)|}{|S|}$ where $E(S)$ is the set of
edges with both endpoints in $S$. We use $\rho_G(S)$ to denote
$\frac{|E(S)|}{|S|}$, the density of $S$ in the graph $G$. \dsg is a
very well-studied problem with both practical and theoretical
appeal. On the practical side it is a core problem in network analysis
and graph mining to find clusters and communities. In addition to
being directly relevant, it is also a canonical problem in the field
of dense subgraph discovery \cite{lrja-10,gt-15,tc-21}.  On the
theoretical side \dsg is polynomial-time solvable problem and has
several important connections to network flow, arboricity, matchings,
submodular, and supermodular optimization
\cite{goldberg,pq-82,fujishige,fujishige09,BHNT15,SW20,ChekuriQT22}. There
have been several recent works on algorithmic aspects of \dsg and its
variants. A particular emphasis has been on fast and scalable
algorithms due to the large scale graph data driving many of the
applications. For this reason, even though there are polynomial-time
exact algorithms for \dsg via network flow \cite{goldberg,pq-82} and
submodular function minimization, the recent focus has been on
near-linear time constant factor and $(1-\eps)$-approximation
algorithms and heuristics
\cite{Charikar00,saha-directed-dsg,bgm-14,bsw-19,frankwolfe,flowless,ChekuriQT22}.\footnote{Maximum
  flow has seen a spate of breakthroughs in the last decade or more
  and a near-linear time algorithm was announced just a few months ago
  \cite{maxflow22}.  Despite their theoretical importance, these new
  algorithms are far from practical at this point in time.} Another
reason for the focus on approximation is that exact efficient
algorithms are unlikely in other models of computing such as streaming
\cite{bkv-12,andrewMc-math-foundations,BHNT15}, parallel and
distributed \cite{bkv-12,bgm-14,glm-19}, and dynamic
\cite{AlessandroWWW,BHNT15,SW20}, which are of much importance.
Constrained versions of \dsg such as the densest $k$-subgraph problem
\cite{fpk-01,k-06,bccfv-10,m-17} are also of much interest and very
well-studied, especially for their theoretical importance. In this
paper we focus on the unconstrained \dsg problem.

In this paper we are interested in the \emph{dynamic} setting for \dsg
where the underlying graph $G$ undergoes edge insertions and
deletions.  The goal is to maintain an approximation to the densest
subgraph.  This is the so-called \emph{fully} dynamic setting.%
\footnote{We focus on edge insertions and deletions assuming that the
  number of vertices is fixed and known. One can handle vertex
  additions and deletions via edge updates in standard ways. We
  believe that edge updates are more interesting from both a practical
  and theoretical point of view for \dsg.} The two query operations
to the data structure are: (i) report the (approximate) value of the
densest subgraph and (ii) output an (approximately optimal) densest
subgraph. Note that the densest subgraph can be very large, and hence
the typical focus is to have a fast query time for the value, and to
be able to output the densest subgraph in time proportional to its
size. In many applications of interest the underlying graph is large
and changes frequently \cite{Sahuetal17}. In addition to the practical
interest, dynamic (graph) algorithms have seen a surge of interest and
many exciting new results in recent years, and they have led to
numerous breakthroughs in obtaining faster algorithms for a number of
fundamental problems in a variety of models --- we refer the reader to
some surveys \cite{Henzinger18,HanauerHS21} rather than give pointers
to the very large literature.  For dynamic \dsg, Sawlani and Wang
\cite{SW20}, in an elegant recent work, developed a data structure
that for any given $\eps > 0$, maintains a $(1-\eps)$-approximation to
the value of the optimum density in \emph{worst-case} update time of
$O((\log^4 n)/\eps^6)$ using $O(m \log m)$ space, where $n$ and $m$
are the number of vertices and edges respectively. Their data
structure reports the value of the density in $O(1)$ time.  Their work
is the first to maintain an arbitrarily good approximation for the
optimum density, improving previous works, and in particular the work
of Bhattacharya et al. \cite{BHNT15} that maintained a
$(1/4 -\eps)$-approximation with a polylogarithmic \emph{amortized}
update time.

\subsection{Motivation and contribution}
Our goal is to obtain improved data structures for dynamic \dsg.
Given the large scale graphs that are common today, a practical
concern is the space usage. For instance, if a graph has a million
edges then $\log m$ (to base 2) is about $20$. A data structure that
uses $O(m \log m)$ space instead of linear space, like the one in
\cite{SW20}, may not be able to fit data in main memory even though
the constant may appear small in theoretical analysis. Second, it is
helpful to obtain improved update times since it will allow for more
accurate estimates under some given budget on the update time. In this
paper we develop fully dynamic data structures for
$(1-\eps)$-approximate fully dynamic \dsg that have the following
features:
\begin{itemize}
\item The space used is linear in the size of the graph.
\item The query time to answer the value of the densest subgraph is $O(1)$.
\item The data structure reports a value $\lambda$ such that $\lambda
  \ge (1-\eps)\opt - O(\ln n/\eps)$ in \emph{amortized} $O(\log
  n/\eps^2)$ time for edge updates. Here $\opt$ is the optimum density.
  A $(1-\eps)$ true approximation can be maintained in amortized $O(\log^2 n/\eps^4)$ time.
\item The data structure can be extended to have a \emph{worst-case}
  update time of $O(\log^3 n\log \log n/\eps^6)$ while maintaining the
  amortized bound of $O(\log^2 n/\eps^4)$ for edge operations.
\end{itemize}
Thus we are able to improve upon the data structure of \cite{SW20} in
terms of space, and also obtain faster worst-case and amortized update
times.  In addition to obtaining improved bounds, our data structure
is quite simple and the analysis is self-contained. Like previous data
structures for density, ours is also based on maintaining graph
orientations to minimize arboricity. Along the way we obtain some new
tradeoffs for maintaining arboricity that are of independent
interest. We outline these in the next subsection.

\paragraph{Extensions:} Our data structure for maintaining
$(1-\eps)$-approximation for \dsg extends in a natural and relatively
easy fashion to hypergraphs. We obtain a linear space data structure
and the running time for a rank $r$ hypergraph is an $O(r^2)$ factor
worse than it is for graphs. Recent work of Bera et
al. \cite{BeraBCG22} has extended the ideas in \cite{SW20} to
hypergraphs; our results improve upon theirs in a similar fashion as
ours improve upon \cite{SW20} for graphs in terms of space and update
times.  Bera et al. \cite{BeraBCG22} showed that one can handle
arbitrary edge-weights in graphs and hypergraphs via a sparsification
technique \cite{mpptx-15} and a black-box reduction to a data
structure for the unweighted case. Their reduction is randomized and
assumes an oblivious adversary. Moreover, the reduction requires
guessing the optimum density and maintaining a logarithmic number of
parallel copies of the unweighted data structure. This increases the
space and update times by poly-logarithmic factors when compared to
the unweighted case. We can employ their reduction and plug in our
data structure for the unweighted case.

\subsection{Technical overview in the context of related work}
The optimum density of a given graph $G$ is closely related to the
graph theoretic notion of \emph{arboricity} that we now define. Given
an undirected graph $G=(V,E)$, an orientation of $G$ is a directed
graph $D=(V,A)$ that is obtained from $G$ by orienting each edge $\edge{u}{v}
\in E$ either as the arc $(u,v)$ or as the arc $(v,u)$.  Given an
orientation $D$ of $G$, let $\arboricity(G,D)$ be the maximum
in-degree among vertices in $D$.  The arboricity of $G$, denoted by
$\arboricity(G)$, is the minimum $\arboricity(G,D)$ over all
orientations $D$ of $G$.  A well-known theorem of Nash-Williams
\cite{NashW64} implies that $\arboricity(G)$ is equal to the minimum
number of forests that are needed to cover the edge set $E$. Further,
$\arboricity(G)$ can be computed in polynomial time. One can show that
$\rho(G)$, the optimum density of a graph $G$, corresponds to the
\emph{fractional} arboricity of $G$; by fractional one means that an
edge $\edge{u}{v} \in E$ is now allowed to be fractionally oriented both as
$(u,v)$ and $(v,u)$ (with the sum of the non-negative fractions summing to one) and
the goal is to minimize the maximum fractional in-degree of the nodes.
A fractional orientation can be viewed as a solution to an exact LP
relaxation for \dsg suggested by Charikar \cite{Charikar00} (this has
been noted in several papers including \cite{bgm-14,flowless,SW20}). It can be shown
that the fractional and integral arboricity differ by at most $1$.

Independent of the connection to \dsg, dynamic maintainance of the
arboricity of a graph has received attention in the data structures
community since it has connections to the problem of maintaining fast
adjacency queries in low arboricity graphs (such as planar graphs)
\cite{KannanNR92}. Several papers, starting with the work of Brodal
and Fagerburg \cite{BrodalF99}, developed dynamic algorithms (and
analysis) for maintaining approximate arboricity. The initial papers
had amortized complexity bounds assuming that the arboricity was
guaranteed to be upper bounded by a given bound $\alpha$
\cite{BrodalF99,Kowalik07}.  Kopeliwitz et al \cite{KKPS14} were the
first to obtain a data structure that had polylogarithmic worst-case
update time.  Specifically, their algorithm maintained an orientation
such that the maximum in-degree of the orientation is
$(1+\eps)\arboricity(G) + O(\log n/\eps)$. However the update time
depended on the arboricity (in the application of interest, arboricity
was small and this was not a limitation). In these papers the focus
was on maintaining a somewhat loose approximation to the arboricity
(such as a constant factor with the precise constant left unspecified)
as the approximation translated into running time rather than the
quality of a solution. Subsequently, motivated by application to \dsg,
Bhattacharya et al. \cite{BHNT15} developed a data structure that
maintained constant factor approximation to the arboricity in
poly-logarithmic amortized update time and linear space (see also
\cite{HenzingerNW20}). Their update time did not depend on the
arboricity. Using this, \cite{BHNT15} showed that a
$(1/4 -\eps)$-approximate densest subgraph can be maintained in
amortized poly-logarithmic update time. We note that there is no data
structure so far that can maintain a constant factor approximation to
the arboricity with a worst-case poly-logarithmic update time.

To obtain a $(1-\eps)$-approximation to density with worst-case update
time, Sawlani and Wang \cite{SW20} use two important ideas. First,
they exploited the fact that density corresponds to fractional
arboricity. This additional flexibility allowed them to make copies of
edges and assume that that the arboricity is sufficiently large.  They
are then able to use the above mentioned algorithm of Kopeliwitz et al
\cite{KKPS14} which maintain $(1+\eps)$-approximation to arboricity
but has an additive error which can be absorbed when arboricity is
large. To overcome the issue that the update time of the data
structure in \cite{KKPS14} has a dependence on the arboricity,
\cite{SW20} uses a simple form of scaling by ``guessing/estimating''
the optimum density. However, the price to make this idea work is that
\cite{SW20} need to maintain $\Omega(\log m)$ copies of the data
structure from \cite{KKPS14}, one for each potential value of the
density (within a factor of $2$).  They need to dynamically adjust all
these data structures simultaneously, and need additional
modifications to overcome the running time dependency in \cite{KKPS14}
on the arboricity, which could translate to bad running times for the
copies of the data structure corresponding to scales smaller than the
optimum density. This is the reason for an additional logarithmic
factor in the space and some complexity in the implementation of the
data structure.

In this paper we build on previous ideas but take a somewhat different
approach. The algorithms for maintaining a low arboricity orientation
(here and in prior work) use certain invariants that try to
balance 
the in-degrees between adjacent vertices --- the high-level goal is to
orient each edge towards the lower degree vertex.  They flip (i.e.,
reorient) edges to maintain the invariants as edges are inserted and
deleted. However, flipping one edge can lead to flipping an adjacent
edge and eventually cause a cascading sequence of flips, so we require
a careful analysis to argue about the update time and the quality of
the orientation.  Our first idea is that one can maintain an
orientation of the graph such that maximum in-degree is at most
$(1+\eps)\arboricity(G) + O(\log n/\eps)$ in amortized
$O(\log n/\eps)$ update time. To obtain a worst-case guarantee we
alter the update algorithm in two important, but relatively simple (in
retrospect) ways. Note that, unlike the work in \cite{KKPS14}, the
update time does \emph{not} depend on the arboricity while providing
the same guarantee.  As far as we are aware, there was no previous
dynamic algorithms that maintains an orientation with maximum
in-degree arbitrarily close to arboricity in worst-case (or even
amortized) polylogarithmic update time in the large arboricity regime.
Once we have the above guarantee, we can use the idea of duplicating
edges (which can be done implicitly and does not add to the space) to
maintain a $(1-\eps)$-approximation for fractional orientations which
corresponds to density.

Our data structures maintain a local optimality invariant on the
orientation that differs in a simple but crucial way from that of
\cite{KKPS14}.  In a certain sense, we take a first-principles
approach to maintaining orientations with both additive and
multiplicative slack, given that density maintenance can absorb the
additive slack via edge duplication. We believe that this leads to a
clean and improved data structure. The transparency of the analysis
also allowed us to improve the worst-case update by exploiting the
different behavior of the basic data structure when the arboricity is
large and when it is small.

\subsection{Other related work}
The first polynomial-time algorithm for \dsg was via a reduction to
network flow \cite{goldberg,pq-82}; the decision problem of whether
$\rho(G) \ge \lambda$ can be solved via $s$-$t$ maxflow in an
auxiliary graph which has $|E|$ edges and $|E|$ vertices.  Combining
this binary search over $\lambda$ yields an algorithm to find the
optimum density.  This leads to a near-linear time algorithm via the
current fastest algorithm for $s$-$t$ maximum flow when the edge and
vertex weights are polynomially bounded \cite{maxflow22}.  One can
also derive a polynomial-time algorithm via reduction to submodular
function minimization via the following observation: for any graph
$G=(V,E)$, the set function $f:2^V \rightarrow \mathbb{Z}_+$ defined
by $f(S) = |E(S)|$ is supermodular.\footnote{A real-value set function
  $f:2^V \rightarrow \mathbb{R}$ is submodular iff
  $f(A) + f(B) \ge f(A \cup B) + f(A \cap B)$ for all
  $A, B \subseteq V$. A set function is supermodular iff $-f$ is
  submodular.} The supermodularity perspective allows one to handle
generalizations of \dsg and other problems --- we refer the reader to
\cite{fujishige09,veldt-dsg-p,ChekuriQT22}.  Charikar, in an
influential work \cite{Charikar00}, showed that a simple greedy
algorithm \cite{aitt-00} yields a $\frac12$-approximation for \dsg,
and he also described an LP relaxation that is exact for DSG. The dual
of this LP can be viewed as finding the minimum degree fractional
orientation of the given graph.  We discuss more details in
\refsection{prelim}. The LP relaxation has led to several fast
$(1-\eps)$-approximation algorithms via mathematical programming and
flow tehniques \cite{bgm-14,flowless,bsw-19,ChekuriQT22}.

As we remarked, algorithms for \dsg have been explored in the last few
years in streaming, mapreduce, parallel and distributed, and dynamic
settings, starting with the work of \cite{bkv-12}. In terms of dynamic
data structures, around the same time as \cite{BHNT15}, Epasto et al
\cite{AlessandroWWW} described a data structure that maintained a
$(1/2-\eps)$-approximation with insertions only in amortized
$O(\frac{1}{\eps^2} \log^2 n)$ time; they also generalized their
result to handle random deletions with slightly worse update time. Hu,
Wu and Chan \cite{HuWC17} were the first to consider dynamic densest
subhypergraph. Their results are parametrized by the rank $r$ of the
hypergraph. They maintained a $\frac{1-\eps}{r}$-approximation in the
insertions only case, and a $\frac{1-\eps}{r^2}$-approximation in the
fully dynamic case. Their update time is amortized
$\text{poly}(\frac{r}{\eps}\log n)$.  Recently Bera et
al. \cite{BeraBCG22}, building upon \cite{SW20}, showed that one can
maintain a $(1-\eps)$-approximation for densest subhypergraph in the
fully dynamic setting with a worst-case update time of
$\text{poly}(\frac{r}{\eps}\log n)$.  As we mentioned, they also
extended the algorithm to the weighted case, via randomization,
against oblivious adversaries.

Kannan and Vinay introduced a directed graph version of \dsg
\cite{kv-99}. Charikar \cite{Charikar00} showed that it can be solved
exactly via a reduction to polynomial number of \dsg instances with
vertex weights.  Improvements in the running time were made by Saha
and Khuller \cite{ks-09}. Sawlani and Wang \cite{SW20} showed that one
can obtain a $(1-\eps)$-approximation for directed \dsg via
$O(\log n/\eps)$ instances of \dsg with vertex weights. \cite{SW20}
claimed that their dynamic data structure for \dsg extended to
vertex-weighted graphs, and via their reduction, claimed a fully
dynamic $(1-\eps)$-approximate algorithm for directed \dsg. We
encountered some technical difficulties while trying to extend our
data structure to the vertex-weighted setting; we were also unable to
verify the correctness of the data structure in \cite{SW20} due to an
important missing technical detail. Bhattacharya et al. \cite{BHNT15}
describe a dynamic data structure for directed \dsg that maintains a
$(1/8-\eps)$-approximation in amortized polylogarithmic update time.
They also rely on a reduction to undirected graphs, but their
reduction is based on the one in \cite{ks-09} and loses a factor of
$2$ in the approximation unlike the one in \cite{SW20}.  In future
work we plan to address dynamic \dsg for vertex-weighted undirected
graphs and directed \dsg.

\paragraph{Organization:} In \refsection{prelim} we describe the connection between
fractional orientation and the exact LP relaxation for \dsg and discuss a
an approximate local optimal orientation that is crucial to our results.
In \refsection{amortized} we describe a simple amortized data structure. We build
upon it in \refsection{worst-case} to obtain a data structure with worst-case update time.
In \refsection{small-arboricity} we improve the worst-case update time by
data structures for the small arboricity and large arboricity regimes and exploiting the tradeoff.
We extend our data structure to hypergraphs in \refsection{extensions}.

\section{Preliminaries}
\labelsection{prelim} Let $G=(V,E)$ be an undirected multigraph.  Let
$A$ denote the set arcs obtained by bi-directing each edge. A
fractional orientation of $G$ is a function $y: A \rightarrow [0,1]$
such that for each undirected edge $e$ we have $y(a) + y(a') = 1$,
where $a$ and $a'$ are the bi-directed arcs corresponding to $e$.  $y$
is an integral orientation if $y(a) \in \{0,1\}$ for each arc.  We
call an integral orientation simply an orientation, and use $D$ to
denote the directed graph induced by the orientation. All quantities
discussed here will refer to the current state of the grant and
orientation as it evolves with edge insertions and deletions. For a
directed graph $D$ and a vertex $v$ we use $\delta^-(v)$ and
$\delta^+(v)$ to denote the set of incoming arcs into $v$, and the
outgoing arcs out of $v$ respectively.  We let $\inneighbors{v}$ to
denote the in-neighborhood of $v$ \wrt $D$; i.e., the endpoints $w$ of
arcs $(w,v) \in \incut{v}$ directed into $v$. Similarly, we let
$\outneighbors{v}$ to denote the out-neighborhood of $v$ \wrt $D$;
i.e., the endpoints $w$ of arcs $(v,w) \in \outcut{v}$ directed out of
$v$.

\subsection{Min-max orientations and an LP relaxation for \dsg}

We consider the LP formulation for \dsg from \cite{Charikar00}.
Recall that the objective is to find a set $S \subseteq V$ to maximize
the quantity $|E(S)|/|S|$. A natural way to write this is via
indicator variables $x_v, v \in V$ for inclusion in the optimum set
$S$.  An edge $e$ can be taken only if both $u$ and $v$ are in the set
$S$. Hence one can express the objective as
$\max \frac{\sum_{e = \edge{u}{v} \in E} \min\{x_u,x_v\}}{\sum_v x_v}$
with $x \in \{0,1\}^V$.\footnote{Here a pair $\edge{u}{v}$ will repeat
  in the sum according to the multiplicity of the edge.} To express
this via an LP relaxation, one can normalize the denominator with the
constraint $\sum_v x_v = 1$, and rewrite the convex objective
$\sum_{e = \edge{u}{v} \in E} \min{x_u,x_v}$ as a linear objective via
additional variables (which we omit). The LP and its dual are
described in \reffigure{dsglp}. The dual LP describes a fractional
orientation of $E$ to minimize the maximum in-degree of any
vertex. Here, for an edge $e$ and endpoint $v$, $y(e,v) \geq 0$
represents the fractional amount of $e$ directed towards $v$.  We let
$\lpopt$ denote the common optimum value of the above linear programs.
Charikar showed that the LP is an exact relaxation for \dsg and hence
$\lpopt = \rho(G)$.

\begin{figure}[t]
  \begin{framed}
    \begin{minipage}[t]{0.35\textwidth}
      \begin{align*}
        \text{maximize } & \sum_{e = \setof{u,v} \in E} \min{x_u, x_v} \\
        \text{over } & x_v \geq 0 \text{ for } v \in V \\
        \text{ s.t.\ } &
                         \sum_{v \in V} x_v \leq 1.
      \end{align*}
    \end{minipage}
    \hfill
    \vline
    \hfill
    \begin{minipage}[t]{0.5\textwidth}
      \begin{align*}
        \text{minimize }              %
        &                             %
          \max_v \sum_{e \in \delta(v)} y(e,v)
        \\
        \text{ over }                 %
        & y(e,v) \geq 0 \text{ for } e \in E \text{ and } v
          \in e
        \\
        \text{ s.t.\ }                %
        &                             %
          y(e,u) + y(e,v) \geq 1 \text{ for all } e = \setof{u,v}.
      \end{align*}
    \end{minipage}
  \end{framed}
  \caption{LP for \dsg and its dual.}
  \labelfigure{dsglp}
\end{figure}

Given a fractional orientation $y$, and a vertex $v$, we let
$\bary{v} \defeq \sum_{e \in \cut{v}} \y{e,v}$ denote the weighted
in-degree of $v$. The dual LP wants to minimize $\max_v \bary{v}$.

\subsection{Locally optimal and approximate locally optimal orientations}
\labelsection{local-analysis}\labelsection{local-optimality} A useful
and important idea for arboricity and density maintenance is the idea
of an (approximately) locally optimal orientation that has been
explored in previous work. We discuss some basics for the sake of
completeness before stating a specific approximate variant that we
work with.  Consider the following local constraint for $y$.
\begin{quote}
  \itshape For each edge $e = \setof{u,v}$, if $y(e,v) > 0$ then
  $\bary{v} \leq \bary{u}$.
\end{quote}
The idea is that if $y$ wants to minimize the maximum in-degree, then
it should never fractionally direct an edge towards an endpoint with
(strictly) larger in-degree. We say that $y$ is \defterm{locally
  optimal} when it satisfies the condition above for all arcs.
The following lemma can be shown.
\begin{lemma}
  There is an optimum solution to the dual LP satisfying the local
  optimality condition. Conversely, if $y$ satisfies the local
  optimality condition above, then $y$ is optimal.
\end{lemma}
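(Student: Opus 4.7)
The plan is to prove the two directions separately. For the existence of a locally optimal optimum, I will take an optimum solution that additionally minimizes a strictly convex secondary potential such as $\Phi(y) \defeq \sum_v \bary{v}^2$, and show via a local perturbation argument that any such minimizer must be locally optimal. For the converse, I will exhibit a primal solution of matching value, supported on the argmax-level set $S = \setof{v \where \bary{v} = \max_w \bary{w}}$, and invoke weak LP duality together with the Charikar exactness result cited above.

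For the forward direction, the set $O$ of dual-optimal fractional orientations is nonempty and compact, so $\Phi$ attains its minimum on $O$ at some $y^*$. Suppose for contradiction that $y^*$ violates local optimality at some edge $e = \setof{u,v}$: $y^*(e,v) > 0$ but $\bary{v}^* > \bary{u}^*$. The plan is to shift $\epsilon$ mass from $y^*(e,v)$ to $y^*(e,u)$, leaving $y^*$ unchanged elsewhere. For $\epsilon > 0$ small enough, the resulting $y'$ remains a fractional orientation (the equality $y'(e,u)+y'(e,v)=1$ and $y'(e,v)\geq 0$ are preserved), and since $\bary{u}^* < \bary{v}^* \leq \max_w \bary{w}^*$, we have $\bary{u}^* + \epsilon \leq \max_w \bary{w}^*$, so $y'$ remains dual optimal. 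A direct expansion then gives $\Phi(y') - \Phi(y^*) = -2\epsilon(\bary{v}^* - \bary{u}^*) + 2\epsilon^2 < 0$ for small $\epsilon$, contradicting the minimality of $\Phi$ at $y^*$.

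For the converse, let $\lambda \defeq \max_v \bary{v}$ and $S \defeq \setof{v \where \bary{v} = \lambda}$. The key step is the following claim: whenever $e = \setof{u,v}$ has $u \in S$ and $v \notin S$, local optimality forces $y(e,u) = 0$, since otherwise one would get $\lambda = \bary{u} \leq \bary{v} < \lambda$. Combined with $y(e,u) + y(e,v) = 1$ from the definition of a fractional orientation, this gives
\begin{equation*}
  \lambda |S| \;=\; \sum_{v \in S} \bary{v} \;=\; \sum_{e \in E}\, \sum_{w \in e \cap S} y(e,w) \;=\; \sum_{e \in E(S)} \bigl(y(e,u) + y(e,v)\bigr) \;=\; |E(S)|,
\end{equation*}
where edges with exactly one endpoint in $S$ contribute $0$ by the claim. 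Hence $\rho_G(S) = \lambda$, so $\lpopt = \rho(G) \geq \rho_G(S) = \lambda$ by Charikar's exactness. On the other hand, $\lpopt \leq \lambda$ by weak duality since $y$ is dual feasible with objective $\lambda$. Therefore $\lambda = \lpopt$ and $y$ is dual optimal.

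The main obstacle I anticipate is the forward direction: a naive perturbation starting from an arbitrary dual optimum can get stuck, because reshuffling mass along one edge need not strictly decrease $\max_v \bary{v}$ and can preserve the objective while preserving the local-optimality violation. Using a strictly convex secondary potential $\Phi$ as a tie-breaker resolves this cleanly by forcing strict improvement in $\Phi$ whenever local optimality is violated, which rules out a minimizer of $\Phi$ on the optimum face from being anything but locally optimal.
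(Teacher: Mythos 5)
Your proof is correct, and it does more than the paper does: the paper's own argument (left as a commented-out sketch) covers only the existence direction, by taking an optimum that minimizes the total violation $\sum_{(u,v):\, y(u,v)>0} \max\{\bary{v}-\bary{u},0\}$ over the compact optimal set and noting that decreasing a violating $y(u,v)$ improves that quantity; the converse is asserted without proof. Your forward direction follows the same template (secondary objective on the optimal face, plus a local mass-shift) but with the potential $\sum_v \bary{v}^2$, which is arguably cleaner: shifting $\epsilon$ from $y(e,v)$ to $y(e,u)$ changes only two loads and gives the clean strict decrease $-2\epsilon(\bary{v}-\bary{u})+2\epsilon^2<0$, whereas with the paper's violation-sum one has to argue that the perturbation does not increase violation terms at other arcs incident to $u$ and $v$, a point the paper's sketch glosses over. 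Your converse is the expected argument and is sound: local optimality forces crossing edges at the top level set $S=\{v: \bary{v}=\lambda\}$ to carry no mass into $S$, the orientation equality $y(e,u)+y(e,v)=1$ then yields $\lambda|S|=|E(S)|$, and weak duality plus $\lpopt=\rho(G)$ closes the loop (this also shows $S$ itself is a densest subgraph, consistent with the paper's later remark that an approximate densest subgraph is a prefix in decreasing in-degree order). Two small caveats worth stating explicitly: the converse genuinely needs $y$ to be a fractional orientation (with equality per edge) rather than an arbitrary dual-feasible point, since an oversaturated $y$ can be locally optimal yet suboptimal, and your existence step should note that the set of \emph{optimal fractional orientations} is nonempty (reduce any oversaturated dual optimum to equality without increasing loads) before invoking compactness; also, citing Charikar's exactness can be avoided by plugging in the uniform primal solution $x_v=1/|S|$ on $S$, as your plan suggested.
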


We now consider approximations of the local optimality condition.  For
$\alpha, \beta \geq 0$, we say that $y$ is
\defterm{$(\alpha, \beta)$-locally optimal}, or a \defterm{local
  $(\alpha,\beta)$-approximation}, if for all edges $e = \setof{u,v}$,
\begin{quote}
  \itshape
  If $y(e,v) > 0$, then $\bary{v} < \alphamore \bary{u} + \beta$.
\end{quote}
Intuitively, the condition states that $y$ should never fractionally
an edge towards an endpoint with substantially larger in-degree.

This local optimality condition extends ideas from
\cite{KKPS14,SawlaniWang}. In terms of the definition above,
\cite{SawlaniWang} considered $(0,\eps \lambda)$ where $\lambda$ is a
constant factor estimate of the optimum density, which is inspired by
\cite{KKPS14} who consider $(0,c)$ for some fixed constant $c$. A
simple but key idea in our work is to introduce the multiplicative
dimension given by $\alpha > 0$. Following similar ideas in previous
work we show that approximate local optimality implies approximate
global optimality.

\begin{lemma}
  \labellemma{apx-local=>global}\labellemma{local-optimality} Let
  $\alpha, \beta > 0$ with $\alpha < c / \log n$ for a sufficiently
  small constant $c > 0$.  Let $\mu = \max_v \bary{v}$.  Let
  $k = \roundup{\log[1+\eps]{n}}$, and suppose that every arc $(u,v)$
  is $(\alpha, \beta)$-locally optimal.  Then
  \begin{align*}
    \mu \leq e^{\bigO{\sqrt{\alpha \log{n}}}} \parof{\lpopt +
    \bigO{\sqrt{\frac{\log{n}}{\alpha}}} \beta}
  \end{align*}
  In particular, given $\eps \in (0,1)$, for $\beta = \bigO{1}$ and
  $\alpha = c \eps^2 / \log{n}$ for a sufficiently small constant $c$,
  then the claim is as follows: if every arc $(u,v)$ is
  $(c \eps^2/\log n, \bigO{1})$-locally optimal, then
  $\mu \leq \epsmore \lpopt + \bigO{\ln{n} / \eps}$.
\end{lemma}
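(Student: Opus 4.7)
My plan is a level-set argument: if $\mu$ substantially exceeds $\lpopt$, then a nested family of super-level sets of $\bary{\cdot}$ grows geometrically in size, eventually exceeding $n$. Pick $v^*$ with $\bary{v^*} = \mu$, and define thresholds $\mu_0 = \mu$, $\mu_{i+1} = (\mu_i - \beta)/(1+\alpha)$ (which solves explicitly to $\mu_i = (\mu + \beta/\alpha)(1+\alpha)^{-i} - \beta/\alpha$), along with level sets $S_i = \{v \in V : \bary{v} \geq \mu_i\}$. Then $\{v^*\} \subseteq S_0 \subseteq S_1 \subseteq \cdots$, so $|S_0| \geq 1$.

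The key structural observation uses $(\alpha, \beta)$-local optimality. If $v \in S_i$ and $y(e, v) > 0$ for $e = \{u,v\}$, then $\bary{u} > (\bary{v} - \beta)/(1+\alpha) \geq (\mu_i - \beta)/(1+\alpha) = \mu_{i+1}$, so $u \in S_{i+1}$. Hence every fractionally-incoming edge at a vertex of $S_i$ has its other endpoint in $S_{i+1}$, giving
\begin{equation*}
  \mu_i \, |S_i| \;\leq\; \sum_{v \in S_i} \bary{v} \;\leq\; |E(S_{i+1})|.
\end{equation*}
Since $\lpopt = \max_S \rho_G(S) \geq |E(S_{i+1})|/|S_{i+1}|$, we get $|S_{i+1}| \geq (\mu_i/\lpopt)\,|S_i|$. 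Telescoping and using $|S_0| \geq 1$ and $|S_k| \leq n$ yields $\sum_{i=0}^{k-1} \ln(\mu_i/\lpopt) \leq \ln n$.

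Next, choose $k = \lceil \sqrt{(\log n)/\alpha}\,\rceil$; since $\alpha < c/\log n$, one has $\alpha k = O(\sqrt{\alpha\log n}) = O(1)$ and $(1+\alpha)^k = O(1)$. Lower-bounding $\mu_i \geq \mu/(1+\alpha)^i - \beta/\alpha$ (valid throughout the range after a case split that handles $\mu \lesssim \beta/\alpha$ separately, where the claimed bound is immediate) and expanding $\ln(\mu_i/\lpopt) \gtrsim \ln((\mu + \beta/\alpha)/\lpopt) - i\ln(1+\alpha) - O((\beta/\alpha)(1+\alpha)^i/(\mu + \beta/\alpha))$, one obtains $k\ln((\mu+\beta/\alpha)/\lpopt) \leq \ln n + O(\alpha k^2) + O(\beta k/(\alpha(\mu+\beta/\alpha)))$. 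Dividing by $k$ balances $\ln n/k$ against $\alpha k$, each equal to $\Theta(\sqrt{\alpha\log n})$; exponentiating produces the general bound $\mu \leq e^{O(\sqrt{\alpha\log n})}(\lpopt + O(\sqrt{\log n/\alpha})\,\beta)$. Specializing $\alpha = c\eps^2/\log n$ and $\beta = O(1)$ gives $\sqrt{\alpha\log n} = O(\eps)$ and $\sqrt{\log n/\alpha} = O((\log n)/\eps)$, and for $c$ small enough $e^{O(\sqrt{\alpha\log n})} \leq 1+\eps$, yielding $\mu \leq (1+\eps)\lpopt + O((\ln n)/\eps)$.

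The main obstacle is the bookkeeping in the final estimate: the additive offset $\beta/\alpha$ in $\mu_i$ interacts nontrivially with the logarithm, and absorbing it into the $O(\sqrt{\log n/\alpha})\,\beta$ term rather than the weaker $O(\beta/\alpha)$ requires a careful combined estimate or case split on the relative sizes of $\mu$, $\beta/\alpha$, and $\lpopt$. One must also verify that $\mu_i$ stays positive throughout $0 \leq i < k$, which is where the hypothesis $\alpha\log n < c$ is used to keep the geometric factor $(1+\alpha)^k$ bounded.
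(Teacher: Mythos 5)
Your structural core is sound and is essentially the paper's argument in a different packaging: you use the same level sets driven by $(\alpha,\beta)$-local optimality and the same comparison of a level set's density to $\lpopt$; where the paper pigeonholes a single level with $|S_{i+1}|\le(1+\eps)|S_i|$ among $O(\log n/\eps)$ levels and lower-bounds the density of that one induced subgraph, you telescope the growth inequality $|S_{i+1}|\ge(\mu_i/\lpopt)|S_i|$ over $k\approx\sqrt{\log n/\alpha}$ levels. These two finishes are interchangeable, and your key inequality $\mu_i\,|S_i|\le\sum_{v\in S_i}\bar{y}(v)\le|E(S_{i+1})|$ is correct.

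The gap is in the final bookkeeping, exactly where you flagged the ``main obstacle,'' and the resolution you propose does not work as stated. In the lemma's regime $\alpha<c/\log n$ one has $\beta/\alpha>\sqrt{\log n/\alpha}\,\beta$ by a factor $1/\sqrt{\alpha\log n}$ (which is $\Theta(1/\eps)$ when $\alpha=c\eps^2/\log n$), so the case $\mu\lesssim\beta/\alpha$ is \emph{not} immediate: it only yields additive error $O(\beta/\alpha)=O(\beta\log n/\eps^2)$, not the claimed $O(\sqrt{\log n/\alpha})\,\beta=O(\beta\log n/\eps)$. Similarly, your inequality $k\ln((\mu+\beta/\alpha)/\lpopt)\le\ln n+O(\alpha k^2)+O(\beta k/(\alpha(\mu+\beta/\alpha)))$ gives a $(1+O(\eps))$ multiplicative factor only when $\mu+\beta/\alpha\gtrsim\beta/(\alpha\eps)$; splitting there leaves additive error $O(\beta\log n/\eps^3)$. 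The repair stays inside your framework: avoid the crude $-\beta/\alpha$ offset. Since $\alpha k=O(\sqrt{\alpha\log n})=O(1)$, inverting your recursion exactly gives $\mu=(1+\alpha)^{k-1}\mu_{k-1}+(\beta/\alpha)\bigl((1+\alpha)^{k-1}-1\bigr)\le e^{O(\alpha k)}\mu_{k-1}+O(k\beta)$, i.e.\ the total additive drift over $k$ levels is only $O(k\beta)$, not $\beta/\alpha$. Then case-split on $\mu_{k-1}$ rather than on $\mu$: if $\mu_{k-1}\le\max(\lpopt,0)$ you are done directly; otherwise every factor in the telescoped product is at least $\mu_{k-1}/\lpopt\ge1$, so $\mu_{k-1}\le n^{1/k}\lpopt$, and combining yields $\mu\le e^{O(\alpha k+(\ln n)/k)}\bigl(\lpopt+O(k\beta)\bigr)$, which is the stated bound for $k=\Theta(\sqrt{\log n/\alpha})$. (The paper sidesteps this issue by tracking drift $i\beta$ at level $i\le O(\log n/\eps)$ instead of $\beta/\alpha$.)
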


\begin{proof}
  We define an increasing sequence $\mu_0 < \mu_1 < \cdots$ where
  \begin{math}
    \mu_0 = 0
  \end{math}
  and
  \begin{math}
    \mu_i = \alphamore \mu_{i-1} + \beta
  \end{math}
  for $i \geq 1$.  Let $k \in \nnintegers$ be the unique index such
  that
  \begin{math}
    \mu_{k-1} \leq \mu < \mu_{k}.
  \end{math}
  Observe that
  \begin{align*}
    \mu_{k-i} \geq \frac{\mu}{\alphamore^i} - i \beta
    \geq e^{-\alpha i} \mu - i \beta
  \end{align*}
  for each $i \in \setof{1,\dots,k}$.

  For each index $i \in \setof{1,\dots,k}$, let
  $S_i = \setof{v \where \indegree{v} \geq \mu_{k-i}}$.  Note that
  $S_i$ is nonempty for all $i \leq k$.  Additionally, by
  $(\alpha,\beta)$-local optimality, we have
  \begin{math}
    \inneighbors{S_i} \subseteq S_{i+1}
  \end{math}
  for each $i$.

  Now, let $\eps > 0$ be a sufficiently small parameter to be chosen
  later. Since $S_1$ is non-empty and $\sizeof{S_i} \leq n$ for all
  $i$, there must be an index $i \leq \bigO{\log{n} / \eps}$ such that
  $ \sizeof{S_{i+1}} \leq \epsmore \sizeof{S_{i}}$.

  Consider the subgraph induced by $S_{i+1}$.  Every vertex in $S_{i}
  \subseteq S_{i+1}$
  is the head of at least
  \begin{align*}
    e^{- \alpha i} \mu - i \beta \geq e^{- \bigO{\alpha \log{n}
    / \eps}} \mu - \bigO{\frac{\log{n}}{\eps}} \beta
  \end{align*}
  fractional edges in the orientation, and the underlying undirected
  edges are all contained in the subgraph induced by
  $S_{i+1}$. Therefore we have
  \begin{align*}
    \lpopt &\geq
    \frac{\sizeof{E(S_{i+1})}}{\sizeof{S_{i+1}}}
    \geq
    \parof{e^{- \bigO{\alpha \log{n}
    / \eps}} \mu - \bigO{\frac{\log{n}}{\eps}} \beta} %
             \frac{\sizeof{S_{i}}}{\sizeof{S_{i+1}}}           %
             \\
           &\geq                       %
             \frac{1}{1+\eps}
             \parof{e^{-\bigO{\alpha \log{n} / \eps}} \mu -
             \bigO{\frac{\log{n}}{\eps}} \beta}.
  \end{align*}
  Rearranging,
  \begin{align*}
    \mu
    &\leq e^{\bigO{\alpha \log{n} / \eps}}
      \parof{\parof{1 + \eps}\lpopt
      + \bigO{\frac{\log{n}}{\eps}} \beta}
      \leq                        %
      e^{\bigO{\alpha \log{n} / \eps} + \eps} \parof{\lpopt +
      \bigO{\frac{\log{n}}{\eps}} \beta}.
  \end{align*}
  For $\eps= \bigO{\sqrt{\alpha \log{n}}}$, we have
  \begin{align*}
    \mu \leq e^{\bigO{\sqrt{\alpha \log{n}}}} \parof{\lpopt +
    \bigO{\sqrt{\frac{\log{n}}{\alpha}}} \beta},
  \end{align*}
  as desired.
\end{proof}

The proof also shows that it is easy to extract an approximate densest
subgraph from $y$; it is always a prefix of the list of vertices in
descending order of in-degree.  Moreover, identifying a prefix reduces
to keeping track of the cardinalities of the sets $S_i$ as defined in
the proof, and identifying an index $i$ such that
$\sizeof{S_{i+1}} \leq \epsmore \sizeof{S_i}$.

The data structures described in the rest of this article maintain
integral orientations of undirected and unweighted graphs. It will be
clear from their description that it is easy to maintain a list of the
vertices in decreasing order of in-degrees, as well as the
cardinalities of the sets $S_i$ so that we always know which prefix of
the list induces an approximate densest subgraph. We refer to this as
an \emph{implicit} representation of the approximate densest subgraph;
in particular, we can list off the vertices in an approximate densest
subgraph in $\bigO{1}$ time per vertex.  Later on, to simplify the
presentation of these data structures, we will focus on the aspects
maintaining the orientation is nearly optimal maximum in-degree rather
than on aspects of maintaining an approximate densest subgraph. We
assume that an implicit list representation of an approximate densest
subgraph, as described above, is maintained in the background with no
significant
overhead.

\section{Data structure with amortized update time guarantee}
\labelsection{amortized-unweighted}
\labelsection{amortized}

In this section we describe a simple fully dynamic data structure for
maintaining $(1-\eps)$-approximate densest subgraph that has
polylogarithmic \emph{amortized} update time.  It is based on
maintaining an approximate orientation of the graph. Here we recall
that $\lpopt$ refers the optimum fractional orientation and
$\arboricity{G}$ refers the optimum integral orientation. We have
$\lpopt \leq \arboricity{G} \leq \lpopt + 1$. Our data structures work
with integral orientations, and they have additive error when
comparing with $\lpopt$ or $\arboricity{G}$. We address the additive
error after the following theorem.


\begin{theorem}
  \labeltheorem{amortized-densest-subgraph} Consider the task of
  approximating the densest subgraph in an unweighted graph
  dynamically updated by edge insertions and deletions. Let $\eps > 0$
  be given.  Then one can maintain an orientation (explicitly) with
  maximum in-degree at most
  \begin{math}
    \epsmore \lpopt + \bigO{\log{n} / \eps},
  \end{math}
  and a subgraph (implicitly) with density at least
  \begin{math}
    \epsless \lpopt - \bigO{\log{n} / \eps},
  \end{math}
  in $\bigO{\log{n} / \eps^2}$ amortized time per update.  The data
  structure uses $\bigO{m+n}$ space.
\end{theorem}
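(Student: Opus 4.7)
My plan is to reduce the theorem, via \reflemma{local-optimality}, to maintaining an integral orientation that is $(\alpha, \beta)$-locally optimal for $\alpha = c \eps^2/\log n$ with $c$ a sufficiently small absolute constant and $\beta$ a small constant such as $2$. By that lemma, the maximum in-degree is then at most $(1+\eps)\lpopt + O(\log n/\eps)$, and the remark after its proof yields the implicit approximate densest subgraph as a prefix of the in-degree-sorted vertex list with $O(1)$ overhead per change. It therefore suffices to maintain $(\alpha, \beta)$-local optimality dynamically in $O(1/\alpha) = O(\log n/\eps^2)$ amortized time per edge operation.

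The update algorithm is a cascade of single-arc flips. On insertion of $\setof{u,v}$, orient the new arc toward the endpoint $w$ of smaller current in-degree, so that $\bary{w}$ grows by one; this may create a violation on some in-arc $(z, w)$, namely $\bary{w} \geq (1+\alpha)\bary{z} + \beta$. If so, flip that arc to $(w, z)$, which restores $\bary{w}$ and bumps $\bary{z}$ by one; the bump at $z$ may similarly trigger a violation on an in-arc of $z$, and we propagate until no violation remains. Deletion is handled dually through out-arcs: removing an in-arc of $v$ decreases $\bary{v}$, which may make some out-arc $(v, z)$ violate; flipping it to $(z, v)$ restores $\bary{v}$ and decreases $\bary{z}$, which is equivalent to ``moving'' the deletion to $z$, so the cascade continues through out-arcs analogously.

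The analysis amortizes the flips via a potential $\Phi = \sum_v g(\bary{v})$ for a convex $g$, chosen so that (a) every flip decreases $\Phi$ by at least a unit, using the gap $\bary{w} - \bary{z} \geq \alpha \bary{z} + \beta$ at the violating arc, and (b) every edge operation raises $\Phi$ by at most $O(1/\alpha)$. Together with $O(1)$ real work per flip, these give the claimed amortized $O(1/\alpha) = O(\log n/\eps^2)$ bound. Property (a) is immediate for $g(x) = x^2/2$: each flip decreases $\Phi$ by $\bary{w} - \bary{z} - 1 \geq \alpha \bary{z} + \beta - 1 \geq 1$ when $\beta \geq 2$. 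Establishing (b) uniformly in the in-degrees is the main obstacle I anticipate, since naive quadratic or exponential $g$ gives an insertion charge scaling with the current maximum in-degree, which is not a priori bounded by $1/\alpha$. I expect the resolution to combine the quadratic term with a level-based contribution such as $\lfloor \log_{1+\alpha} \bary{v} \rfloor$, so that the geometric spacing of level boundaries caps the insertion charge at $O(1/\alpha)$ even at vertices with large in-degree.

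Per-vertex data structures consist of two bucket lists of incident arcs keyed by the other endpoint's in-degree (one for in-arcs and one for out-arcs), so that a violating arc can be located and flipped in $O(1)$ per cascade step. A global bucket list of vertices by in-degree maintains the descending-order representation needed for the densest-subgraph prefix. The total space used is $O(m+n)$.
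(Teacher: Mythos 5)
Your high-level plan---reduce to maintaining an $(\alpha,\beta)$-locally optimal integral orientation and then invoke \reflemma{local-optimality}---is the same as the paper's, and your flip-and-cascade rule is morally the same. But the proposal has a genuine gap in the one place where the paper's construction is doing real work.

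The gap is in the claim that per-vertex ``bucket lists of incident arcs keyed by the other endpoint's in-degree'' allow a violating arc to be located and flipped in $O(1)$ per cascade step. Keying $v$'s in-arcs by $\indegree{u}$ for $(u,v)\in\incut{v}$ cannot be maintained cheaply: whenever $\indegree{u}$ changes by $1$ (which happens on every insertion, deletion, and flip), the arc $(u,v)$ must move to a different bucket in $v$'s list, for \emph{every} out-neighbor $v$ of $u$. A single degree change at $u$ therefore costs $\Theta(\outdegree{u})$ bucket updates, which can be $\Theta(n)$; there is no amortization that makes this $O(1/\alpha)$. Relatedly, without some stored record of ``what things looked like when this arc was last examined,'' there is no way to bound how often an arc needs to be re-examined: the local optimality condition for $(u,v)$ depends on $\indegree{u}$, which can drift arbitrarily while you are busy elsewhere, and nothing in your scheme generates a signal that $(u,v)$ has become stale.

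The paper resolves exactly this with the arc labels $\lab{u}{a}$ and $\lab{v}{a}$: they are snapshots of the endpoint in-degrees that change \emph{only when the arc itself is processed}. The per-vertex lists are bucketed by these labels rather than by current in-degrees, so each label update is a single $O(1)$ list operation. The condition checked in \algo{check-inc}$(v)$ is a comparison of $\indegree{v}$ against $\lab{v}{a}$, both local to $v$, so arcs can be scanned in label order without any cross-vertex bookkeeping. The labels also drive the paper's amortized analysis: credits are sprinkled uniformly on $\incut{x}$ each time $\indegree{x}$ changes, and an arc is processed only after $\indegree{v}$ has drifted a $(1+\Omega(\alpha))$-factor from $\lab{v}{a}$, so it has collected a full credit. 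Your proposed potential $\Phi=\sum_v g(\indegree{v})$ is a different route; you already flag that choosing $g$ so insertion raises $\Phi$ by only $O(1/\alpha)$ is unresolved, and it is unclear this can be made to work without something label-like tracking the staleness of individual arcs. In short: the missing idea is the label mechanism, which is needed both to make the per-arc work $O(1)$ and to make the amortization go through.
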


\Cref{theorem:amortized-densest-subgraph} gives a
$\epsless$-approximation in $\bigO{\log{n} / \eps^2}$ amortized time
when the density is at least $\bigOmega{\log{n} / \eps^2}$. This
regime may already be of interest for many applications of densest
subgraph. To obtain an unconditional $\epsless$-approximation, one may
simply duplicate each edge $\bigO{\log{n} / \eps^2}$ times (as done by
\cite{SawlaniWang}), which ensures the density is sufficiently large,
while increasing the time for each edge insertion and deletion
multiplicatively by the same factor. We note that to maintain linear
space usage, one needs to make minor modifications so that the copies
of an edge use the same auxiliary data. We address these changes at
the end of this section (in \refsection{amortized-duplicate-space}).

\begin{corollary}
  \labelcorollary{amortized-densest-subgraph} Under the same
  conditions of \reftheorem{amortized-densest-subgraph}, one can
  maintain a subgraph with density at least $\epsless \lpopt$ in
  $\bigO{\log{n}^2 / \eps^4}$ amortized time per update and $\bigO{m+n}$
  space.
\end{corollary}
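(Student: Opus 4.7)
The plan is to invoke \reftheorem{amortized-densest-subgraph} as a black box on an implicitly duplicated copy of the input graph, as hinted by the paragraph preceding the corollary. Fix a constant $C$ large enough that the additive slack below can be absorbed, and let $k = \lceil C \log n / \eps^2 \rceil$. Let $G'$ be the multigraph obtained from the current graph $G$ by replacing each undirected edge with $k$ parallel copies. Then $\lpopt(G') = k \cdot \lpopt(G)$, and any subgraph $S$ achieves $|E_{G'}(S)|/|S| = k\cdot |E_G(S)|/|S|$, so subgraphs of $G'$ and $G$ stand in a one-to-one density-preserving correspondence (up to the factor $k$).

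Next I would run the data structure of \reftheorem{amortized-densest-subgraph} on $G'$ with parameter $\eps' = \eps/2$. Each edge insertion or deletion in $G$ triggers $k$ insertions or deletions in $G'$, and each of those costs amortized $\bigO{\log n / \eps'^2} = \bigO{\log n / \eps^2}$ time by the theorem, giving a total amortized update cost of $k \cdot \bigO{\log n / \eps^2} = \bigO{\log^2 n / \eps^4}$ per update in $G$, as claimed.

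For the approximation guarantee, the theorem maintains (implicitly) a subgraph $S$ of $G'$ with
\begin{math}
|E_{G'}(S)|/|S| \geq (1-\eps/2)\lpopt(G') - \bigO{\log n / \eps} = k(1-\eps/2)\lpopt(G) - \bigO{\log n / \eps}.
\end{math}
Dividing by $k$, the same vertex set $S$ has density in $G$ at least
\begin{math}
(1-\eps/2)\lpopt(G) - \bigO{\log n / (\eps k)} \geq (1-\eps/2)\lpopt(G) - \bigO{\eps}.
\end{math}
Assuming $G$ has at least one edge, $\lpopt(G) \geq 1/2$, so for $C$ chosen large enough the additive $\bigO{\eps}$ term is bounded by $(\eps/2)\lpopt(G)$, and we conclude density at least $(1-\eps)\lpopt(G)$. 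The edgeless case is trivial.

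The main obstacle is the space: naively $G'$ has $km$ edges, which would blow up the space by a $\bigO{\log n / \eps^2}$ factor. The fix, which is the purpose of \refsection{amortized-duplicate-space}, is to represent the $k$ copies of each edge \emph{implicitly}, so that a single edge record of $G$ carries the auxiliary data shared by all its copies and the data structure operates on copies on demand. Thus the stored data stays $\bigO{m+n}$, and the cost of the $k$ simulated operations per update is already paid for in the running time analysis above. Combined with the implicit representation of the approximate densest subgraph from \reftheorem{amortized-densest-subgraph}, this yields the desired bounds.
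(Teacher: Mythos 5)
Your proposal is correct and follows essentially the same route as the paper: duplicate each edge $\bigO{\log n/\eps^2}$ times (run with a slightly smaller $\eps$) so the scaled additive $\bigO{\log n/\eps}$ error is absorbed into the multiplicative factor, pay the duplication factor in the amortized update time, and keep the space linear via the shared-label/counter simulation of edge copies from \refsection{amortized-duplicate-space}.
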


\subsection{High-level overview}

Let $\alpha = c \eps^2 / \ln{n}$, for a sufficiently small constant
$c$. We present a data structure that tries to maintain a
$\parof{1 + \bigO{\alpha}, \bigO{1}}$-locally optimal orientation as
edges are inserted and deleted. At a high-level, it automatically
flips an arc whenever it detects that local optimality for that arc is
no longer satisfied. In designing such a data structure there are two
high-level concerns. The first is to develop an organizing system to
efficiently detect arcs that violate the inequality. The second is to
control or account for the running time spent on ``cascades'', where
flipping one arc leads to violating local optimality for other
adjacent arcs, hence further arc flips.

\paragraph{Arc labels.}
The main ingredient in the data structure, and the only auxiliary data
stored in the data structure beyond the orientation itself, is a set
of integer \emph{endpoint labels} $\lab{u}{a}$ and $\lab{v}{a}$ for
each arc $a = (u,v)$. The labels play a key role both in maintaining
the local optimality conditions, and as a foothold for an amortized
analysis that can account for cascades of flips.

When an arc $a = (u,v)$ is added to the orientation, we record the
values of $\indegree{u}$ and $\indegree{v}$ (just after adding the arc
$a$) as $\lab{u}{a}$ and $\lab{v}{a}$, respectively. Periodically the
data structure resets $\lab{u}{a}$ and $\lab{v}{a}$ to the current
values of $\indegree{u}$ and $\indegree{v}$.

As mentioned above, the first role of the labels is to help maintain
and certify local optimality.  As we will show in
\refsection{amortized-optimality}, the data structure maintains
$\lab{u}{a}$ and $\lab{v}{a}$ such that
\begin{align*}
  \lab{u}{a} \leq \alphamore \indegree{u} + \bigO{1}
  \andcomma
  \indegree{v} \leq \alphamore \lab{v}{a} + \bigO{1}
  \text{, and }
  \lab{u}{a} \leq \lab{v}{a} + 1.
  \labelthisequation{amortized-label-invariants}
\end{align*}
Combining these inequalities implies
$\parof{1 + \bigO{\alpha}, \bigO{1}}$-local optimality.

The second role is to help amortize the time spent processing the
arcs, particularly in the presence of cascades. At a high-level, the
data structure only does work on an arc $a = (u,v)$ if $\indegree{u}$
is much smaller than $\lab{u}{a}$ or $\indegree{v}$ is much larger
than $\lab{v}{a}$. Meanwhile $\lab{u}{a}$ and $\lab{v}{a}$ reflect the
values of $\indegree{u}$ and $\indegree{v}$ at an earlier amount of
time. Thus we only process an arc $a$ after $\indegree{u}$ or
$\indegree{v}$ has deviated substantially from the point when the
labels for $a$ were set. These observation translates to an amortized
running time via a charging scheme described in
\refsection{amortized-running-time}.

\begin{figure}[t]
  \begin{framed}
    \small
    \begin{algorithm}{insert}{$e = \setof{u,v}$}
    \item We assume $\indegree{v} \leq \indegree{u}$. (Otherwise swap
      $u$ and $v$.)
    \item Add $a = (u,v)$ to the orientation, set
      $\lab{v}{a} = \indegree{v}$, set $\lab{u}{a} = \indegree{u}$ and
      call \algo{check-inc($v$)}.
    \end{algorithm}

    \smallskip

    \begin{algorithm}{delete}{$e = \setof{u,v}$}
    \item We assume $e$ is oriented as $a = (u,v)$.
    \item Delete $a$ from the orientation and call \algo{check-dec($v$)}.
    \end{algorithm}

    \smallskip

    \begin{algorithm}{check-inc}{$v$}
      \begin{blockcomment}
        We call this routine whenever $\indegree{v}$ has increased
        (always by $1$).
      \end{blockcomment}
    \item \labelstep{check-inc-loop} While there are arcs
      $a = (u,v) \in \incut{v}$ s.t.\
      \begin{math}
        \indegree{v} > \alphamore \lab{v}{a} + 1
      \end{math}
      \begin{steps}
      \item If $\indegree{u} < \indegree{v}$:
        \begin{steps}
        \item \labelstep{aic-inc} \labelstep{ainc-flip} Flip $a$ to
          $(v,u)$, and set $\lab{u}{a} = \indegree{u}$ and
          $\lab{v}{a} = \indegree{v}$.
          \begin{blockcomment}
            This also restores $\indegree{v}$ to its previous value
            and fixes the invariant for all arcs in $\incut{v}$.
          \end{blockcomment}
        \item Recurse by calling \algo{check-inc($u$)}, and return.
        \end{steps}
      \item Otherwise set $\lab{u}{a} = \indegree{u}$ and
        $\lab{v}{a} = \indegree{v}$.
      \end{steps}
    \end{algorithm}

    \smallskip

    \begin{algorithm}{check-dec}{$u$}
      \begin{blockcomment}
        We call this routine whenever $\indegree{u}$ has decreased
        (always by $1$).
      \end{blockcomment}
    \item \labelstep{check-dec-loop} While there is an arc
      $a = (u,v) \in \outcut{u}$ s.t.\
      $\lab{u}{a} > \alphamore \indegree{u} + 1$
      \begin{steps}
      \item If $\indegree{u} < \indegree{v}$:
        \begin{steps}
        \item Flip $a$ to $(v,u)$. Set $\lab{u}{a} = \indegree{u}$ and
          $\lab{v}{a} = \indegree{v}$.
          \begin{blockcomment}
            This also restores $\indegree{u}$ to its previous value.
          \end{blockcomment}
        \item Recurse by calling \algo{check-dec($v$)}, and return.
        \end{steps}
      \item Otherwise set $\lab{u}{a} = \indegree{u}$ and
        $\lab{v}{a} = \indegree{v}$.
      \end{steps}
    \end{algorithm}
  \end{framed}
  \vspace{-1em}
  \caption{Dynamically approximating the min-max orientation in an
    unweighted graph with fast amortized update times.
    \labelfigure{unweighted-amortized-code}}
\end{figure}

\paragraph{The data structure:} Pseudocode for the data structure is
presented in \reffigure{unweighted-amortized-code}. Clearly it is very
simple. At a high-level, the data structure adds and deletes arcs as
requested and then makes local flips and resets arc labels to repair
the inequalities in \refequation{amortized-label-invariants} whenever
they are violated. When inserting an edge $e = \setof{u,v}$, it
orients $e$ towards the vertex with smaller in-degree. When deleting
an edge $e$, it removes the corresponding oriented arc. These
operations increase or decrease the in-degree $\indegree{v}$ of an
endpoint $v$, and may violate the inequalities in
\refequation{amortized-label-invariants} above, which relate
$\indegree{v}$ to the labels $\lab{v}{a}$ for arcs $a$ in $\incut{v}$
or $\outcut{v}$. To check and repair these inequalities we introduce
two subroutines \algo{check-inc($v$)} and \algo{check-dec($v$)}.

We call \algo{check-inc($v$)} whenever the in-degree of a vertex $v$
is increased. The subroutine checks for any arcs
$a = (u,v) \in \incut{v}$ where $\indegree{v}$ has become too large
relative to $\lab{v}{a}$. For each such arc $a$, depending on whether
or not $\indegree{u} < \indegree{v}$, it either flips $a$ (restoring
$\indegree{v}$ to its previous value) and resets $\lab{v}{a}$, or
relabels $\lab{v}{a} = \indegree{v}$. A flip would increase
$\indegree{u}$, so in this case we recurse on $u$.

The other routine, \algo{check-dec($u$)}, is similar to
\algo{check-inc} except it is for the case where the in-degree of a
vertex $u$ is decreased. \algo{check-dec($u$)} makes sure that
$\indegree{u}$ is not too much smaller than $\lab{u}{a}$ for any arc
$a \in \outcut{u}$. When violations are found, we either flip the
violating arc or reset its label. A flip leads to a recursive call to
\algo{check-dec} on the opposite endpoint, hence possibly more flips.

The point of the calls to \algo{check-inc} and \algo{check-dec} is to
ensure that the label inequalities are met for all arcs in the
orientation. We call these subroutines appropriately whenever an
in-degree changes and a violation might be created. The conditional
loops in these subroutines ensure the subroutines do not terminate
until all violating labels are addressed.

\subsection{Maintaining a
  $\parof{1 + \bigO{\alpha}, \bigO{1}}$-locally optimal orientation}

\labelsection{amortized-optimality}

We now prove formally that the data structure maintains a
$\parof{1 + \bigO{\alpha}, \bigO{1}}$-locally optimal orientation. The
local optimality is certified via the arc labels as described above.

\begin{lemma}
  \labellemma{amortized-labels}
  For all arcs $a = (u,v)$, we have
  \begin{math}
    \lab{u}{a} \geq \lab{v}{a} - 1.
  \end{math}
\end{lemma}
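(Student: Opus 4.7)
The plan is to prove the invariant $\lab{u}{a} \geq \lab{v}{a} - 1$ by induction on the operation sequence, noting that the labels on an arc $a$ change only at four code points: (i) when $a$ is first inserted by \algo{insert}, (ii) in the flip branch of \algo{check-inc}, (iii) in the flip branch of \algo{check-dec}, and (iv) in the else (relabel-only) branches of either routine. Labels on arcs not touched by the current operation are unchanged, so it suffices to verify the invariant at each of these four writes.

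For \algo{insert}, the orientation is chosen so that $\indegree{v} \leq \indegree{u}$ before adding the arc. After adding $a = (u,v)$, $\indegree{v}$ has increased by $1$ while $\indegree{u}$ is unchanged, so the subsequent writes yield $\lab{u}{a} = \indegree{u}^{\text{old}}$ and $\lab{v}{a} = \indegree{v}^{\text{old}} + 1$. Hence $\lab{u}{a} - \lab{v}{a} = \indegree{u}^{\text{old}} - \indegree{v}^{\text{old}} - 1 \geq -1$, establishing the invariant for the new arc.

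For the flip branch in \algo{check-inc} (and symmetrically \algo{check-dec}), the flip is guarded by $\indegree{u} < \indegree{v}$. After flipping $a$ from $(u,v)$ to $(v,u)$, the new tail $v$ has in-degree decreased by $1$ and the new head $u$ has in-degree increased by $1$; the labels are then written to these post-flip in-degrees. The invariant for the new arc (now with $v$ as tail and $u$ as head) becomes $\lab{v}{a} \geq \lab{u}{a} - 1$, which unpacks to $\indegree{v}^{\text{pre}} - 1 \geq (\indegree{u}^{\text{pre}} + 1) - 1$, i.e., $\indegree{v}^{\text{pre}} > \indegree{u}^{\text{pre}}$—exactly the guard condition. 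For the relabel-only else branches in both routines, the arc direction is unchanged and the branch is entered precisely when $\indegree{u} \geq \indegree{v}$, which directly implies $\lab{u}{a} = \indegree{u} \geq \indegree{v} - 1 = \lab{v}{a} - 1$.

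There isn't really a hard step here; the only thing to be careful about is keeping track of the pre- vs. post-flip in-degrees during a flip, because a single flip simultaneously swaps which endpoint is the tail (reindexing which side of the invariant each label is on) and shifts $\indegree{u}$ and $\indegree{v}$ by $\pm 1$. Once the accounting is set up, each case reduces to a one-line inequality, and the inductive hypothesis handles all arcs whose labels were not touched.
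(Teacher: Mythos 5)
Your proof is correct and follows essentially the same route as the paper: a case analysis over the only places labels are written (insertion, the flip branches, and the relabel-only branches), checking that the guard condition in each case implies the inequality. Your bookkeeping of pre- versus post-flip in-degrees matches the pseudocode (labels are set after the flip), so the argument goes through exactly as in the paper's proof.
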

\begin{proof}
  The labels $\lab{u}{a}$ and $\lab{v}{a}$ are set only in two
  situations.  The first setting is when we add $a$ to the orientation,
  either upon inserting $\setof{u,v}$, or from flipping $(v,u)$ to
  $(u,v)$. When inserting $\setof{u,v}$, the choice of orientation
  implies that $\indegree{v} \leq \indegree{u}$ before adding $a$, so
  we have
  \begin{math}
    \lab{v}{a} \leq \lab{u}{a} + 1.
  \end{math}
  as desired.  When flipping $(v,u)$ to $(u,v)$, we have
  $\indegree{v} \leq \indegree{u} - 1$ before flipping, hence
  \begin{math}
    \lab{v}{a} \leq \lab{u}{a} + 1.
  \end{math}

  The second setting where we reset $\lab{u}{a}$ and $\lab{v}{a}$ is
  after we choose \emph{not} to flip $(u,v)$ in either
  \algo{check-inc($v$)} or \algo{check-inc($u$)}. In either case we
  would have just verified that $\indegree{u} \geq \indegree{v}$,
  hence
  \begin{math}
    \lab{u}{a} \geq \lab{v}{a}
  \end{math}
  as well.
\end{proof}

\begin{lemma}
  \labellemma{amortized-label-v}
  For all arcs $a = (u,v)$, we have
  \begin{math}
    \indegree{v} \leq \alphamore \lab{v}{a} + 1.
  \end{math}
\end{lemma}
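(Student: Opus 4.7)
The plan is to prove this invariant by induction on the sequence of elementary events---arc additions, deletions, flips, and label writes---performed by the data structure, establishing that at the end of every event, the invariant $\indegree{v} \leq \alphamore \lab{v}{a} + 1$ holds for every arc $a = (u,v)$ currently in the orientation. First I would enumerate every place where $\lab{v}{a}$ is assigned. Inspecting \reffigure{unweighted-amortized-code}, these are the moment $a$ is created (either by \algo{insert} or by a flip inside \algo{check-inc} or \algo{check-dec}) and the ``otherwise'' branches of \algo{check-inc} and \algo{check-dec}. In every such case $\lab{v}{a}$ is set to the current value of $\indegree{v}$, so the invariant for $a$ is restored trivially at the moment of the write.

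Second, I would observe that the invariant for a fixed arc $a=(u,v)$ can become violated after its label is last set only if $\indegree{v}$ strictly increases, and every such increase is caused by a new arc entering $\incut{v}$. There are exactly three settings in which this occurs: (i) a direct \algo{insert} orienting an edge into $v$, which is immediately followed by an explicit call to \algo{check-inc($v$)}; (ii) a flip inside some \algo{check-inc} whose new head happens to be $v$, which is immediately followed by a recursive \algo{check-inc($v$)}; and (iii) a flip inside \algo{check-dec($v$)} that turns some $(v,w) \in \outcut{v}$ into $(w,v)$. For settings (i) and (ii), I would analyze the while-loop of \algo{check-inc($v$)} and argue that it visits every arc in $\incut{v}$ whose invariant is currently violated, handling each either by relabeling (which restores its own invariant) or by flipping. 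A flip decreases $\indegree{v}$ back to the value it held just before the triggering increase; at this value the inductive hypothesis already supplies the invariant for every remaining arc in $\incut{v}$, which is precisely why it is sound for the loop to return immediately after a single flip.

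The main obstacle is setting (iii): a flip inside \algo{check-dec($v$)} increases $\indegree{v}$ by $1$ without ever invoking \algo{check-inc($v$)}. To handle this I would exploit the invocation discipline of \algo{check-dec}, which is entered only from \algo{delete} or from a recursive flip in another \algo{check-dec}, and in both cases $\indegree{v}$ has just decreased by exactly $1$ immediately before the call. While inside \algo{check-dec($v$)}, the code only inspects and rewrites arcs in $\outcut{v}$ up to the moment of the flip, so $\incut{v}$ remains a subset of what it was before the triggering decrease. The flip therefore restores $\indegree{v}$ to a value at which the invariant was already known to hold for all current in-arcs at $v$ by the inductive hypothesis, and the freshly created in-arc $(w,v)$ satisfies the invariant by the trivial case above. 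Deletions and label writes that leave $\indegree{v}$ unchanged never threaten the invariant, so the induction closes.
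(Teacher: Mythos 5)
Your proof is correct and uses essentially the same mechanism as the paper's: whenever $\indegree{v}$ increases so as to threaten the inequality, the immediately ensuing call to \algo{check-inc($v$)} processes every violated in-arc, either resetting $\lab{v}{a}$ to the current $\indegree{v}$ or flipping an arc, which returns $\indegree{v}$ to its pre-increase value where the inductive hypothesis already gives the bound; your case (iii), the flip inside \algo{check-dec} that raises $\indegree{v}$ with no call to \algo{check-inc($v$)}, is handled correctly and in fact makes explicit a case the paper's short proof passes over silently (that flip merely restores $\indegree{v}$ to its value before the triggering decrement, and the labels of the other in-arcs of $v$ are untouched in the interim). One small wording issue: the invariant does not literally hold ``at the end of every event''---it is transiently violated right after an in-degree increase and is only re-established once the ensuing \algo{check-inc} loop terminates---so the induction should be phrased over completed \algo{insert}/\algo{delete} operations (treating an increase together with its repair call as a single step), which is what your argument effectively does anyway.
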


\begin{proof}
  Fix $a$. If $\indegree{v}$ momentarily increases as to violate the
  desired inequality, then in the subsequent call to
  \algo{check-inc($v$)}, the data structure will continue to process
  edges in $\incut{v}$ until either (a) it flips some arc (possibly
  $a$) in $\incut{v}$ or (b) resets $\lab{v}{a}$. In event (a),
  $\indegree{v}$ is decreased to its previous value before the
  inequality was violated. In event (b), we set
  $\lab{v}{a} = \indegree{v}$ which satisfies the inequality.
\end{proof}

\begin{lemma}
  \labellemma{amortized-label-u}
  For all arcs $a = (u,v)$,
  \begin{math}
    \lab{u}{a} \leq \alphamore \parof{\indegree{u} + 1}.
  \end{math}
\end{lemma}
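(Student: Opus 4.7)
I would prove this by mirroring the proof of \reflemma{amortized-label-v}, exchanging the roles of head and tail (so $\incut{v}$ becomes $\outcut{u}$ and \algo{check-inc} becomes \algo{check-dec}). Fix an arc $a = (u,v)$. Whenever $\lab{u}{a}$ is set or reset, it is assigned the current value of $\indegree{u}$, so immediately afterwards $\lab{u}{a} = \indegree{u} \leq \alphamore(\indegree{u}+1)$. The inequality can therefore only become violated when $\indegree{u}$ decreases without $\lab{u}{a}$ being simultaneously reset. Tracing the pseudocode, I would verify that such decreases of $\indegree{u}$ occur only (i) when an incoming arc of $u$ is deleted, or (ii) when a flip inside a \algo{check-dec($w$)} call reverses an arc $(w,u)$ into $(u,w)$; and that in both cases \algo{check-dec($u$)} is invoked immediately.

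Next I would argue that the ensuing \algo{check-dec($u$)} call restores the desired inequality for every arc in $\outcut{u}$ before returning. The while loop processes violating arcs $a' = (u,v') \in \outcut{u}$ one at a time. In the no-flip branch, the labels of $a'$ are reset to the current in-degrees, directly restoring the inequality at $a'$. In the flip branch, $a'$ becomes $(v',u)$ and is removed from $\outcut{u}$; moreover the flip increments $\indegree{u}$ back to its value immediately prior to the triggering decrement, after which the routine returns. So for our fixed arc $a$, one of three scenarios occurs: $a$ is itself processed and the no-flip branch resets its labels; the loop terminates naturally with every remaining arc (including $a$) satisfying $\lab{u}{a} \leq \alphamore\indegree{u}+1 \leq \alphamore(\indegree{u}+1)$; or the loop returns early via a flip on some other arc, leaving $a$ unexamined but restoring $\indegree{u}$ to its pre-decrement value.

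The only subtlety, and the main obstacle, is this third scenario: after an early return, arcs in $\outcut{u}$ may still carry labels from before the decrement. I would resolve this by invoking the inductive invariant at the state immediately before the triggering decrement. Since the labels on any unexamined arc are unchanged during the call, and the flip restores $\indegree{u}$ to precisely the earlier value, the inequality $\lab{u}{a} \leq \alphamore(\indegree{u}+1)$ holds at the restored state exactly as it did at the pre-decrement state by the inductive hypothesis. This closes the induction and establishes the lemma.
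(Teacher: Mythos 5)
Your argument is essentially the paper's own proof: the paper likewise notes that a violation can only arise when $\indegree{u}$ drops, and that the ensuing \algo{check-dec($u$)} loop either relabels arcs until no out-arc exceeds the threshold (in particular $\lab{u}{a} \leq \alphamore \indegree{u}+1$), or flips an arc, which restores $\indegree{u}$ to its pre-decrement value so the invariant holds by induction; your ``early return'' scenario is exactly the paper's flip case, just spelled out more explicitly. The one inaccuracy is your claim that decreases of $\indegree{u}$ occur only via deletion of an incoming arc or via a flip inside some \algo{check-dec($w$)} call, with \algo{check-dec($u$)} invoked in both cases: a flip inside \algo{check-inc($v$)} also decrements $\indegree{v}$ (the flipped arc leaves $\incut{v}$), and in that case no \algo{check-dec($v$)} is invoked, so the verification you propose would not go through as stated. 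This extra case is harmless, and for the same reason your induction already supplies: that flip merely restores $\indegree{v}$ to its value just before the increment that triggered \algo{check-inc($v$)}, and in the interim no tail label $\lab{v}{a'}$ for $a' \in \outcut{v}$ is increased (the non-flip branch of \algo{check-inc($v$)} only touches labels of arcs in $\incut{v}$, setting their tail labels to the tails' current in-degrees, and the flipped arc's new tail label is set to the current in-degree of $v$), so the invariant at the pre-increment state carries over. With that case added, your proof is complete and coincides with the paper's.
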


\begin{proof}
  The proof is similar to the proof of \reflemma{amortized-label-v}.
  Fix $a$. If $\indegree{u}$ momentarily decreases as to violate the
  desired inequality, then in the subsequent call to
  \algo{check-dec($u$)}, the data structure will continue to pull
  process edges in $\outcut{u}$ until either (a) it flips some arc
  (possibly $a$) in $\outcut{u}$ or (b) resets $\lab{u}{a}$. In event
  (a), $\indegree{u}$ is increased to its previous value before the
  inequality was violated. In event (b), we set
  $\lab{u}{a} = \indegree{u}$ which satisfies the inequality.
\end{proof}


\begin{lemma}
  \labellemma{amortized-local-optimality}
  The orientation is always
  $\parof{1 + \bigO{\alpha}, \bigO{1}}$-locally optimal.
\end{lemma}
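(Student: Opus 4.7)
The plan is to chain together the three inequalities established in \Cref{lemma:amortized-labels,lemma:amortized-label-v,lemma:amortized-label-u}. Since the data structure maintains an integral orientation, the $(\alpha,\beta)$-local optimality condition reduces to showing that whenever an arc $a = (u,v)$ is present in the orientation, $\indegree{v} \leq (1+O(\alpha))\indegree{u} + O(1)$.

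First I would apply \reflemma{amortized-label-v} to bound $\indegree{v}$ by $\alphamore \lab{v}{a} + 1$, then use \reflemma{amortized-labels} to pass from $\lab{v}{a}$ to $\lab{u}{a} + 1$, and finally use \reflemma{amortized-label-u} to bound $\lab{u}{a}$ by $\alphamore(\indegree{u}+1)$. Concatenating:
\begin{align*}
  \indegree{v}
  &\leq \alphamore \lab{v}{a} + 1
    \leq \alphamore\parof{\lab{u}{a} + 1} + 1
    \leq \alphamore^2 \parof{\indegree{u} + 1} + \alphamore + 1.
\end{align*}

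Next I would expand $\alphamore^2 = 1 + 2\alpha + \alpha^2$. Since \reflemma{local-optimality} is applied with $\alpha = c \eps^2 / \log n$, in particular $\alpha < 1$, so $\alphamore^2 = 1 + \bigO{\alpha}$ and the trailing constants $\alphamore^2 + \alphamore + 1$ are $\bigO{1}$. This yields
\begin{align*}
  \indegree{v} \leq \parof{1 + \bigO{\alpha}} \indegree{u} + \bigO{1},
\end{align*}
which is exactly the $\parof{1+\bigO{\alpha}, \bigO{1}}$-local optimality condition for the arc $a = (u,v)$ (note that the strict inequality in the definition can be recovered by absorbing a $+1$ into the $\bigO{1}$ additive term). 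Since this holds for every arc in the orientation, we are done.

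There is no real obstacle here: the three supporting lemmas have already done the hard work of tying the current in-degrees to the stored labels (both directions), and the remaining argument is just an algebraic concatenation. The only subtlety worth flagging is the direction of each inequality, which is why \reflemma{amortized-labels} (a comparison of labels at the two endpoints) is needed as the bridge between the ``upper bound on $\indegree{v}$'' from \reflemma{amortized-label-v} and the ``lower bound on $\indegree{u}$'' from \reflemma{amortized-label-u}.
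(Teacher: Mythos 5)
Your proposal is correct and follows essentially the same route as the paper: chaining \reflemma{amortized-label-v}, \reflemma{amortized-labels}, and \reflemma{amortized-label-u} to get $\indegree{v} \leq \alphamore^2 \indegree{u} + \bigO{1}$, then absorbing $\alphamore^2 = 1 + \bigO{\alpha}$ and the constants. The paper's proof is exactly this concatenation, just with the constants written out explicitly.
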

\begin{proof}
  Fix $a = (u,v)$. By
  \cref{lemma:amortized-labels,lemma:amortized-label-u,lemma:amortized-label-v}
  we have
  \begin{align*}
    \indegree{v} \leq \alphamore \lab{v}{a} + 1 \leq \alphamore
    \lab{u}{a} + 2 + \alpha \leq \alphamore^2 \indegree{u}{a} + 3 + 3 \alpha,
  \end{align*}
  as desired.
\end{proof}

\subsection{Running time analysis}

\labelsection{amortized-running-time}

The description above establishes that the data structure maintains a
$\parof{1 + \bigO{\alpha}, \bigO{1}}$-locally optimal orientation,
which implies global optimality.  In this section we address the
remaining issue of (amortized) running time. We mention that in
addition to the work described in the pseudocode, the data structure
also maintains a list representation of the vertices of an approximate
densest subgraph in the background. As discussed at the end of
\refsection{local-optimality}, this is fairly simple to do with
negligible overhead as it largely consists of maintaining a list of
the vertices in decreasing order of in-degree. We have omitted these
details from the pseudocode as we feel they distract from the main
points of the analysis.

We now focus on analyzing the algorithm pertaining to the
pseudocode. To simplify the discussion we first explain how, with some
simple auxiliary data structures, each step in the pseudocode takes
constant time.

In particular, we explain how to organize the arcs so that in the
loops of \algo{check-inc} and \algo{check-dec}, each arc $a$ can be
generated in $\bigO{1}$ time. For each vertex $v$, we maintain the
arcs in $\incut{v}$, and the arcs in $\outcut{v}$, in order of
$\lab{v}{a}$, in two nested doubly linked lists. We first describe the
construction for $\incut{v}$. We place each arc in $\incut{v}$ in a
doubly linked list consisting of all arcs $a$ in $\incut{v}$ with the
same label $\lab{v}{a}$. We then place these lists in an outer doubly
linked list, in order of label. We also maintain maintain a pointer to
the location of the first list of arcs $a$ with label $\lab{v}{a}$
greater than equal to $\indegree{v}$.  This allows for the following
constant time operations. First, we can retrieve the minimum or
maximum label in constant time. Second, we can insert a new arc
$a \in \incut{v}$ with label $\label{v}{a} = \indegree{v}$ in constant
time.

For $\outcut{u}$, we construct the same data structure as described
above except with respect to the labels $\lab{u}{a}$.

With the arcs in $\outcut{v}$ and $\incut{v}$ sorted as described
above, we can make each iteration of the loops in \algo{check-inc} and
\algo{check-dec} run in $\bigO{1}$ time by querying these data
structures for the minimum or maximum label arc. When updating
$\lab{u}{a}$ or $\lab{v}{a}$ for an arc $a = (u,v)$, since these
labels are set to $\indegree{u}$ and $\indegree{v}$, we can use our
additional pointers to insert them into the appropriate lists in
$\bigO{1}$ time.

We now move onto the amortized analysis of the data structure with the
understanding that each line of the pseudocode takes constant time.
The main issue is that the recursive calls in \algo{check-inc} and
\algo{check-dec} can potentially lead to many flips for a single
insertion and edge deletion, and the challenge is to amortize these
flips. At a high-level, the analysis observes that an arc $a = (u,v)$
is processed only when $\indegree{u}$ or $\indegree{v}$ have deviated
substantially from the labels $\lab{u}{a}$ and $\lab{v}{a}$,
respectively.  We devise a charging scheme that allows us to amortize
the time processing $a$ against the change to $\indegree{u}$ or
$\indegree{v}$. Meanwhile each edge insertion or edge deletion (after
accounting for the full chain of local flips) ultimately changes the
in-degree of a single vertex by $1$, which is reflected in the
amortized cost.

\begin{lemma}
  \labellemma{amortized-running-time}
  \labellemma{unweighted-amortized-update} \algo{insert} and
  \algo{delete} take $\bigO{1/\alpha} = \bigO{\log{n} / \eps^2}$
  amortized time.
\end{lemma}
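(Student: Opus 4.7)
The plan is to do an amortized analysis via a potential function that tracks how ``overdue'' each arc is for reprocessing. Define
\begin{align*}
  \Phi = \frac{1}{\alpha} \sum_{a = (u,v)} \biggl[\max\Bigl(0,\ln\frac{\indegree{v}+1}{\lab{v}{a}+1}\Bigr) + \max\Bigl(0,\ln\frac{\lab{u}{a}+1}{\indegree{u}+1}\Bigr)\biggr].
\end{align*}
This quantity is nonnegative and starts at $0$. Whenever an arc's labels are (re)set by the data structure, both associated terms drop to $0$.

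First I would show that each iteration of the while loop in \algo{check-inc} or \algo{check-dec} decreases $\Phi$ by at least a constant. An iteration of \algo{check-inc}$(v)$ processes an arc $a = (u,v)$ only when $\indegree{v} > \alphamore \lab{v}{a} + 1$, which forces $(\indegree{v}+1)/(\lab{v}{a}+1) > 1 + \alpha$, so the head-side log term is at least $\ln(1+\alpha) \geq \alpha/2$ for small $\alpha$. Hence that term contributes at least $1/2$ to $\Phi$ before processing and $0$ afterward, whether we flip or merely relabel. The symmetric argument handles \algo{check-dec}.

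Next I would bound the rise in $\Phi$ caused by a single unit change in some $\indegree{v}$. When $\indegree{v}$ grows from $d$ to $d+1$ (from either an insertion or an arc flipped into $\incut{v}$), the newly added arc has $\lab{v}{a}$ set to $d+1$ and contributes $0$; for each of the $d$ pre-existing arcs $a' \in \incut{v}$ the head-side log grows by at most $\ln((d+2)/(d+1)) \leq 1/(d+1)$, totaling a rise of at most $d/((d+1)\alpha) \leq 1/\alpha$. By symmetry a unit in-degree decrease raises $\Phi$ by at most $1/\alpha$, and the unaffected terms can only decrease.

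Let $I, D$ denote the numbers of user insertions and deletions, $F$ the arc flips performed by the data structure, and $P$ the total loop iterations (each taking $\bigO{1}$ time by the auxiliary structures described above). Each insertion induces one in-degree increase, each deletion one in-degree decrease, and each flip one of each. So the cumulative growth of $\Phi$ is at most $(I + D + 2F)/\alpha$, while the cumulative decrease is $\bigOmega{P}$. Since $\Phi \geq 0$ and $F \leq P$, this forces $P = \bigO{(I+D)/\alpha}$ for $\alpha$ sufficiently small, yielding amortized cost $\bigO{1/\alpha} = \bigO{\log n / \eps^2}$ per update. The main subtlety is calibrating the potential: the logarithmic shape is precisely what makes a unit in-degree change at a vertex of in-degree $d$ contribute only $\Theta(1/d)$ to each of its $d$ head-side arcs (keeping the total rise at $\bigO{1/\alpha}$), while the multiplicative $(1+\alpha)$-slack in the local-optimality invariant is exactly what guarantees each triggered processing releases an $\bigOmega{1}$ chunk of potential.
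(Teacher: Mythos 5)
Your potential function is a natural repackaging of the paper's credit/charging argument, and the half of it concerning \algo{check-inc} (head-side terms, over the arcs of $\incut{v}$, which number exactly $\indegree{v}$) is sound. But the final bookkeeping does not close. You bound the total rise of $\Phi$ by $(I+D+2F)/\alpha$ and the release per processed arc by a constant $c$, so nonnegativity of $\Phi$ gives only $cP \leq (I+D)/\alpha + 2F/\alpha$; since $F \leq P$ and $2/\alpha \gg c$, the coefficient of $P$ on the right side dominates the left and the inequality yields no bound on $P$ whatsoever --- ``for $\alpha$ sufficiently small'' points in the wrong direction, as the argument would only close if $\alpha$ were a large constant. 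The missing idea, and the way the paper avoids charging per flip, is that an entire \algo{insert} or \algo{delete}, including its whole cascade of flips, changes the in-degree of exactly one vertex by one: each flip restores the in-degree of the vertex it was called on and passes the unit change to the next vertex. The paper therefore injects $\bigO{1/\alpha}$ credit once per operation, not once per flip; in your framework you would have to argue that the potential injected by a flip at an intermediate vertex is cancelled against the restoration of the very increase (or decrease) that triggered it, which your per-unit-change accounting never does, and which is not automatic given the label resets and the truncation at zero in your terms.

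Separately, the ``by symmetry'' claim that a unit in-degree decrease raises $\Phi$ by at most $1/\alpha$ is not justified by your calculation. A decrease of $\indegree{u}$ inflates the tail-side terms of arcs in $\outcut{u}$, and $\outdegree{u}$ is not bounded by $\indegree{u}$: many arcs of $\outcut{u}$ can simultaneously carry labels $\lab{u}{a}$ at or slightly above the current in-degree (the invariant $\lab{u}{a} \leq \alphamore\parof{\indegree{u}+1}$ caps the label values, not how many arcs attain them), so a single unit decrease can raise $\Phi$ by on the order of $\outdegree{u}/\parof{\alpha \indegree{u}}$, far more than $1/\alpha$ --- think of a vertex with modest in-degree and enormous out-degree whose out-arcs were all labeled when its in-degree was at its current level. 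The paper's own credit argument is terse at exactly this point (it spreads credits over $\incut{x}$ yet lets arcs of $\outcut{u}$ collect them in the \algo{check-dec} case), so this is a shared subtlety rather than a divergence in approach; still, your write-up asserts a bound that the computation you give does not deliver, and any complete proof must confront the fact that the arcs needing payment on the decrease side are out-arcs, whose number is unrelated to the in-degree whose changes you are charging.
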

\begin{proof}
  Observe that the net effect of \algo{insert} on the vertex
  in-degrees is to increase $\indegree{x}$ of a single vertex $x$ by
  $1$. Similarly, \algo{delete} decreases $\indegree{x}$ of a single
  vertex $x$ by $1$.

  Now, the running time in \algo{insert} is proportional to the number
  of arcs considered in the while loop (step \refstep{check-inc-loop})
  of \algo{check-inc}, over all recursive calls to \algo{check-inc}
  (plus a constant amount of work). The running time in \algo{delete}
  is proportional to the number of arcs considered in the while loops
  of \algo{check-dec}.

  Our amortized analysis is a fractional charging scheme, where we
  distribute fractional credits to each arc that accumulate and pay
  for processing the arc later. The credits are generated as follows.
  Whenever either \algo{insert} or \algo{delete} results in changing
  the in-degree of a vertex $x$ (by $1$), we spread
  $\bigO{1 / \alpha}$ credits uniformly over the arcs in $\incut{x}$;
  thus each arc $a \in \incut{x}$ receives a credit of
  $ \bigOmega{1 / \alpha \indegree{x}}$. One unit of credit will pay
  for a constant amount of work, so this adds an amortized cost of
  $\bigO{1/\alpha}$ to both \algo{insert} and \algo{delete}.

  \paragraph{Claim:}\emph{Any arc $a$ processed by \algo{check-inc} or
    \algo{check-dec} has acquired at least $1$ unit of credit since
    $\lab{a}$ was last set.}

  \medskip

  Consider first \algo{check-inc}; suppose $a = (u,v)$ is processed in
  the loop. Since
  \begin{math}
    \indegree{v} > \alphamore \lab{v}{a} + 1,
  \end{math}
  and $\lab{v}{a}$ had been set to $\indegree{v}$ earlier,
  $\indegree{v}$ must have gained at least
  $\bigOmega{\alpha \indegree{v}}$ edges since $a$ was last
  labeled. Each edge gained by $\incut{v}$ contributes
  $\bigOmega{\frac{1}{\alpha \indegree{v}}}$ credits to $a$ and thus
  $a$ has one unit of credit to pay by the time $\indegree{v}$
  increases to greater than $\alphamore \lab{v}{a} + 1$. This proves
  the part of the claim concerning \algo{check-inc}.

  Now consider \algo{check-dec($u$)}; suppose $a = (u,v)$ is processed
  in the loop. Recall that $\lab{u}{a}$ had been set to $\indegree{u}$
  earlier; now (when $a$ is processed) we have
  $\lab{u}{a} > \alphamore \indegree{u} + 1$. Therefore $\incut{u}$
  must have lost at least $\bigOmega{\max{1, \alpha \indegree{u}}}$
  arcs since $\lab{u}{a}$ was set. Each edge lost contributes
  $\bigOmega{1/ \alpha \max{1,\indegree{u}}}$ credits to $a$.
  Multiplying these quantities together shows that $a$ has acquired at
  least one credit before it is processed, as claimed.

  Now, the claim implies that the constant work in every iteration of
  the loop in \algo{check-inc} (except the last, over all recursive
  calls), as well as for every recursive call in \algo{check-dec}, is
  paid for by existing credit.  All put together, the overall
  amortized running time of each operation is bounded above by the
  initial amortized cost, $\bigO{1/\alpha}$.
\end{proof}

\begin{remark}
  Of course one could have the insertion of an edge $e$ pay for the
  amortized cost of deleting $e$ later, and claim that deletion takes
  $\bigO{1}$ amortized time. We do not emphasize this distinction.
\end{remark}

\subsection{Extending to fractional orientations.}\labelsection{amortized-duplicate-space}
As discussed above, to obtain a proper $\epsmore$-approximation to the
fractional arboricity, one can duplicate each edge
$k = \bigO{\log{n} / \eps^2}$ times for sufficiently large $C > 0$,
and interpret each ``duplicate'' as a fractional edge of weight
$1/k$. This increases the running time of all operations by
$\bigO{\log{n} / \eps^2}$. Additionally it would increase the space by
a $\bigO{\log{n}/\eps^2}$-factor. We would like to avoid this
additional space overhead and here we will explain how to
\emph{simulate} the duplication approach in linear space.

Let $e = \setof{u,v}$ be a fixed edge. In an orientation, $e$ is
directed as either $a_1 = (u,v)$ or $a_2 = (v,u)$. If we duplicate $e$
$k$ times, then some duplicates will be of the form $a_1$ and the rest
will be of the form $a_2$.

Now, rather than record each copy separately, we can instead record
numerically how many copies of $e$ are oriented in each direction.
Additionally, for each orientation $a_i$ of $e$, we will maintain one
pair of labels $\lab{u}{a_i}$ and $\lab{v}{a_i}$ that serve all copies
of that arc, rather than having each copy of $a_i$ have its own set of
labels. Thus when the data structure resets the labels for one copy of
the arc $a_i$, this automatically resets the labels for all copies of
the arc $a_i$ simultaneously.

We argue that resetting the labels of all the copies of an arc $a_i$,
rather than a particular copy, still preserves correctness. In
general, we reset the labels $\lab{u}{a_i}$ and $\lab{v}{a_i}$ when
doing so would preserve the inequalities in
\cref{lemma:amortized-labels,lemma:amortized-label-u,lemma:amortized-label-v}. In
particular, if it is valid to update $\lab{u}{a_i}$ and $\lab{v}{a_i}$
for one particular copy of $a_i$, then it is valid to update the
labels for all copies of $a_i$. Thus no error is introduced; if
anything, updating the labels of the copies of $a_i$ can be understood
as ``free'' updates that only help the data structure.

The final point to address is for the nested lists that maintain the
arcs in $\incut{v}$ and $\outcut{v}$ in order of $\label{v}{a}$. Here
we take advantage of the fact that all copies of an arc have the same
label. Consider an arc $a \in \incut{v}$ (say). Rather than store each
copy of $a$ separately, we have a node representing $a$ along with the
number of copies of $a$ that are in $\incut{v}$. Since all the copies
of the same arc have the same label, they would occupy the same place
in the list anyway. Now removing (a copy of) an arc from the list
corresponds to decrementing the corresponding counter, unless it was
the last copy in which case the node for that arc is removed. Likewise
inserting an arc corresponds to incrementing a counter unless it is
the first copy in which case a new node is created.

\section{$(1+\eps, \log{n})$-approximate orientation with worst-case
  updates}

\labelsection{worst-case-unweighted}
\labelsection{worst-case}

The previous section gives a data structure that dynamically maintains
a $\parof{1 + \eps, \log{n}/\eps}$-approximate orientation in
$\bigO{\log{n} / \eps^2}$ amortized time per update. This section
extends that data structure to obtain
$\bigO{\log{n}^2 \log{\lpopt + \log{n} / \eps} / \eps^4}$
worst-case time per update while retaining the same amortized running
time. Formally, we will prove the following.

\begin{theorem}
  \labeltheorem{worst-case-mixed-apx}\labeltheorem{wc-mixed-apx} Let
  $G$ be an unweighted and undirected graph over $n$ vertices,
  dynamically updated by edge insertions and deletions. Then one can
  maintain an orientation of $G$ with maximum in-degree at most
  $\epsmore \lpopt + \bigO{\log{n} / \eps}$, and (implicitly) a
  subgraph with density at least
  $\epsless \lpopt - \bigO{\log{n} / \eps}$, in
  $\bigO{\log{n} / \eps^2}$ amortized time and
  $\bigO{\log{n}^2 \log{\lpopt + \log{n} / \eps} / \eps^4}$ worst-case
  time per edge insertion or deletion. The data structure uses
  $\bigO{m + n}$ space.
\end{theorem}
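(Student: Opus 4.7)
The plan is to deamortize the data structure of \refsection{amortized} by processing the cascades induced by \algo{check-inc} and \algo{check-dec} lazily rather than to completion within a single call. I would maintain a FIFO queue $Q$ of arcs that may currently violate the label invariants of \refequation{amortized-label-invariants}. Each call to \algo{insert} or \algo{delete} does only the primitive arc update and enqueues the $\bigO{1}$ arcs whose endpoint in-degrees or labels were modified, without invoking the recursive cascade inline. Separately, every update processes a fixed worst-case budget $B$ of items from $Q$: dequeue an arc, check its labels against the invariants, and either flip or relabel it using the logic of \refsection{amortized}, enqueueing any new candidates produced thereby (the $\bigO{1}$ arcs adjacent to endpoints whose in-degree or label changed).

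To guarantee that the orientation satisfies $\parof{1+\alpha, \bigO{1}}$-local optimality at \emph{every} moment, not merely after the queue drains, I would internally enforce the stricter variant $\parof{1 + \alpha/2, \bigO{1}}$-local optimality when deciding whether to flip or relabel each arc pulled from $Q$. The resulting slack of $\alpha/2$ absorbs the in-degree drift that can accrue between the enqueue and dequeue of an arc: in a FIFO scheme with $|Q|$ bounded, the number of updates a given arc waits in $Q$ is $\bigO{|Q|/B}$, and with an appropriate choice of $B$ the corresponding drift in endpoint in-degrees is $\bigO{1}$, well within the slack. Combined with \reflemma{local-optimality}, this preserves the nominal bound $\epsmore \lpopt + \bigO{\log n/\eps}$ on the maximum in-degree and, via the same implicit-list argument from \refsection{local-optimality}, the corresponding density lower bound.

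The central challenge is to prove that $|Q|$ stays bounded and to account for the extra $\log\parof{\lpopt + \log n/\eps}$ factor in the worst-case bound. By \reflemma{amortized-running-time}, the amortized work of the original algorithm is $\bigO{1/\alpha} = \bigO{\log n/\eps^2}$ per update, which upper bounds the amortized arrival rate into $Q$. To convert this amortized rate into a true worst-case guarantee, I would partition $Q$ into sub-queues indexed by geometric label buckets of ratio $\parof{1+\alpha}$; since labels can range up to $\bigO{\lpopt + \log n/\eps}$, there are $\bigO{\log\parof{\lpopt + \log n/\eps}/\alpha}$ such buckets, and each update allocates a $\bigO{1/\alpha}$ budget to each, giving total worst-case cost $\bigO{\log\parof{\lpopt + \log n/\eps}/\alpha^2} = \bigO{\log^2 n \log\parof{\lpopt + \log n/\eps}/\eps^4}$ per update. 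The hardest step will be balancing the drift bound of the second paragraph against the per-bucket budget, so that no sub-queue accumulates more than $\bigO{1/\alpha}$ items at any time; this amounts to a potential argument that refines the credit scheme of \reflemma{amortized-running-time} by charging credits into the specific bucket that will eventually do the corresponding work, verifying that no single bucket ever receives more credits than its allotted processing rate can drain.
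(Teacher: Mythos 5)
There is a genuine gap, and it sits exactly where you defer it: the claim that a global FIFO queue (even split into geometric label buckets) processes every endangered arc before its invariant degrades from your internal $(1+\alpha/2)$ threshold to the required $(1+\alpha)$ one. Two concrete problems. First, your accounting of what gets enqueued is wrong: when an update (or a flip performed while draining the queue) changes $\indegree{v}$ by one, it perturbs the invariant relating $\indegree{v}$ to $\lab{v}{a}$ for \emph{every} arc $a \in \incut{v}$ simultaneously --- up to $\indegree{v}$ arcs, not $\bigO{1}$ --- so either the enqueue step costs $\Omega(\indegree{v})$ time (destroying the worst-case bound), or the queue does not contain all arcs at risk and you need an auxiliary per-vertex structure sorted by label to locate them, at which point the FIFO is doing no real work. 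Second, even granting a bounded queue, FIFO order and label-magnitude buckets are oblivious to \emph{which vertex} is drifting: the per-update budget can be consumed entirely by arcs incident to other vertices while an adversary drives one vertex's in-degree up by one per update. Your claim that the drift is $\bigO{1}$ presupposes that every arc waits only $\bigO{1}$ updates in the queue, i.e., that $|Q| = \bigO{B}$ at all times; that is precisely the unproven step, and nothing in the bucketed credit scheme ties an arc's waiting time to the slack $\Omega(\alpha \lab{v}{a}) + \bigO{1}$ available for that particular arc.

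The paper resolves exactly this with two vertex-local devices that your plan omits. The revised \algo{check-inc($v$)} and \algo{check-dec($u$)} are triggered at the vertex whose in-degree just changed and process up to $C/\alpha$ arcs of \emph{that vertex} that are at least mildly outdated, in order of label; correctness is then the ``dangerous versus bad'' argument of \reflemma{wc-label-v} and \reflemma{wc-label-u}: once $a$ becomes dangerous there are at most $\indegree{v}$ higher-priority arcs at $v$, no new higher-priority arcs appear while $a$ stays dangerous, and the $\Omega(\alpha \indegree{v})$ further increments needed to make $a$ bad each retire $C/\alpha$ of them, so $a$ is processed in time and the invariant holds at every moment. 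The second device is the stricter flip condition $\indegree{v} \geq \alphamore\parof{\indegree{u}+1}$, which makes in-degrees change geometrically along the chain of flips and bounds the number of flips per update by $\bigO{\log\parof{\lpopt + \log{n}/\eps}/\alpha}$; in your lazy scheme flips simply enqueue more work, you have no analogous depth bound, and the $\log\parof{\lpopt + \log{n}/\eps}$ factor in your budget arithmetic matches the target only formally. Without these two ingredients, or a worked-out substitute for them (your closing ``potential argument'' is exactly the missing proof), the proposal does not establish the worst-case guarantee; the amortized part of your plan, by contrast, mirrors the paper and is fine.
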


As before, we can also obtain a fractional orientation of maximum load
at most a $\epsmore$-factor of the fractional arboricity by implicitly
duplicating each edge $\bigO{\log{n} / \eps^2}$ times. Again, to
maintain linear space storage, some minor adjustments are required so
that the ``copies'' of an edge use the same data. We briefly comment
on the adjustments at the end of this section
(\refsection{worst-case-duplicate-space}). Altogether we obtain the
following bounds that increase the running times in
\reftheorem{worst-case-mixed-apx} by roughly a
$\bigO{\log{n} / \eps^2}$-factor.
\begin{corollary}
  \labelcorollary{worst-case-apx} Let $G$ be an unweighted and
  undirected graph over $n$ vertices, dynamically updated by edge
  insertions and deletions. Then one can maintain a fractional
  orientation of $G$ with maximum in-degree at most $\epsmore \lpopt$,
  and (implicitly) a subgraph with density at least $\epsless \lpopt$,
  in $\bigO{\log{n}^2 / \eps^4}$ amortized time and
  $\bigO{\log{n}^3 \parof{\log{\lpopt} + \log \log{n} + \log{1/\eps}}
    / \eps^6}$ worst-case time per edge update. The data structure
  uses $\bigO{m+n}$ space.
\end{corollary}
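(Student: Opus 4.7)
The plan is to apply an edge-duplication reduction on top of \reftheorem{wc-mixed-apx}, simulated in linear space exactly as in \refsection{amortized-duplicate-space}. Fix $k = \Theta(\log{n}/\eps^2)$ and (implicitly) run the data structure of \reftheorem{wc-mixed-apx} on the multigraph $G^k$ obtained by replacing each edge of $G$ with $k$ parallel copies. Since $\lpopt(G^k) = k \cdot \lpopt(G)$, the data structure maintains an integral orientation of $G^k$ with maximum in-degree at most $\epsmore k \lpopt(G) + \bigO{\log{n}/\eps}$. Interpreting each copy as a fractional edge of weight $1/k$ and dividing by $k$, we obtain a fractional orientation of $G$ with maximum load at most $\epsmore \lpopt(G) + \bigO{\log{n}/(k\eps)}$. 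For the additive term to be absorbed into a $\bigO{\eps}$ relative error it suffices to observe that $\lpopt(G) \geq 1/2$ whenever $G$ contains any edge, so after rescaling $\eps$ by a constant we obtain maximum load at most $\epsmore \lpopt(G)$. The implicit $\epsless$-approximate densest subgraph follows in the same way via \reflemma{local-optimality}.

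For the running times, each insertion or deletion in $G$ triggers $k$ updates in $G^k$. Substituting $\lpopt(G^k) = k \lpopt(G)$ and $k = \bigO{\log{n}/\eps^2}$ into the worst-case bound of \reftheorem{wc-mixed-apx} gives $\log{\lpopt(G^k) + \log{n}/\eps} = \bigO{\log{\lpopt(G)} + \log\log{n} + \log{1/\eps}}$, and multiplying the per-update bound by $k$ yields the claimed $\bigO{\log{n}^3 \parof{\log{\lpopt} + \log\log{n} + \log{1/\eps}}/\eps^6}$ worst-case time per edge update on $G$. The amortized bound of $\bigO{\log{n}^2/\eps^4}$ is obtained analogously from the $\bigO{\log{n}/\eps^2}$ amortized bound of \reftheorem{wc-mixed-apx}.

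The remaining issue is that naively storing $k$ copies of each edge would inflate the space by a factor of $k$. We avoid this by simulating the $k$ copies exactly as in \refsection{amortized-duplicate-space}: for each undirected edge $\setof{u,v}$ and each of its two possible orientations $a_i$, we keep a single integer counter recording how many copies of $\setof{u,v}$ are currently oriented as $a_i$, together with one shared pair of labels $\lab{u}{a_i}, \lab{v}{a_i}$ serving all copies. The nested doubly linked lists indexing $\incut{v}$ and $\outcut{v}$ by label similarly carry one node per distinct directed arc with a multiplicity counter; insertion and deletion of a copy reduce to counter increments and decrements, with list-node creation or removal only when a counter crosses zero. The main point to verify is that this sharing of labels across copies is compatible with the worst-case scheduling of \reftheorem{wc-mixed-apx}: because the local-optimality invariants and the associated charging arguments depend only on current in-degrees and per-arc labels (not on distinguishing individual copies), and because simultaneously updating the shared labels for all copies only strengthens those invariants, the worst-case update strategy transfers directly and the $\bigO{m+n}$ space bound follows.
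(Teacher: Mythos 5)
Your proposal is correct and follows essentially the same route as the paper: implicitly duplicate each edge $k = \Theta(\log n/\eps^2)$ times, apply \reftheorem{wc-mixed-apx} to the multigraph so the additive $\bigO{\log n/\eps}$ term becomes a relative $\bigO{\eps}$ error (using $\lpopt \geq 1/2$, which you make explicit where the paper leaves it implicit), multiply the update times by $k$, and keep linear space by sharing labels and multiplicity counters across copies exactly as in \refsection{worst-case-duplicate-space}.
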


\subsection{High-level overview}

As mentioned above, the new data structure is based on the amortized
data structure form the previous section.  To motivate the changes we
first explain where the previous (purely) amortized approach can have
bad worst-case performance.  There are two factors that are unbounded:
the depth of recursive calls to \algo{check-inc} or \algo{check-dec},
and the length of the loop within a single call to \algo{check-inc} or
\algo{check-dec}. We discuss them separately and consider the
recursive aspect first.  Consider the subroutine \algo{check-inc} in
\reffigure{unweighted-amortized-code}. Each time we flip an arc in
step \refstep{ainc-flip} we also make a recursive call to the opposite
endpoint, which may trigger further flips and recursive calls. The
total number of flips can potentially be very large. Similarly
\algo{check-dec} can have many flips via recursive calls.

We curtail this scenario by increasing the requirements for a
flip. Before, we always flipped an arc from $(u,v)$ to $(v,u)$ if $u$
is the smaller degree vertex. Now we only flip if $u$ is substantially
smaller than $v$: namely, only if
$\indegree{v} \geq \alphamore \parof{\indegree{u} + 1}$.  For
\algo{check-inc($v$)} this has the following effect. When recursing to
$u$, we know that $\indegree{u}$ is smaller than $\indegree{v}$ was
(at the time of calling \algo{check-inc($v$)}) by a
$\alphamore$-factor. Meanwhile the in-degrees are integral, bounded
above by $\bigO{\arboricity{G} + \log{n} / \eps}$ (pending proof of
correctness) and bounded below by $0$. Therefore there are at most
$\bigO{\log[1 + \alpha]{\arboricity{G} + \log{n} / \eps}} =
\bigO{\log{\arboricity{G} + \log{n} / \eps}/\alpha}$ recursive
calls. The recursion depth for \algo{check-dec} is bounded similarly;
here the degrees increase a $\alphamore$-multiplicative factor with
each recursive call.

We note that a similar argument as described above, except in purely
additive terms, limits the recursive depth in previous work
\cite{KKPS14,SW20}.

The second issue is that the loops may be very long when there are
many labels that require updating, but are not actually
flipped. (Flipping restores the in-degree to the previous value and
terminates the loop.) To try to limit the number of such label
updates, we adjust the data structure to process extra arcs to ``get
ahead'' of the expiring labels.  More precisely, we adjust the loop to
try to process $\bigO{1 / \alpha}$ arcs even if not all of them are
critically outdated.  This helps the data structure stay ahead of a
glut of labels about to expire. That said, we only process arc labels
that are at least a little outdated, and this allows us to retain the
amortized running time.  The formal description of the algorithms are
in \reffigure{unweighted-worst-case-code}.

\begin{figure}[t!]
  \begin{framed}
    \small
    \begin{algorithm}{check-inc}{$v$}
      \begin{blockcomment}
        We call this routine whenever $\indegree{v}$ has increased
        (always by $1$).
      \end{blockcomment}
    \item \labelstep{wci-loop} For up to $C/\alpha$ arcs
      $a = (u,v) \in \incut{v}$ s.t.\
      \begin{math}
        \indegree{v} \geq \parof{1 + \alpha / 2} \lab{v}{a},
      \end{math}
      in increasing order of $\lab{v}{a}$, for a sufficiently large
      constant $C$: 
      \begin{steps}
      \item If
        $\indegree{v} \geq \alphamore \parof{\indegree{u} + 1}$:
        \commentcode{Revised}
        \begin{steps}
        \item \labelstep{wc-check-inc-flip} Flip $a$ to $(v,u)$ and
          set $\lab{u}{a} = \indegree{u}$ and
          $\lab{v}{a} = \indegree{v}$.
          \begin{blockcomment}
            This restores $\indegree{v}$ to its previous value.
          \end{blockcomment}
        \item Call \algo{check-inc($u$)} and return.
        \end{steps}
      \item \labelstep{wc-increase-label} Otherwise set
        $\lab{v}{a} = \indegree{v}$ and $\lab{u}{a} = \indegree{u}$.
      \end{steps}
    \end{algorithm}

    \smallskip

    \begin{algorithm}{check-dec}{$u$}
      \begin{blockcomment}
        We call this routine whenever $\indegree{u}$ has decreased
        (always by $1$).
      \end{blockcomment}
    \item For up to $C/\alpha$ arcs $a = (u,v) \in \outcut{u}$ s.t.\
      $\lab{u}{a} \geq \parof{1 + \alpha/ 2} \indegree{u}$, in
      decreasing order of $\lab{u}{a}$, for a sufficiently large
      constant $C$:
      \begin{steps}
      \item If $\indegree{v} \geq \alphamore \parof{\indegree{u} + 1}$:
        \begin{steps}
        \item Flip $a$ to $(v,u)$. Set $\lab{u}{a} = \indegree{u}$ and
          $\lab{v}{a} = \indegree{v}$.
        \item Call \algo{check-dec($u$)} and return.
        \end{steps}
      \item Otherwise set $\lab{v}{a} = \indegree{v}$ and
        $\lab{u}{a} = \indegree{u}$.
      \end{steps}
    \end{algorithm}
  \end{framed}
  \vspace{-1em}
  \caption{Revised implementations of \algo{check-inc} and
    \algo{check-dec} (cf.\ \reffigure{unweighted-amortized-code}) for
    fast worst-case update
    times.\labelfigure{unweighted-worst-case-code}}
\end{figure}

\subsection{Maintaining a
  $\parof{1 + \bigO{\alpha}, \bigO{1}}$-locally optimal orientation}

\labelsection{wc-optimality}

In this section we prove that the data structure maintains a
$\parof{1 + \bigO{\alpha}, \bigO{1}}$-locally optimal orientation. By
\reflemma{apx-local=>global} this implies the global optimality
conditions described in \reftheorem{worst-case-mixed-apx}.  The
overall analysis is structured similarly to that of the previous data
structure in \refsection{amortized-optimality}; in particular,
\cref{lemma:wc-labels,lemma:wc-label-u,lemma:wc-label-v,lemma:wc-local-optimality}
are in one-to-one correspondence with
\cref{lemma:amortized-labels,lemma:amortized-label-u,lemma:amortized-label-v,lemma:amortized-local-optimality}.

\begin{lemma}
  \labellemma{wc-labels} For every arc $a = (u,v)$, we have
  \begin{math}
    \lab{v}{a} \leq \alphamore \parof{\lab{u}{a} + 1}.
  \end{math}
\end{lemma}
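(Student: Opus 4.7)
The plan is to verify the invariant $\lab{v}{a} \leq \alphamore(\lab{u}{a} + 1)$ immediately after each point in the execution where the labels of the arc $a = (u,v)$ are (re)set. Since $\lab{u}{a}$ and $\lab{v}{a}$ only change via these explicit assignments, this is enough. Inspecting the pseudocode of \algo{insert} together with the revised \algo{check-inc} and \algo{check-dec} from \reffigure{unweighted-worst-case-code}, the labels of an arc currently oriented as $a = (u,v)$ with head $v$ are set in exactly the following five situations:
\begin{enumerate}
\item[(i)] when $\setof{u,v}$ is inserted and oriented as $a = (u,v)$;
\item[(ii)] when a flip inside \algo{check-inc}($u$) turns a previously existing arc $(v,u)$ into $a = (u,v)$;
\item[(iii)] when a flip inside \algo{check-dec}($v$) turns a previously existing arc $(v,u)$ into $a = (u,v)$;
\item[(iv)] in the otherwise branch of \algo{check-inc}($v$) while processing $a$;
\item[(v)] in the otherwise branch of \algo{check-dec}($u$) while processing $a$.
\end{enumerate}

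In case (i), the insertion code selects the orientation so that $\indegree{v} \leq \indegree{u}$ before $a$ is added; after adding $a$ the value of $\indegree{v}$ has gone up by exactly one while $\indegree{u}$ is unchanged. The labels are set to these current in-degrees, so $\lab{v}{a} \leq \lab{u}{a} + 1 \leq \alphamore(\lab{u}{a} + 1)$. In cases (iv) and (v), the otherwise branch is entered precisely because the flip condition $\indegree{v} \geq \alphamore(\indegree{u}+1)$ failed. Thus $\indegree{v} < \alphamore(\indegree{u}+1)$, and since the labels are set to $\indegree{v}$ and $\indegree{u}$ respectively, the invariant follows immediately.

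For cases (ii) and (iii) the argument is essentially the same. Let $d_u, d_v$ denote $\indegree{u}, \indegree{v}$ just before the flip. The pre-flip arc is $(v,u)$, so $u$ is the current head, and the flip is triggered by the guard $d_u \geq \alphamore(d_v + 1)$. Flipping changes the head from $u$ to $v$, so the new in-degrees are $d_u - 1$ and $d_v + 1$; the code then sets $\lab{u}{a} = d_u - 1$ and $\lab{v}{a} = d_v + 1$. Using the guard,
\begin{math}
  \lab{v}{a} = d_v + 1 \;\leq\; \frac{d_u}{\alphamore} \;=\; \frac{\lab{u}{a}+1}{\alphamore} \;\leq\; \alphamore(\lab{u}{a} + 1),
\end{math}
as desired. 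This case in fact yields a stronger inequality than needed, which will be useful in the companion lemmas for $\indegree{v}$ and $\indegree{u}$.

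The main care required is the bookkeeping: correctly identifying, for each call to \algo{check-inc} or \algo{check-dec}, the role played by $u$ and $v$ in the pseudocode's local naming, and tracking which endpoint loses versus gains in-degree as a result of the flip. Once the cases are laid out as above, the verification in each one is a short arithmetic step from the relevant guard condition (the orientation choice in (i), the flip guard in (ii)--(iii), and the negated flip guard in (iv)--(v)).
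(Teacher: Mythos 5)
Your proposal is correct and takes essentially the same approach as the paper: a case analysis on the points in the execution where the labels of an arc are (re)set. The paper's own proof is a terse one-liner ("the algorithm sets $\lab{u}{a}$ and $\lab{v}{a}$ to $\indegree{u}$ and $\indegree{v}$ only after verifying that $\indegree{v} \leq \alphamore\parof{\indegree{u} + 1}$"); your write-up is a more careful elaboration of that same argument, explicitly tracking the $\pm 1$ changes to the in-degrees caused by a flip, which the paper's phrasing elides.
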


\begin{proof}
  For any arc $a = (u,v)$, the algorithm sets $\lab{u}{a}$ and
  $\lab{v}{a}$ to $\indegree{u}$ and $\indegree{v}$ only after
  verifying that
  $\indegree{v} \leq \alphamore \parof{\indegree{u} + 1}$.
\end{proof}






\begin{lemma}
  \labellemma{wc-label-v}
  For every arc $a = (u,v)$, we have
  \begin{math}
    \indegree{v} \leq \alphamore \parof{\lab{v}{a} + 1}.
  \end{math}
\end{lemma}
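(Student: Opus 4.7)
The plan is to mirror the structure of \reflemma{amortized-label-v}, but to handle the fact that the worst-case loop in \algo{check-inc} is capped at $C/\alpha$ iterations per call rather than continuing until all violations are addressed.

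First, note that $\lab{v}{a}$ is only ever assigned the current value of $\indegree{v}$, so the inequality $\indegree{v} \leq \alphamore(\lab{v}{a}+1)$ holds trivially at the moment of any such assignment. Thus the invariant can only break if $\indegree{v}$ later increases---either by an insertion of an arc into $\incut{v}$, or by a flip in a recursive call that reorients some edge into $v$. Every such $+1$ increment of $\indegree{v}$ immediately triggers a call to \algo{check-inc($v$)}, so it suffices to argue that after such a call completes, the invariant holds.

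Second, inside \algo{check-inc($v$)} the loop processes candidate arcs---those satisfying $\indegree{v} \geq (1+\alpha/2)\lab{v}{a}$---in increasing order of $\lab{v}{a}$. Any arc on which the invariant is about to fail is in particular a candidate, since $\indegree{v} > \alphamore(\lab{v}{a}+1)$ implies $\indegree{v} > (1+\alpha/2)\lab{v}{a}$ (using $\alphamore > 1+\alpha/2$). For each processed candidate, either the arc is flipped out of $\incut{v}$---in which case $\indegree{v}$ is decremented back to its pre-increment value and the invariant is preserved by induction---or its labels are reset so that $\lab{v}{a}=\indegree{v}$, which trivially satisfies the invariant at that arc.

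The main obstacle is the loop's cap of $C/\alpha$ iterations: if more than $C/\alpha$ arcs are simultaneously candidates, the loop could skip a violating arc. To handle this, I would establish a companion invariant: at the start of every call to \algo{check-inc($v$)}, the number of candidate arcs in $\incut{v}$ is at most $C/\alpha$, so a single invocation of the loop processes \emph{all} of them. The intuition is that each $+1$ increment to $\indegree{v}$ only widens the candidate threshold $\indegree{v}/(1+\alpha/2)$ by $1/(1+\alpha/2) < 1$, so newly candidate arcs have labels confined to a single integer value, and one bounds how many arcs can share that label value from how labels are assigned (one per insertion, one per flip, and at most $C/\alpha$ per prior loop invocation at the same level). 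Choosing $C$ sufficiently large, the inflow of new candidates per $+1$ increment is dominated by the $C/\alpha$ processed per invocation, so the companion invariant is maintained inductively. Once that invariant is in hand, after every call to \algo{check-inc($v$)} no candidates remain, and every $a \in \incut{v}$ satisfies $\indegree{v} < (1+\alpha/2)\lab{v}{a} \leq \alphamore(\lab{v}{a}+1)$, as desired.
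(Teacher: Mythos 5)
There is a genuine gap: your ``companion invariant'' --- that at the start of every call to \algo{check-inc($v$)} there are at most $C/\alpha$ candidate arcs in $\incut{v}$, so that one invocation clears them all --- is false, and the justification offered for it does not hold. The number of arcs in $\incut{v}$ sharing a single label value is unbounded: every relabeling pass can stamp up to $C/\alpha$ arcs with the same value $\indegree{v}$, and repeated insert/delete oscillations around a fixed in-degree $d$ can accumulate arbitrarily many arcs with label exactly $d$ (deletions of in-arcs trigger \algo{check-dec($v$)}, which touches $\outcut{v}$, not the labels of $\incut{v}$). None of these arcs are candidates while $\indegree{v}\approx d$, but as soon as $\indegree{v}$ crosses $(1+\alpha/2)d$ they \emph{all} become candidates simultaneously --- far more than $C/\alpha$ of them --- so a single loop invocation cannot eliminate them, and your intended conclusion that after every call \emph{no} arc satisfies $\indegree{v}\ge(1+\alpha/2)\lab{v}{a}$ is simply not an invariant of the data structure. (Note also that when the loop flips an arc it recurses and returns immediately, so a call may process far fewer than $C/\alpha$ candidates.)

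The paper's proof is designed around exactly this difficulty: it never claims the set of ``dangerous'' arcs is small. Instead it distinguishes \emph{dangerous} ($\indegree{v}\ge(1+\alpha/2)\lab{v}{a}$) from \emph{bad} (violating the lemma), observes that after $a$ becomes dangerous $\indegree{v}$ must still increase by $\bigOmega{\alpha\,\indegree{v}}$ before $a$ can become bad, and that at the moment $a$ becomes dangerous there are at most $\indegree{v}$ arcs of higher priority (smaller label), with no new higher-priority arcs created while $a$ stays dangerous since fresh labels equal the current, larger, in-degree. Since each unit increment of $\indegree{v}$ processes $C/\alpha$ higher-priority arcs (each of which is flipped or relabeled above $\lab{v}{a}$, hence removed from the higher-priority pool), all $\le\indegree{v}$ of them are exhausted within the $\bigOmega{\alpha\,\indegree{v}}$ available increments for $C$ large enough, forcing $a$ itself to be processed --- and hence flipped or relabeled --- before it becomes bad. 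Your write-up has the right ingredients (assignment-time safety, the candidate threshold, processing in increasing label order, the flip restoring $\indegree{v}$), but to close the argument you need this budget/priority counting between ``dangerous'' and ``bad,'' not a bound on the number of simultaneous candidates.
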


\begin{proof}
  Fix $a$. Let us call $a$ \emph{bad} if it violates the claimed
  inequality, and \emph{dangerous} if
  \begin{align*}
    \indegree{v} \geq \parof{1 + \alpha / 2} \lab{v}{a}.
  \end{align*}
  We want to show that $a$ is never bad.  Now, for $a$ to be bad it
  must first be dangerous, which makes it eligible to be processed in
  the loop for \algo{check-inc($v$)}. After $a$ becomes dangerous,
  $\indegree{v}$ must still increase by
  $\bigOmega{\alpha \indegree{v}}$ before $a$ becomes bad, and each
  increment presents an opportunity to process $a$. We want to argue
  that $a$ will be processed before it comes bad.

  Recall that \algo{check-inc($v$)} processes the arc
  $b \in \incut{v}$ with minimum label $\lab{v}{b}$. Let us say an arc
  $\lab{v}{b}$ has \emph{higher priority} than $a$ if
  $\lab{v}{b} \leq \lab{v}{a}$.  When $a$ becomes dangerous, there are
  at most $\indegree{v}$ other arcs with higher
  priority. Additionally, as long as $a$ remains dangerous, there will
  be no new higher priority arcs because each new label is set to
  $\indegree{v}$, and $\indegree{v} > \lab{v}{a}$.

  Each time $\indegree{v}$ increases, if we do not process $a$, then
  we instead process $C/\alpha$ higher-priority arcs for a suitably
  large constant $C$. Each higher-priority arc $b$ is either flipped
  or has $\lab{v}{b}$ reset to be greater than $\lab{v}{a}$. Either
  way, the arc $b$ will no longer be higher-priority, and the number
  of higher-priority arcs has decreased by $1$.

  To recap, each unit increase to $\indegree{v}$ processes
  $C/\alpha$ higher-priority arcs, so we would process all
  higher-priority arcs before $\indegree{v}$ changes by enough to make
  $a$ bad. This forces us to process $a$ before $a$ becomes
  bad. Processing $a$ will either reset $\lab{v}{a}$ or flip $a$; in
  either case, $a$ will no longer be dangerous.
\end{proof}

\begin{lemma}
  \labellemma{wc-label-u} For every arc $a = (u,v)$, we have
  \begin{math}
    \lab{u}{a} \leq \alphamore \parof{\indegree{u} + 1}.
  \end{math}
\end{lemma}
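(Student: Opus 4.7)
The plan is to mirror the proof of \reflemma{wc-label-v} with the roles of increasing and decreasing in-degree swapped, and $\incut{v}$ replaced by $\outcut{u}$ with the order of labels reversed. I will fix an arc $a = (u,v)$ and call it \emph{bad} if it violates the desired inequality, i.e.\ $\lab{u}{a} > \alphamore \parof{\indegree{u} + 1}$, and \emph{dangerous} if $\lab{u}{a} \geq \parof{1 + \alpha/2} \indegree{u}$. The first step is to observe that for $a$ to become bad it must first pass through being dangerous, and that between these two events $\indegree{u}$ must decrease by $\bigOmega{\alpha \lab{u}{a}}$, since the two thresholds for $\indegree{u}$ scale with $\lab{u}{a}$ and differ by a $\parof{1+\alpha/2}/\parof{1+\alpha}$-factor.

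Next, I will exploit the fact that each unit decrease of $\indegree{u}$ triggers a call to \algo{check-dec($u$)}, which processes up to $C/\alpha$ arcs in $\outcut{u}$ in \emph{decreasing} order of label. Call an arc $b = (u,w) \in \outcut{u}$ \emph{higher priority} than $a$ if $\lab{u}{b} \geq \lab{u}{a}$. Once $a$ is dangerous, no new higher-priority arcs can appear: any newly set label is set to the current $\indegree{u}$, and the danger condition forces $\indegree{u} < \lab{u}{a}$, so the fresh label is strictly smaller than $\lab{u}{a}$. Each time the loop processes a higher-priority arc $b$, either $b$ is flipped out of $\outcut{u}$ or $\lab{u}{b}$ is reset to $\indegree{u} < \lab{u}{a}$, so the count of higher-priority arcs strictly decreases by at least one. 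Accumulating $\bigOmega{\alpha \lab{u}{a}}$ decrements of $\indegree{u}$ with $C/\alpha$ arcs processed per decrement gives $\bigOmega{C \lab{u}{a}}$ unit-cost processing steps among higher-priority arcs, so for $C$ large enough the pool is exhausted and $a$ is forced to be processed. Processing $a$ either flips it (restoring $\indegree{u}$ to its prior value) or resets $\lab{u}{a}$ to $\indegree{u}$, and in either case $a$ becomes no longer dangerous before turning bad.

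The main obstacle is bounding the initial pool of higher-priority arcs by $\bigO{\lab{u}{a}}$, which in the analogous \reflemma{wc-label-v} was automatic because the higher-priority arcs lay in $\incut{v}$ and hence numbered at most $\indegree{v} \approx \lab{v}{a}$. Here the arcs live in $\outcut{u}$ whose size is not directly controlled. The key structural observation I will use is that each such $b$ had $\lab{u}{b}$ last set at a moment when $\indegree{u} \geq \lab{u}{a}$; the number of out-arcs of $u$ present at such a moment is bounded by $\indegree{u}$ at that moment (up to the additive slack already threaded through the analysis), which is $\bigO{\lab{u}{a}}$. Choosing $C$ sufficiently large relative to this count then closes the argument exactly as in \reflemma{wc-label-v}.
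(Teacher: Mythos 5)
Your argument follows the same route as the paper's proof of this lemma: the bad/dangerous dichotomy, the pool of higher-priority arcs in $\outcut{u}$, the observation that no new higher-priority arcs appear while $a$ is dangerous (fresh labels equal the current $\indegree{u} < \lab{u}{a}$), and the counting of $C/\alpha$ processed arcs per unit decrease of $\indegree{u}$ against the $\bigOmega{\alpha \lab{u}{a}}$ decrements separating dangerous from bad. (Your danger condition $\lab{u}{a} \geq \parof{1+\alpha/2}\indegree{u}$ is the one matching the pseudocode.) You have also isolated exactly the point where this case differs from \reflemma{wc-label-v}: there the higher-priority arcs live in $\incut{v}$ and hence trivially number at most $\indegree{v}$, whereas here they live in $\outcut{u}$. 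The paper's proof handles this point by simply asserting that at most $\indegree{u}$ higher-priority arcs exist when $a$ becomes dangerous, with no further justification.

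The repair you propose for this step, however, does not hold. It is not true that the number of out-arcs of $u$ present at a moment when $\indegree{u} \geq \lab{u}{a}$ is bounded by $\indegree{u}$ at that moment; nothing in the data structure controls out-degree by in-degree. Concretely, while $\indegree{u}$ sits at some value $L$, one may insert arbitrarily many edges $\setof{u,w}$ with $\indegree{w} \leq L$: each is oriented out of $u$, its label at $u$ is set to $L$, and none of these insertions changes $\indegree{u}$. Hence $\outcut{u}$ can hold far more than $C\lab{u}{a}/\alpha$ arcs whose label equals $\lab{u}{a}$, all of which become higher-priority (and eligible for processing) simultaneously once $\indegree{u}$ falls below $\lab{u}{a}/\parof{1+\alpha/2}$, while the $\bigOmega{\alpha \lab{u}{a}}$ available decrements only pay for $\bigO{C\lab{u}{a}}$ processings. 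So the counting does not close as you state it; one needs a genuinely different ingredient, namely an invariant (or a modification of \algo{insert} and \algo{check-dec}) that actually bounds how many arcs of $\outcut{u}$ can carry labels at least $\lab{u}{a}$. You did put your finger on the real pressure point --- the paper's own one-sentence assertion leaves the same issue unaddressed --- but the bound you offer in its place is false, so the gap remains open in your proposal.
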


\begin{proof}
  The proof is very similar to the proof of \reflemma{wc-label-v}.
  Fix $a$. Let us
  call $a$ \emph{bad} if it violates the claimed inequality, and
  \emph{dangerous} if
  \begin{align*}
    \indegree{u} \geq \parof{1 + \alpha / 2} \lab{u}{a}.
  \end{align*}
  Whenever $\lab{u}{a}$ is set, it is always set to $\indegree{u}$, in
  which case it is not dangerous. For $a$ to become bad, it must first
  be dangerous. If $a$ is dangerous, then it is eligible to be
  processed for the loop in \algo{check-dec($u$)} (which resets
  $\lab{u}{a}$). Additionally, after $a$ becomes dangerous,
  $\indegree{u}$ must still decrease by
  $\bigOmega{\alpha \indegree{u}}$ before $a$ becomes bad, and each
  increment is an opportunity to process $a$. We want to argue that
  $a$ must be processed before it comes bad.

  \algo{check-dec($u$)} repeatedly processes the arc $b \in
  \incut{u}$ with maximum label $\lab{u}{b}$. Let us say an arc
  $\lab{u}{b}$ has \emph{higher priority} than $a$ if
  $\lab{u}{b} \geq \lab{u}{a}$.  There are at most $\indegree{u}$
  other arcs with higher priority when $a$ becomes dangerous. As
  long as $a$ remains dangerous, there are no new higher
  priority arcs because each new label is set to $\indegree{u}$,
  and $\indegree{u} < \lab{u}{a}$.

  Each decrease in $\indegree{u}$ that does not process $a$ must
  process $C/\alpha$ higher-priority arcs for a suitably large
  constant $C$. Each processed arc is no longer higher-priority
  after processing, so the number of higher-priority arcs
  decreases by $1$ with each iteration.

  To recap, each unit decrease to $\indegree{u}$ processes and
  removes $C/\alpha$ higher-priority arcs.  There are only
  $\indegree{u}$ higher-priority arcs total and by the time
  $\indegree{u}$ decreases by $\bigOmega{\alpha \indegree{u}}$, all
  higher-priority arcs will be processed. Thus the data structure
  processes $a$ before $a$ becomes bad. Processing $a$ resets
  $\lab{u}{a}$ or flips $a$, and $a$ is no longer dangerous.
\end{proof}

\begin{lemma}
  \labellemma{wc-local-optimality}
  The data structure maintains
  $\parof{1+\bigO{\alpha}, \bigO{1}}$-local optimality.
\end{lemma}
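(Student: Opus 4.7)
The plan is to follow the same pattern used in \reflemma{amortized-local-optimality}: simply chain together the three preceding lemmas (\reflemma{wc-labels}, \reflemma{wc-label-u}, and \reflemma{wc-label-v}), which were stated precisely to facilitate this combination. Fix an arbitrary arc $a = (u,v)$ in the current orientation; our goal is to bound $\indegree{v}$ in terms of $\indegree{u}$ with multiplicative factor $1 + \bigO{\alpha}$ and an additive $\bigO{1}$ slack.

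First I would apply \reflemma{wc-label-v} to pass from $\indegree{v}$ to $\lab{v}{a}$, obtaining $\indegree{v} \leq \alphamore(\lab{v}{a} + 1)$. Next, \reflemma{wc-labels} lets me cross from the $v$-side label to the $u$-side label: $\lab{v}{a} \leq \alphamore(\lab{u}{a} + 1)$. Finally, \reflemma{wc-label-u} compares $\lab{u}{a}$ to the actual in-degree $\indegree{u}$ via $\lab{u}{a} \leq \alphamore(\indegree{u} + 1)$. Substituting these in order yields
\begin{align*}
  \indegree{v} \leq \alphamore^3 (\indegree{u} + 1) + \bigO{\alpha} + \bigO{1}.
\end{align*}
Since $\alpha$ is assumed small (say $\alpha < 1$), we have $\alphamore^3 = 1 + \bigO{\alpha}$, and the additive terms collapse to $\bigO{1}$. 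This is exactly the condition $\indegree{v} < \parof{1 + \bigO{\alpha}} \indegree{u} + \bigO{1}$, i.e., $\parof{1 + \bigO{\alpha}, \bigO{1}}$-local optimality as defined in \refsection{local-optimality}.

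There is no real obstacle here: the substantive work has already been carried out in the three preceding lemmas, which handle, respectively, (i) the consistency between the endpoint labels of a single arc (enforced at the moment labels are set), (ii) the fact that $\indegree{v}$ cannot drift too far above $\lab{v}{a}$ before the loop in \algo{check-inc($v$)} catches it, and (iii) the symmetric statement that $\indegree{u}$ cannot drift too far below $\lab{u}{a}$ before \algo{check-dec($u$)} catches it. The present lemma is merely the algebraic conclusion, and the proof is essentially a one-line chain of inequalities identical in structure to the amortized case.
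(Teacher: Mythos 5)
Your proposal is correct and follows essentially the same route as the paper: the paper's proof is exactly the one-line chain of \reflemma{wc-label-v}, \reflemma{wc-labels}, and \reflemma{wc-label-u}, yielding $\indegree{v} \leq \alphamore^3 \indegree{u} + \bigO{1} + \bigO{\alpha}$ and hence $\parof{1+\bigO{\alpha},\bigO{1}}$-local optimality. Nothing is missing.
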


\begin{proof}
  For each arc $a = (u,v)$, by
  \cref{lemma:wc-labels,lemma:wc-label-v,lemma:wc-label-u},
  we have
  \begin{align*}
    \indegree{v} \leq \alphamore \lab{v}{a} + 1 + \alpha
    \leq                        %
    \alphamore^2 \lab{u}{a} + 2 + \bigO{\alpha}
    \leq                           %
    \alphamore^3 \indegree{u} + 3 + \bigO{\alpha},
  \end{align*}
  as desired.
\end{proof}

\subsection{Update times}

It remains to establish the running times claimed in
\reftheorem{worst-case-mixed-apx}. By employing the same auxiliary
data structures described in \refsection{amortized-running-time}, we
may assume that each step in the pseudocode takes constant time. The
analysis is then largely reduced to counting the total number of arcs
processed by \algo{check-inc} and \algo{check-dec}.

We first re-establish the amortized bounds.

\begin{lemma}
  \labellemma{wc-update-amortized}
  \labellemma{wc-amortized}
  Each edge insertion and edge deletion takes $\bigO{\log{n} /
    \eps^2}$ amortized time.
\end{lemma}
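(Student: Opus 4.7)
The plan is to adapt the credit-based charging scheme from Lemma~\ref{lemma:amortized-running-time} to the revised routines in Figure~\ref{figure:unweighted-worst-case-code}, since the structural changes (the stricter flip condition and the ``process up to $C/\alpha$ arcs'' loop body) only affect when arcs are considered, not the underlying accounting. As before, I first argue that with the auxiliary doubly linked lists described in Section~\ref{section:amortized-running-time}, each iteration of the loops in \algo{check-inc} and \algo{check-dec} takes constant time, so it suffices to bound the total number of arcs processed. Then I reuse the same distribution rule: whenever an in-degree $\indegree{x}$ changes by $1$ (as a direct effect of an insertion or deletion, or as a side-effect of a flip), distribute $\bigO{1/\alpha}$ credits uniformly across $\incut{x}$, so that each arc $a \in \incut{x}$ accrues $\bigOmega{1/(\alpha \indegree{x})}$ credit.

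The main step is to re-establish the claim that every arc $a = (u,v)$ touched in a loop iteration of \algo{check-inc($v$)} or \algo{check-dec($u$)} has accumulated at least one unit of credit since $\lab{v}{a}$ (respectively $\lab{u}{a}$) was last set. This is where the new data structure differs most: an arc is now eligible for processing at the lower threshold $\indegree{v} \geq (1+\alpha/2)\lab{v}{a}$ rather than $\indegree{v} > \alphamore \lab{v}{a} + 1$. Even so, because the label was previously set to the then-current value of $\indegree{v}$, by the time the new threshold is met, $\indegree{v}$ must have gained at least $(\alpha/2) \lab{v}{a} = \bigTheta{\alpha\, \indegree{v}}$ arcs, and each of those increments deposited $\bigOmega{1/(\alpha\, \indegree{v})}$ credit into $a$; summing yields $\bigOmega{1}$ as needed. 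The same reasoning, applied to decreases of $\indegree{u}$, covers \algo{check-dec($u$)}.

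With the claim in hand, the rest of the analysis follows the amortized argument verbatim. The new feature of processing up to $C/\alpha$ arcs per call is not a problem: every arc actually visited in a loop iteration satisfies the dangerous condition and so has $\bigOmega{1}$ credit, enough to pay for its $\bigO{1}$ processing cost; similarly, each flip is itself one of these processed arcs and is paid out of its own accumulated credit. The only unpaid cost per operation is the $\bigO{1/\alpha}$ credits freshly distributed for the single in-degree change caused directly by the \algo{insert} or \algo{delete}; the in-degree changes induced by subsequent flips merely redistribute credit within the system to be consumed by future operations. I expect the main subtlety to be the bookkeeping for this last point, namely confirming that the credits distributed by flip-induced in-degree changes are appropriately viewed as deposits against future work rather than as an extra amortized charge on the current operation, exactly as in the proof of Lemma~\ref{lemma:amortized-running-time}. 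Once this is verified, the amortized cost per update is $\bigO{1/\alpha} = \bigO{\log n / \eps^2}$.
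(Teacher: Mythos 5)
Your proposal is correct and takes essentially the same route as the paper, which proves this lemma by re-invoking the charging scheme of \reflemma{amortized-running-time} with exactly your two key observations: each update net-changes the in-degree of a single vertex by one, and any arc processed under the new $(1+\alpha/2)$-threshold has deviated from its label by a $(1+\Omega(\alpha))$-factor and hence carries $\Omega(1)$ credit, which also covers the up-to-$C/\alpha$ arcs handled per call. The one point to phrase carefully is the deposit rule: as in the paper, fresh credits are injected only for the single net in-degree change caused by the operation (flip-induced transient changes pass the increment along rather than triggering new $\bigO{1/\alpha}$ deposits), which is how your closing remark about ``redistribution'' should be made precise.
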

\begin{proof}
  The argument is similar to that of the amortized data structure
  (\reflemma{amortized-running-time}, \refsection{amortized}), so we
  restrict ourselves to a sketch.

  We first discuss edge insertion. Here the running time is
  proportional to the number of arcs processed in the loop of
  \algo{check-inc}. The key points to amortizing the number of arcs is
  as follows. First, each edge insertion results in the in-degree of
  exactly one vertex increasing by $1$. Second, in order for an arc
  $a = (u,v)$ to be processed in the loop \algo{check-inc($u$)},
  $\indegree{v}$ must have increased from $\lab{v}{a}$ by a
  $\parof{1 + \bigOmega{\alpha}}$-factor. These factors allow us to
  apply the same charging scheme as in
  \reflemma{amortized-running-time} to obtain the
  $\bigO{\log{n} / \eps^2}$ amortized running time.

  Edge deletion is similar. The running time is proportional to the
  total number of arcs flipped over all recursive calls. An arc is
  flipped only if $\indegree{u}$ has decreased by a
  $\parof{1+ \bigOmega{\alpha}}$-factor since $\lab{u}{a}$ was last
  set. Additionally each edge deletion results in decreasing the
  in-degree of exactly one vertex, by $1$. These factors allow us to
  apply the same charging scheme as in
  \reflemma{amortized-running-time} to obtain the
  $\bigO{\log{n}/\eps^2}$ amortized running time.
\end{proof}

Now we analyze worst-case bounds.
\begin{lemma}
  \labellemma{wc-insert-time}
  Each edge insertion and edge deletion takes
  \begin{math}
    \bigO{\log{\arboricity{G} + \log{n} / \eps} / \alpha^2} = %
    \bigO{\log{\lpopt + \log{n} / \eps} \log{n}^2 / \eps^4}
  \end{math}
  worst-case time.
\end{lemma}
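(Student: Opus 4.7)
The plan is to decompose the worst-case time for a single update into two factors: the maximum recursion depth of \algo{check-inc}/\algo{check-dec}, and the per-call work inside the loop. Each of these is bounded separately, and their product gives the claimed bound.

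First I would observe that the loop in each call of \algo{check-inc} (resp.\ \algo{check-dec}) processes at most $C/\alpha$ arcs by construction, and by the auxiliary data structures from \refsection{amortized-running-time}, each such iteration takes $\bigO{1}$ time. So the work inside a single (non-recursive) invocation is $\bigO{1/\alpha}$.

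Next I would bound the recursion depth. The key point (unlike the amortized data structure) is the revised flip condition: in \refstep{wc-check-inc-flip} we flip $a = (u,v)$ only when $\indegree{v} \geq \alphamore(\indegree{u}+1)$, and the analogous inequality gates flipping in \algo{check-dec}. Thus along a chain of recursive calls to \algo{check-inc($v_0$)}$\to$\algo{check-inc($v_1$)}$\to\cdots$, each step moves to a vertex whose current in-degree is smaller by a $\alphamore$-factor (after the flip restores $\indegree{v_i}$ to its pre-increment value). Symmetrically, along a chain of \algo{check-dec} calls each step moves to a vertex whose in-degree is larger by a $\alphamore$-factor. Since all in-degrees are non-negative integers and, by \reflemma{wc-local-optimality} combined with \reflemma{apx-local=>global}, bounded above by $\epsmore\lpopt + \bigO{\log n/\eps} = \bigO{\arboricity{G} + \log n/\eps}$, each recursive chain has length at most
\begin{align*}
  \bigO{\log[1+\alpha]{\arboricity{G} + \log n/\eps}}
  = \bigO{\log{\arboricity{G} + \log n/\eps}/\alpha}.
\end{align*}

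Multiplying the per-call cost $\bigO{1/\alpha}$ by the recursion depth $\bigO{\log{\arboricity{G} + \log n/\eps}/\alpha}$ gives the worst-case bound $\bigO{\log{\arboricity{G} + \log n/\eps}/\alpha^2}$, which under $\alpha = c\eps^2/\log n$ equals $\bigO{\log{\lpopt + \log n/\eps}\log^2 n/\eps^4}$, as claimed. The main subtlety here is that the recursion-depth argument relies on the global in-degree upper bound from \reflemma{apx-local=>global}, which in turn depends on the $(1+\bigO{\alpha},\bigO{1})$-local optimality established in \reflemma{wc-local-optimality}; once that is in place the accounting is straightforward, and the loop cap $C/\alpha$ built into the pseudocode makes the per-call work trivially bounded without any amortization.
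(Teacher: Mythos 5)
Your proposal is correct and matches the paper's proof: both decompose the worst-case cost into the $\bigO{1/\alpha}$ per-call loop bound (enforced by the $C/\alpha$ cap) times the recursion depth, and both bound the depth by observing that the strengthened flip condition forces the in-degree to change by a $\alphamore$-factor with each recursive call while the global in-degree is bounded by $\bigO{\arboricity{G} + \log n/\eps}$ via the local-optimality results. Your writeup is slightly more explicit about invoking \reflemma{apx-local=>global} to justify the in-degree upper bound, but the argument is the same.
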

\begin{proof}
  Consider first edge insertion.  The running time is proportional to
  the number of arcs processed by \algo{check-inc}. Each call to
  \algo{check-inc} processes at most $\bigO{1/\alpha}$ arcs and makes
  at most one recursive call. Each recursive call
  \algo{check-inc($v$)} increases $\indegree{v}$ by a
  $\alphamore$-approximate factor over the previous call, and
  $\indegree{v}$ is bounded above by
  $\bigO{\arboricity{G} + \log{n} / \eps}$ by
  \reflemma{wc-local-optimality}. Thus there are at most
  $\bigO{\log[1 + \alpha]{\arboricity{G} + \log{n} / \eps}} =
  \bigO{\log{\arboricity{G} + \log{n} / \eps} / \alpha}$ recursive
  calls.

  The running time for edge deletion follows by analyzing
  \algo{check-dec} similarly.
\end{proof}

\subsection{Fractional orientations}
\labelsection{worst-case-duplicate-space}

As with the data structure in \refsection{amortized}, the data
structure here can be modified to implicitly simulate duplicate arcs
without increasing the space usage. The adjustments are the same as in
\refsection{amortized} so we limit ourselves to a sketch. The main
idea is to use the same labels $\lab{u}{a}$ and $\lab{v}{a}$ for all
copies of the same arc $a$. That is, when we update the labels of one
copy of $a$ it automatically propagates to all copies of $a$. As
before, this does not create any issues because labels are only made
when they are safe \wrt the inequalities in
\cref{lemma:wc-labels,lemma:wc-label-u,lemma:wc-label-v}, and if it is
safe to relabel one copy of an arc, it is safe to relabel all copies
of the arc. The other adjustments discussed in
\refsection{amortized-duplicate-space} extend here in a
straightforward fashion.

\section{Improved worst-case updates for small arboricity}

\labelsection{small-arboricity}

For densest subgraph, we now have the following worst case update
times:
\begin{enumerate}
\item $\bigO{\log{n} \log{\lpopt} / \eps^4}$ time for a
  $\parof{1 + \eps, \log{n} / \eps}$-bicritera approximation.
\item $\bigO{\log{n}^3 \parof{\log{\lpopt} + \log \log{n}} / \eps^6}$
  time for a $\parof{1 + \eps}$-factor approximation.
\end{enumerate}

Note that the first data structure has a faster update time but the
additive error implies that it is only good when density is
$\bigOmega{\log{n} / \eps^2}$.  The second running time is slower
because we implicitly duplicate edges to artificially increase the
arboricity to be at least $\bigOmega{\log{n} / \eps^2}$.  Now, in the
small-arboricity regime that necessitates the slower running time, the
$\log{\lpopt}$-factor is also negligible and the second running time
is closer to $\bigO{\log{n}^3 / \eps^6}$. So we have faster running
times in the ``high-aroboricity'' and ``low-arboricity'' regimes,
taken separately. The goal in this section is to unify these ideas and
obtain a faster running time over all.

To obtain a faster worst-case running time for densest subgraph, we
will build on ideas in the the previous section to develop a data
structure whose output is only valid in the low-arboricity
regime. This data structure will then be run in parallel with the
faster data structure mentioned above that does not duplicate edges,
which is both (a) accurate in the high-arboricity regime and (b)
correctly signals if we are in a high- or low-arboricity setting.
Overall, the data structure will still retain linear space and the
same amortized running times. The bounds we obtain for the
low-arboricity setting is as follows.

\begin{theorem}
  \labeltheorem{low-arboricity} Let $\eps, T > 0$ be given with $\eps$
  sufficiently small and $T > \bigOmega{\log{n} / \eps^2}$.  In
  $\bigO{\log{n} / \eps^2}$ amortized time and
  $\bigO{\log{n}^2 \log{T} / \eps^4}$ worst-case time per edge
  insertion or deletion, one can maintain an orientation of $G$, and
  (implicity) the vertices of a subgraph of $G$, such that if
  $\lpopt \leq T$, then the maximum in-degree is at most
  $\epsmore \lpopt + \bigO{\log{n} / \eps}$, and the density of the
  subgraph is at least $\epsless \lpopt - \bigO{\log{n} / \eps}$.
\end{theorem}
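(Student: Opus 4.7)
The plan is to take the worst-case data structure of \reftheorem{wc-mixed-apx} and add a hard cap $L = \Theta(\log T/\alpha)$ on the recursion depth of \algo{check-inc} and \algo{check-dec}: whenever a recursive call would descend past depth $L$, it simply returns. Each outer \algo{insert} or \algo{delete} then traverses at most $L$ nested recursive calls, each doing $\bigO{1/\alpha}$ work on arc labels using the same sorted-list auxiliary structure as in \refsection{amortized-running-time}. The worst-case update time becomes $\bigO{L/\alpha} = \bigO{\log^2 n \log T /\eps^4}$ unconditionally. The amortized bound of $\bigO{\log n/\eps^2}$ is inherited from \reflemma{wc-amortized}, whose charging scheme pays for work on each arc out of credits accumulated by drift of $\indegree{v}$ from $\lab{v}{a}$ and bounds total work rather than recursion depth; truncating the recursion can only decrease the work actually performed. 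Space stays $\bigO{m+n}$ since we only added a depth counter per recursive frame.

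The main claim is that when $\lpopt \le T$ the cap is never reached, so the modified data structure is input-output identical to the unmodified one of \refsection{worst-case} and inherits its correctness guarantees: $\parof{1 + \bigO{\alpha}, \bigO{1}}$-local optimality by \reflemma{wc-local-optimality}, a maximum in-degree of at most $\epsmore \lpopt + \bigO{\log n/\eps}$ by \reflemma{apx-local=>global}, and the implicit approximate densest subgraph extracted as in \refsection{local-optimality}. I would prove this by induction on the update sequence, with the inductive hypothesis that the label invariants of \cref{lemma:wc-labels,lemma:wc-label-u,lemma:wc-label-v} hold before the update, which (under $\lpopt \le T$) yields $\max_v \indegree{v} \le M := \epsmore T + \bigO{\log n/\eps} = \bigO{T}$, using $T = \bigOmega{\log n/\eps^2}$. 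The inductive step reduces to showing that the untrunctated recursion during a single update would halt within $L$ levels. For \algo{check-inc}, a recursive call from $v$ to $u$ comes only after flipping an arc $(u,v)$ with $\indegree{v} \ge \alphamore(\indegree{u}+1)$, so the relevant in-degree contracts by a $(1+\alpha)$-factor per level; starting from $\indegree{v} \le M + \bigO{1}$, the recursion halts in $\log_{1+\alpha} M = \bigO{\log T/\alpha}$ levels, which is at most $L$ for a sufficiently large constant in the definition of $L$. The symmetric argument for \algo{check-dec} has in-degrees growing by a $(1+\alpha)$-factor per level, again bounded above by $M + \bigO{1}$.

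I expect the main obstacle to lie in the careful bookkeeping for the inductive step, specifically in accounting for the $\bigO{1}$ in-degree perturbations caused by the triggering edge update together with the flips performed in earlier levels of the same recursion, which temporarily push some in-degrees slightly above $M$. These transient deviations are absorbed into the constant hidden in $L$, since they contribute only an additive $\bigO{1}$ inside the $\log_{1+\alpha}$. A minor additional step is to verify that the fractional-orientation extension of \refsection{worst-case-duplicate-space} composes with the truncation without further change; this follows because duplicated copies of an arc share labels and the recursion-depth bound is driven by the label invariants alone.
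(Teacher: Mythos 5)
There is a genuine gap at the heart of your argument, namely the claim that ``when $\lpopt \le T$ the cap is never reached, so the modified data structure is input--output identical to the unmodified one.'' The theorem is about a fully dynamic sequence in which $\lpopt$ fluctuates above and below $T$ (this is exactly how it is used later, running in parallel with \reftheorem{wc-mixed-apx}): the correctness guarantee must hold at every moment at which $\lpopt \le T$, including moments reached after a long epoch during which $\lpopt \gg T$. During such an epoch the in-degrees grow far beyond $M = \bigO{T}$, your depth cap $L = \Theta(\log T/\alpha)$ \emph{is} hit, and the label invariants of \cref{lemma:wc-labels,lemma:wc-label-u,lemma:wc-label-v} get violated. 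Nothing in your scheme repairs these ``rogue'' arcs when deletions later bring $\lpopt$ and the degrees back below $T$, so the inductive hypothesis you need (``the label invariants hold before the update'') cannot be re-established at precisely the times the theorem is supposed to cover. Your time accounting (worst case $\bigO{L/\alpha}$ from the cap, amortized bound inherited from the charging scheme) is fine; it is the conditional correctness that breaks, and this is exactly the difficulty the paper flags: a priori the structure can have in-degrees greater than $T$ even when $\lpopt \le T$, and an arc violating local optimality need not correct itself when its endpoints' in-degrees fall back below $T$.

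The paper resolves this differently: instead of capping the recursion depth, it truncates the quantity the invariants are stated about, replacing $\indegree{v}$ by $\indegreeT{v} = \min\setof{\indegree{v}, T}$ everywhere---in the labels, the processing tests, and the flip tests (\reffigure{truncated-code}). Since truncated in-degrees never exceed $T$, the recursion depth is automatically $\bigO{\log{T}/\alpha}$ without any explicit cap, and---crucially---the truncated label invariants (\cref{lemma:sa-labels,lemma:sa-label-u,lemma:sa-label-v}) are maintained \emph{unconditionally}, at all times, even while $\lpopt > T$. Correctness at the times when $\lpopt \le T$ then follows from a separate lemma (the truncated analogue of \reflemma{apx-local=>global}) showing that truncated local optimality already implies the global bound on $\max_v \indegreeT{v}$. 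To salvage your approach you would need an invariant of this always-maintained kind, or some repair mechanism for arcs corrupted while the cap was active; the bare depth cap provides neither.
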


Before proving \reftheorem{low-arboricity}, we complete the discussion
of how to apply it to obtain faster worst-case updates for dynamic
densest subgraph in general. We first observe that, like the data
structures in the previous section, the additive
$\bigO{\log{n} / \eps}$-factor can be removed by implicitly
duplicating each edge $\bigO{\log{n} / \eps^2}$ times.  This increases
the running times by a $\bigO{\log{n} / \eps^2}$ factor but keeps
everything else (and in particular the space) the same. (To maintain
linear space usage, one makes the exact same adjustments as in
\refsection{worst-case-duplicate-space}.) Second, we set
$T = \bigO{\log{n} / \eps^2}$, since above this threshold we already
faster update times. Putting these ideas together we obtain the
following.

\begin{corollary}
  \labelcorollary{log-arboricity} Let $G$ be an undirected graph over
  $n$ vertices dynamically updated by edge insertions and
  deletions. Let $\eps > 0$ be sufficiently small. Let
  $T = \bigO{\log{n} / \eps^2}$. In $\bigO{\log{n}^2 / \eps^4}$
  amortized time and
  \begin{math}
    \bigO{\log{n}^3 \parof{\log \log{n} + \log{1/\eps}} / \eps^6}
  \end{math}
  worst-case time per edge insertion or deletion, one can maintain a
  fractional orientation with the following property.
  \begin{enumerate}
  \item If the maximum in-degree is at least $\epsmore T$, then
    $\lpopt \geq T$.
  \item If the maximum in-degree is at most $\epsmore T$, then the
    maximum in-degree is at most an $\epsmore \lpopt$. In this case,
    the data structure also implicitly provides a list-representation
    of the vertices of a $\epsless$-approximate densest subgraph.
  \end{enumerate}
\end{corollary}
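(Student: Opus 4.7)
The plan is to invoke \reftheorem{low-arboricity} with threshold $T_0 = \bigO{\log{n}/\eps^2}$ (large enough to satisfy the hypothesis $T > \bigOmega{\log{n}/\eps^2}$) and combine it with the implicit edge-duplication trick of \refsection{worst-case-duplicate-space}. I would duplicate each edge $k = \bigO{\log{n}/\eps^2}$ times, recorded implicitly via per-arc counters with a single shared pair of labels per arc, so that space stays $\bigO{m+n}$. On the duplicated graph I run the low-arboricity data structure with internal threshold $kT_0$. Because $\lpopt$ of the duplicated graph equals $k\cdot\lpopt(G)$, the hypothesis ``$\lpopt \leq kT_0$'' of \reftheorem{low-arboricity} is exactly ``$\lpopt(G) \leq T_0$'', and dividing the integral in-degree bound by $k$ makes the fractional max in-degree $M$ satisfy $M \leq \epsmore \lpopt(G) + \bigO{\log{n}/(k\eps)} = \epsmore \lpopt(G) + \bigO{\eps}$ whenever $\lpopt(G) \leq T_0$. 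Since $\lpopt(G) \geq 1/2$ as soon as one edge is present, the $\bigO{\eps}$ additive error is absorbed into $\epsmore\lpopt(G)$ after rescaling $\eps$ by a constant factor.

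For the running times, each real edge update translates to $k$ updates on the duplicated graph, so the bounds of \reftheorem{low-arboricity} are each multiplied by $k$. The amortized cost becomes $\bigO{\log{n}/\eps^2}\cdot k = \bigO{\log{n}^2/\eps^4}$. For the worst case, $\log{kT_0} = \log \bigO{\log{n}^2/\eps^4} = \bigO{\log\log{n} + \log{1/\eps}}$, and hence the worst-case cost is $\bigO{\log{n}^2 \log{kT_0}/\eps^4}\cdot k = \bigO{\log{n}^3 \parof{\log\log{n} + \log{1/\eps}}/\eps^6}$, matching the claim.

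For the dichotomy of the two cases (taking $T$ in the corollary equal to $T_0$), observe that any fractional orientation has max load at least $\lpopt(G)$, so $M \geq \lpopt(G)$ holds unconditionally. If $\lpopt(G) \leq T_0$, the guarantee above yields $M \leq \epsmore \lpopt(G) \leq \epsmore T_0$; contrapositively, $M > \epsmore T_0$ forces $\lpopt(G) > T_0$, which is part (1). For part (2), suppose $M \leq \epsmore T_0$. Either $\lpopt(G) \leq T_0$, in which case the guarantee gives $M \leq \epsmore \lpopt(G)$ directly, or $\lpopt(G) > T_0$, in which case $M \leq \epsmore T_0 < \epsmore \lpopt(G)$ using $T_0 < \lpopt(G)$. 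Either way $M \leq \epsmore \lpopt(G)$. The implicit list representation of the approximate densest subgraph is inherited from \reftheorem{low-arboricity}, along the lines described at the end of \refsection{local-optimality}.

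The main conceptual step is the case split in the last paragraph: it shows that $M$ alone certifies which regime we are in, because in the high-arboricity regime $M$ itself is a witness that $\lpopt(G)$ exceeds $T$, and in the low-arboricity regime \reftheorem{low-arboricity} applies directly. Beyond this, the proof is a mechanical combination of the duplication bookkeeping from \refsection{worst-case-duplicate-space} and the bounds of \reftheorem{low-arboricity}, and I do not anticipate a nontrivial technical obstacle.
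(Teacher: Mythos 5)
Your proposal is correct and follows essentially the same route as the paper: apply \reftheorem{low-arboricity} with threshold $T=\bigO{\log n/\eps^2}$, remove the additive $\bigO{\log n/\eps}$ error by implicitly duplicating each edge $\bigO{\log n/\eps^2}$ times using the shared-label bookkeeping of \refsection{worst-case-duplicate-space}, and pay the corresponding $\bigO{\log n/\eps^2}$ factor in the update times. The only thing you add beyond the paper's brief derivation is the explicit case split establishing the two-part dichotomy, which the paper leaves implicit and which you argue correctly.
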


By running the data structure in \refcorollary{log-arboricity} in
parallel with the data structure from
\reftheorem{worst-case-mixed-apx}, we obtain the following bounds for
dynamically approximating the densest subgraph.

\begin{corollary}
  \labelcorollary{wc-best}
  Let $G$ be an undirected graph over $n$ vertices dynamically updated
  by edge insertions and deletions. Let $\eps > 0$ be sufficiently
  small.  Then one can maintain an $\epsless$-approximation of the
  density, and an implicit list-representation of the vertices of an
  $\epsless$-approximate densest subgraph, with linear space and
  within the following time bounds.
  \begin{mathresults}
  \item $\bigO{\log{n} / \eps^2}$ amortized time per edge insertion or
    deletion.
  \item \begin{math}
      \bigO{\frac{\log{n} \log{\lpopt}}{\eps^4} +
        \frac{\log{n}^3 \parof{\log \log{n} + \log{1/\eps}}}{\eps^6}}
    \end{math}
    worst-case time per edge insertion or deletion.
  \end{mathresults}
\end{corollary}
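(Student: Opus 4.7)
The plan is to run two data structures in parallel, each accurate in a different density regime, and route each density query to whichever one is currently valid. Fix a threshold $T = c \log{n} / \eps^2$ for a sufficiently large constant $c$. Let $D_1$ be the data structure of \reftheorem{worst-case-mixed-apx} applied to $G$ directly (without edge duplication); it maintains an integral orientation with maximum in-degree $\mu_1 \leq \epsmore \lpopt + \bigO{\log{n} / \eps}$ within the time and space bounds of that theorem. Let $D_2$ be the data structure of \refcorollary{log-arboricity} with threshold $T$; it either certifies $\lpopt \geq T$ or produces a genuine $\epsless$-approximation, within the bounds of that corollary.

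On each edge insertion or deletion we forward the update to both $D_1$ and $D_2$. Since both use $\bigO{m+n}$ space, the combined space is linear. The combined worst-case (and amortized) update time per operation is at most the sum of the corresponding bounds from $D_1$ and $D_2$, and after absorbing lower-order terms this matches the expressions claimed in the corollary.

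To answer a density query, inspect the maximum in-degree $\mu_2$ reported by $D_2$. If $\mu_2 \leq \epsmore T$, return $\mu_2$: by part~2 of \refcorollary{log-arboricity} this is already a $\epsless$-approximation to $\lpopt$. Otherwise part~1 of the same corollary certifies $\lpopt \geq T$, and we return $\mu_1$ from $D_1$. In this case $\lpopt \geq T = \bigOmega{\log{n}/\eps^2}$, so the additive slack $\bigO{\log{n}/\eps}$ in $D_1$'s guarantee is itself at most $\bigO{\eps} \lpopt$, giving $\mu_1 \leq \parof{1 + \bigO{\eps}} \lpopt$; rescaling $\eps$ by a constant yields the stated $\epsless$-approximation. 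An approximate densest subgraph is reported analogously from whichever data structure is active, since each maintains implicitly the list of vertices in decreasing order of in-degree together with the set-size counts from the proof of \reflemma{local-optimality}.

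I expect no genuine obstacle: the argument essentially just wires the two data structures together through their query interfaces. The only point that warrants care is the calibration of $T = \Theta(\log{n}/\eps^2)$, which must simultaneously keep $\log{T}$ within $\bigO{\log \log{n} + \log{1/\eps}}$ (so that $D_2$'s worst-case time stays within the stated bound) and be large enough that $D_1$'s additive $\bigO{\log{n}/\eps}$ slack is absorbed into an $\eps$-multiplicative factor of $\lpopt$ in the high-arboricity regime. Both constraints are met by the same choice of $T$, so the transition across the boundary $\mu_2 \approx \epsmore T$ preserves $\epsless$-approximation quality throughout.
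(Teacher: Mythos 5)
Your proposal is correct and takes essentially the same route as the paper, whose proof of this corollary is exactly the one-line combination you describe: run the data structure of \refcorollary{log-arboricity} in parallel with that of \reftheorem{worst-case-mixed-apx}, and use the former's threshold test to decide which structure's answer is currently valid. Your write-up merely makes explicit the details the paper leaves implicit, namely the choice $T = \Theta(\log n/\eps^2)$, the query routing via the reported maximum in-degree, and the absorption of the additive $\bigO{\log n/\eps}$ slack into an $\bigO{\eps}\cdot\lpopt$ term once $\lpopt \geq T$.
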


We now focus on proving \reftheorem{low-arboricity} for the remainder
of this section.

\subsection{High-level overview}

\begin{figure}[t]
  \begin{framed}
    \small
    \begin{algorithm}{check-inc}{$v$}
      \begin{blockcomment}
        We call this routine whenever $\indegreeT{v}$ has increased
        (always by $1$).
      \end{blockcomment}
    \item \labelstep{truncated-check-inc-loop} For up to $C/\alpha$ arcs
      $a = (u,v) \in \incut{v}$ s.t.\
      \begin{math}
        \indegreeT{v} \geq \parof{1 + \alpha / 2} \lab{v}{a},
      \end{math}
      in increasing order of $\lab{v}{a}$, for a sufficiently large
      constant $C$:
      \begin{steps}
      \item If
        $\indegreeT{v} \geq \alphamore \parof{\indegreeT{u} + 1}$:
        \begin{steps}
        \item \labelstep{truncated-check-inc-flip} Flip $a$ to $(v,u)$ and
          set $\lab{u}{a} = \indegreeT{u}$ and
          $\lab{v}{a} = \indegreeT{v}$.
          \begin{blockcomment}
            This restores $\indegreeT{v}$ to its previous value.
          \end{blockcomment}
        \item Call \algo{check-inc($u$)} and return.
        \end{steps}
      \item \labelstep{truncated-increase-label} Otherwise set
        $\lab{v}{a} = \indegreeT{v}$ and $\lab{u}{a} = \indegreeT{u}$.
      \end{steps}
    \end{algorithm}

    \smallskip

    \begin{algorithm}{check-dec}{$u$}
      \begin{blockcomment}
        We call this routine whenever $\indegreeT{u}$ has decreased
        (always by $1$).
      \end{blockcomment}
    \item \labelstep{truncated-check-dec-loop} For up to $C/\alpha$ arcs
      $a = (u,v) \in \outcut{u}$ s.t.\
      \begin{math}
        \indegreeT{u} \geq \parof{1 + \alpha / 2} \lab{u}{a},
      \end{math}
      in increasing order of $\lab{u}{a}$, for a sufficiently large
      constant $C$: 
      \begin{steps}
      \item If
        $\indegreeT{v} \geq \alphamore \parof{\indegreeT{u} + 1}$:
        \begin{steps}
        \item \labelstep{truncated-check-dec-flip} Flip $a$ to $(v,u)$ and
          set $\lab{u}{a} = \indegreeT{u}$ and
          $\lab{v}{a} = \indegreeT{v}$.
          \begin{blockcomment}
            This restores $\indegreeT{v}$ to its previous value.
          \end{blockcomment}
        \item Call \algo{check-inc($u$)} and return.
        \end{steps}
      \item \labelstep{truncated-decrease-label} Otherwise set
        $\lab{v}{a} = \indegreeT{v}$ and $\lab{u}{a} = \indegreeT{u}$.
      \end{steps}
    \end{algorithm}
  \end{framed}
  \vspace{-1em}
  \caption{Revised versions of \algo{check-inc} and \algo{check-dec}
    (cf.\ \reffigure{unweighted-worst-case-code}) that only maintains
    a locally optimal orientation whenever the arboricity is less than
    $T$.\labelfigure{truncated-code}\labelfigure{small-arboricity-code}}
\end{figure}

The new data structure takes the data structure from the previous
section and incorporates one simple idea: a threshold $T$.
For a vertex $v$, let
\begin{align*}
  \indegreeT{v} \defeq \min{\indegree{v}, T}.
\end{align*}
$\indegreeT{v}$ truncates the in-degree of $v$ to be at most $T$.

Rather than trying to minimize the maximum in-degree $\indegree{v}$,
the new data structure tries to minimize the maximum \emph{truncated}
in-degree $\indegreeT{v}$. Local optimality conditions are redefined
in terms of $\indegreeT{v}$, and we now only adjust arcs when we find
errors \wrt $\indegreeT{v}$. An important consequence is that
recursive calls effectively end at vertices with in-degree greater
than $T$. This effectively replaces ``$\log{\lpopt}$''-factors in
the running times from \refsection{worst-case} with
``$\log{T}$''-factors, even when $\arboricity{G}$ is much greater than
$T$.

We now describe the changes more precisely.  Let $T > 0$ be a fixed
value; $T$ should be at least $c_T \log{n} / \eps$ for a sufficiently
large constant $c_T > 0$. As before, let $\alpha = \eps^2 /
\log{n}$. The labels $\lab{u}{a}$ and $\lab{v}{a}$ of an arc $a$ are
now set to $\indegreeT{u}$ and $\indegreeT{v}$.  In
\algo{check-inc($v$)}, we only process an arc $a = (u,v)$ if
$\indegreeT{v}$ significantly exceeds $\lab{v}{a}$, and then we only
flip $a$ if $\indegreeT{v}$ significantly exceeds $\indegreeT{u}$.
(This is as opposed to acting on $\indegree{v}$ and $\indegree{u}$ in
the previous section.)  Likewise, in \algo{check-dec($u$)}, we only
process an arc $a = (u,v)$ if $\indegreeT{u}$ is significantly less
than $\lab{u}{a}$ and then we flip $a$ if $\indegreeT{u}$ is
significantly less than $\indegreeT{v}$. Note that the labels are
always at most $T$, and the data structure does not flip arcs when the
concerned vertex has in-degree much greater than $T$. In particular we
permit violations of the local optimality criteria when the in-degrees
exceed $v$.

Incorporating the threshold $T$ has the running time advantage of
ending the recursion calls when the in-degrees start to exceed
$T$. Consequently the recursive depth is reduced from
$\bigO{\log{\arboricity{G}} / \alpha}$ to $\bigO{\log{T} /
  \alpha}$. However it raises technical issues as well.  As mentioned
above, the data structure will certainly not maintain local optimality
for arcs when the in-degrees are larger than $T$. We need to redefine
a new notion of local optimality for the truncated in-degrees, and
show that they are sufficient for global optimality when
$\lpopt \leq T$.  It is also no longer clear how violations to large
in-degree might corrupt arcs where the in-degrees and labels are below
$T$. For example, it is not clear that a ``rogue'' arc violating local
optimality in the original sense will correct itself when the
in-degrees of its endpoints fall below $T$.

The pseudocode is presented in \reffigure{truncated-code}. It is
obtained by taking the pseudocode in the previous section and
replacing $\indegree{\cdots}$ with $\indegreeT{\cdots}$ everywhere.

\subsection{A truncated local optimality condition}

\labelsection{truncated-local-optimality}

We first address the issue of whether maintaining local optimality for
arcs below the threshold $T$ suffices to obtain global optimality, at
least in the restricted setting where $\lpopt \leq \apxless T$. (Note
that \emph{a priori} the data structure can have in-degrees greater
than $T$ even if $\lpopt \leq T$, as $\lpopt$ fluctuates above and
below $T$.)  The following lemma is similar to
\reflemma{apx-local=>global} except $\indegreeT{\cdots}$ takes the
role of $\indegree{\cdots}$.

\begin{lemma}
  Let $\alpha, \beta > 0$ with $\alpha = o(\log{n})$.  Let $T > 0$
  Suppose that for every arc $a = (u,v)$, we have
  \begin{math}
    \indegreeT{v} \leq \alphamore \indegreeT{u} + \bigO{1}.
  \end{math}
  Let $\mu = \max_v \indegreeT{v}$.  Then
  \begin{align*}
    \mu \leq e^{\bigO{\sqrt{\alpha \log{n}}}} \parof{\lpopt +
    \bigO{\sqrt{\frac{\log{n}}{\alpha}}} \beta}.
  \end{align*}
  In particular, for $\beta = \bigO{1}$, and
  $\alpha \leq c \eps^2 / \log{n}$ for a sufficiently small constant
  $c > 0$, we have
  \begin{align*}
    \mu \leq \epsmore \lpopt + \bigO{\log{n} / \eps}.
  \end{align*}
\end{lemma}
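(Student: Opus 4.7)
The plan is to mirror the proof of \reflemma{apx-local=>global} almost verbatim, substituting $\indegreeT{v}$ for $\indegree{v}$ throughout, and then to exploit the fact that truncation is invisible below the threshold $T$ in order to translate the hypothesis on truncated in-degrees into honest information about arcs of the underlying orientation $D$. The final specialization to $\alpha = c \eps^2 / \log n$, $\beta = \bigO{1}$ then follows by exactly the same algebra as in \reflemma{apx-local=>global}.

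First I would define the geometric sequence $\mu_0 = 0$ and $\mu_i = \alphamore \mu_{i-1} + \beta$, pick $k$ with $\mu_{k-1} \leq \mu < \mu_k$, and note the crude bound $\mu_{k-i} \geq e^{-\alpha i}\mu - i\beta$. Next, I would set $S_i = \setof{v : \indegreeT{v} \geq \mu_{k-i}}$. The hypothesized truncated local optimality $\indegreeT{v} \leq \alphamore \indegreeT{u} + \bigO{1}$ yields $\inneighbors{S_i} \subseteq S_{i+1}$: if $v \in S_i$, then $\indegreeT{u} \geq (\indegreeT{v} - \bigO{1})/\alphamore \geq (\mu_{k-i} - \beta)/\alphamore = \mu_{k-i-1}$ (absorbing the $\bigO{1}$ additive slack into $\beta$), so $u \in S_{i+1}$.

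The only step that genuinely needs the truncation hypothesis is the lower bound on $\sizeof{E(S_{i+1})}$. Here the key observation is that $\mu = \max_v \indegreeT{v} \leq T$ by definition, hence every $\mu_{k-i} \leq \mu$ also satisfies $\mu_{k-i} \leq T$; consequently $\indegreeT{v} \geq \mu_{k-i}$ already forces $\indegree{v} \geq \mu_{k-i}$, since otherwise truncation could not pull $\indegreeT{v}$ down to exceed $\mu_{k-i}$. Thus every $v \in S_i$ is the head of at least $\mu_{k-i}$ arcs of $D$, all of whose tails lie in $S_{i+1}$. The standard pigeonhole on the chain $S_1 \subseteq S_2 \subseteq \cdots$ of subsets of $V$ then yields an index $i = \bigO{\log n / \eps}$ with $\sizeof{S_{i+1}} \leq \epsmore \sizeof{S_i}$, whence
\[
  \lpopt \geq \frac{\sizeof{E(S_{i+1})}}{\sizeof{S_{i+1}}} \geq \frac{\sizeof{S_i}}{\sizeof{S_{i+1}}}\, \mu_{k-i} \geq \frac{1}{1+\eps}\parof{e^{-\bigO{\alpha \log n / \eps}}\mu - \bigO{\log n / \eps}\,\beta},
\]
and rearranging reproduces the first displayed bound. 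Choosing $\eps = \Theta(\sqrt{\alpha \log n})$ and then $\alpha = c\eps^2/\log n$, $\beta = \bigO{1}$ gives the stated specialization $\mu \leq \epsmore \lpopt + \bigO{\log n/\eps}$.

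The main obstacle, such as it is, is conceptual rather than computational: one must verify that a priori ``rogue'' vertices with $\indegree{v} > T$ do not spoil the counting. The observation $\mu \leq T$ handles this in one line, because such vertices simply sit above every $S_i$ in the chain and therefore never appear on either side of the counting inequality for $\sizeof{E(S_{i+1})}/\sizeof{S_{i+1}}$. Once this is noted, the argument collapses to that of \reflemma{apx-local=>global} verbatim.
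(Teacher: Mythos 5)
Your proposal is correct and follows essentially the same route as the paper, which simply reruns the proof of \reflemma{apx-local=>global} with $\indegreeT{\cdot}$ in place of $\indegree{\cdot}$; your added observation that $\indegree{v} \geq \indegreeT{v}$ (so every $v \in S_i$ really heads at least $\mu_{k-i}$ arcs of $D$) is exactly the one point where truncation could have caused trouble, and it is handled correctly.
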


\begin{proof}[Proof sketch]
  The claim follows from the exact same proof as
  \reflemma{apx-local=>global}, except with $\mu$ now equal to
  $\max_v \indegreeT{v}$ instead of $\max_v \indegree{v}$.
\end{proof}

\subsection{Maintaining truncated local optimality}

We now show that the data structure maintains the truncated local
optimality conditions described in
\refsection{truncated-local-optimality}, for
$\alpha = \bigO{\eps^2 / \log{n}}$ and $\beta = \bigO{1}$. The
high-level structure is similar to the analysis for the previous data
structure in \refsection{wc-optimality}, and the details of the proofs
are similar as well.

\begin{lemma}
  \labellemma{sa-labels}
  For every arc $a = (u,v)$, we have
  \begin{align*}
    \lab{v}{a} \leq \alphamore \parof{\lab{u}{a} + 1}.
  \end{align*}
\end{lemma}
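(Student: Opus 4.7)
The plan is to mirror the one-sentence proof of Lemma~\reflemma{wc-labels} with the substitution $\indegree{\cdot} \mapsto \indegreeT{\cdot}$ throughout. I will trace the three places where the labels $\lab{u}{a}$ and $\lab{v}{a}$ are assigned --- on edge insertion, in the \emph{flip} branches of \algo{check-inc} and \algo{check-dec}, and in the \emph{otherwise} branches --- and show that in each case the assignment certifies $\lab{v}{a} \leq \alphamore(\lab{u}{a} + 1)$ at the moment it is written.

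For the \emph{otherwise} branches (steps~\refstep{truncated-increase-label} and~\refstep{truncated-decrease-label}), the argument is essentially by inspection: we only reach these lines after the test $\indegreeT{v} \geq \alphamore(\indegreeT{u} + 1)$ fails, so $\indegreeT{v} \leq \alphamore(\indegreeT{u} + 1)$, and therefore $\lab{v}{a} = \indegreeT{v} \leq \alphamore(\indegreeT{u} + 1) = \alphamore(\lab{u}{a} + 1)$. For edge insertion, the new arc is oriented toward the endpoint with smaller truncated in-degree (the insertion routine from \reffigure{unweighted-amortized-code} carried over with $\indegree{\cdot} \mapsto \indegreeT{\cdot}$), so the assigned labels satisfy $\lab{v}{a} \leq \lab{u}{a}$ directly.

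The step I expect to take the most care is the flip branch, because the orientation of $a$ is reversed so the identities of the head and tail swap. Let $\indegreeT{u}, \indegreeT{v}$ denote the truncated in-degrees just before the flip, and $\indegreeT{u}', \indegreeT{v}'$ the values immediately after; the labels are then set to the post-flip truncated in-degrees of the new head $u$ and new tail $v$. The flip fires only when the pre-flip inequality $\indegreeT{v} \geq \alphamore(\indegreeT{u} + 1)$ holds. Truncation at $T$ can only damp the change, so one always has $\indegreeT{u}' \leq \indegreeT{u} + 1$ and $\indegreeT{v}' \geq \indegreeT{v} - 1$. Writing the desired conclusion for the flipped arc (head $u$, tail $v$) requires $\indegreeT{u}' \leq \alphamore(\indegreeT{v}' + 1)$, which follows from the chain
\begin{math}
  \alphamore(\indegreeT{v}' + 1) \geq \alphamore \indegreeT{v} \geq \alphamore^2(\indegreeT{u} + 1) \geq \indegreeT{u} + 1 \geq \indegreeT{u}'.
\end{math}
This exhausts the cases, so the invariant is preserved by every assignment the data structure makes, which is exactly what the lemma asks for.
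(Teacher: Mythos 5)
Your proposal is correct and takes essentially the same route as the paper's one-sentence sketch (which defers to the proof of \reflemma{wc-labels}): trace every place the labels are written and verify the inequality is a consequence of the guard that was just tested. You have spelled out exactly the case analysis that the paper compresses, and the flip-branch chain of inequalities correctly accounts for the fact that truncation can only dampen the $\pm 1$ change in in-degree, so the argument is sound. One tiny imprecision: in the edge-insertion case the labels are assigned after the arc is added (so $\indegreeT{v}$ has already ticked up by one), giving $\lab{v}{a} \leq \lab{u}{a} + 1$ rather than $\lab{v}{a} \leq \lab{u}{a}$ as you wrote --- but since the lemma only needs $\lab{v}{a} \leq \alphamore\parof{\lab{u}{a}+1}$, this slack is absorbed and the conclusion stands.
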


\begin{proof}[Proof sketch.]
  The proof ideas are essentially the same as \reflemma{wc-labels}. In
  short, $\lab{u}{a}$ and $\lab{v}{a}$ are set to $\indegreeT{u}$ and
  $\indegreeT{v}$ only in situations where the claimed inequality is
  satisfied.
\end{proof}

\begin{lemma}
  \labellemma{sa-label-v}
  For every arc $a = (u,v)$, we have
  \begin{align*}
    \indegreeT{v} \leq \alphamore \parof{\lab{v}{a} + 1}
  \end{align*}
\end{lemma}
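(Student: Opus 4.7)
The plan is to follow the argument of \reflemma{wc-label-v} essentially verbatim, with $\indegreeT{v}$ playing the role of $\indegree{v}$ throughout. As before, I would call $a$ \emph{bad} if it violates the claimed inequality and \emph{dangerous} if $\indegreeT{v} \geq (1+\alpha/2)\lab{v}{a}$; the goal is to show $a$ is never bad. Since $\lab{v}{a}$ is always (re)set to the current value of $\indegreeT{v}$, the arc $a$ starts out non-dangerous, and to become bad it must first become dangerous, at which point it is eligible for processing in the loop of \algo{check-inc($v$)}.

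Before the main argument, I would dispatch two edge cases that the truncation introduces. If $\lab{v}{a} = T$, then $a$ can never be dangerous, since dangerous would require $\indegreeT{v} \geq (1+\alpha/2)T > T$, which is impossible. If instead $\alphamore \parof{\lab{v}{a}+1} > T$, then the claim holds trivially because $\indegreeT{v} \leq T$. So I may assume $\lab{v}{a} < T$ and $\alphamore \parof{\lab{v}{a}+1} \leq T$; in this regime the transition from dangerous to bad requires $\indegreeT{v}$ to strictly increase by $\Omega(\alpha \lab{v}{a} + 1)$, and crucially each unit increment to $\indegreeT{v}$ triggers a call to \algo{check-inc($v$)} (since $\indegreeT{v}$ only changes when $\indegree{v} < T$).

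The remainder of the argument mirrors \reflemma{wc-label-v}. I would define $b \in \incut{v}$ to be \emph{higher-priority} than $a$ if $\lab{v}{b} \leq \lab{v}{a}$. Because $\lab{v}{a} < T$, every such $b$ had its label set at some time $t_b$ when $\indegreeT{v}(t_b) = \lab{v}{b} \leq \lab{v}{a} < T$, so $\indegree{v}(t_b) = \indegreeT{v}(t_b) \leq \lab{v}{a}$; ordering the $b$'s in increasing $t_b$ shows there are at most $\lab{v}{a}$ higher-priority arcs at the moment $a$ becomes dangerous. Moreover no new higher-priority arcs can appear while $a$ remains dangerous, because any newly set label equals the current $\indegreeT{v} > \lab{v}{a}$. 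Each call to \algo{check-inc($v$)} either processes $a$ itself (flipping it or resetting $\lab{v}{a} = \indegreeT{v}$, ending its dangerous status), or processes $C/\alpha$ higher-priority arcs, each of which becomes non-higher-priority after processing. Over the $\Omega(\alpha \lab{v}{a} + 1)$ increments separating dangerous from bad, this processes $\Omega(C\lab{v}{a} + C/\alpha)$ higher-priority arcs, which is more than enough to deplete them all for $C$ sufficiently large, forcing $a$ itself to be processed before becoming bad. The main subtlety lies in the interaction between the truncation and the counting of higher-priority arcs, but the trivial edge cases above sidestep the awkward boundary where $\lab{v}{a} \approx T$.
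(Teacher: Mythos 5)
Your proposal is correct and takes essentially the same route as the paper, whose own proof is only a sketch deferring to \reflemma{wc-label-v}: the dangerous/bad dichotomy, the bound on higher-priority arcs, and the depletion argument per unit increment of $\indegreeT{v}$ are exactly the intended adaptation. Your explicit treatment of the truncation boundary (the trivial cases $\lab{v}{a} = T$ and $\alphamore\parof{\lab{v}{a}+1} > T$, plus the observation that every increment of $\indegreeT{v}$ triggers \algo{check-inc($v$)}) fills in details the paper leaves implicit.
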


\begin{proof}[Proof sketch.]
  The proof ideas are essentially the same as for
  \reflemma{wc-label-v}, and we restrict ourselves to a sketch.  We
  can define notions of $a$ being \emph{dangerous} and \emph{bad}
  based on when $\indegreeT{v}$ is large enough to being to threaten
  the desired inequality, and when $a$ actually violates the
  inequality, respectively. Similar to the proof
  \reflemma{wc-label-v}, we have the fact once $a$ becomes dangerous,
  $\indegreeT{v}$ still has to increase by a
  $\parof{1 + \bigOmega{\alpha}}$-factor for $a$ to be bad. We then
  argue that one of the calls to \algo{check-inc($v$)} from these
  increments would have to process $a$ before $\indegreeT{v}$ was
  large enough to make $a$ bad. Processing $a$ either resets the
  labels for $a$ or flips it; either way $a$ is no longer dangerous.
\end{proof}

\begin{lemma}
  \labellemma{sa-label-u}
  For every arc $a = (u,v)$, we have
  \begin{math}
    \lab{u}{a} \leq \alphamore \parof{\indegreeT{u} + 1}.
  \end{math}
\end{lemma}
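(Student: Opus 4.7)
The plan is to mirror the proof of \reflemma{wc-label-u} with $\indegree{\cdots}$ replaced by $\indegreeT{\cdots}$ throughout. Fix an arc $a = (u,v)$, and introduce two auxiliary notions: $a$ is \emph{bad} if it violates the claimed inequality, and \emph{dangerous} if $\lab{u}{a} \geq \parof{1 + \alpha/2} \indegreeT{u}$. The dangerous condition is precisely the eligibility criterion for processing $a$ in the loop of \algo{check-dec($u$)} (cf.\ step \refstep{truncated-check-dec-loop} of \reffigure{truncated-code}), so it is the natural ``early warning'' that $a$ may soon become bad. The goal is to show $a$ never becomes bad.

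First I would observe that the only time the algorithm sets $\lab{u}{a}$, it is set to $\indegreeT{u}$, in which case $\lab{u}{a} = \indegreeT{u}$ and $a$ is not dangerous. So to become bad, $a$ must first pass through a dangerous state, and from dangerous to bad $\indegreeT{u}$ must further decrease by a $\parof{1 + \bigOmega{\alpha}}$-multiplicative factor. Each such unit decrement to $\indegreeT{u}$ invokes a fresh call to \algo{check-dec($u$)}, which gives the data structure repeated opportunities to act on $a$ before $a$ goes bad.

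Next I would set up the priority argument. Call an arc $b \in \outcut{u}$ \emph{higher-priority} than $a$ if $\lab{u}{b} \geq \lab{u}{a}$. At the moment $a$ first becomes dangerous, there are at most $\indegreeT{u} \leq T$ higher-priority arcs. While $a$ remains dangerous, no new higher-priority arcs can appear, because any relabeling sets the new label equal to the current $\indegreeT{u}$, which is strictly less than $\lab{u}{a}$. Each decrement of $\indegreeT{u}$ that does not process $a$ itself processes $C/\alpha$ higher-priority arcs (in decreasing order of $\lab{u}{b}$), and each processed arc either gets flipped or relabeled to $\indegreeT{u}$, which drops it out of the higher-priority pool. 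A standard counting argument then forces $a$ to be processed before $\indegreeT{u}$ has dropped far enough to make $a$ bad; processing $a$ either flips it or resets $\lab{u}{a} = \indegreeT{u}$, so $a$ is no longer dangerous.

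The main subtlety — rather than a serious obstacle — is confirming that the truncation does not break the charging. Decrements to $\indegree{u}$ while $\indegree{u} > T$ leave $\indegreeT{u}$ unchanged, so they do not trigger \algo{check-dec($u$)}; but they also cannot affect the dangerous or bad status of $a$, since both conditions depend only on $\lab{u}{a}$ and $\indegreeT{u}$. So those ``invisible'' decrements are safely ignored, and the proof sketch goes through verbatim modulo the $\indegree \to \indegreeT$ substitution.
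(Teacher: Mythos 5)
Your proposal is correct and takes essentially the same route as the paper, which in fact omits this proof and simply points to \reflemma{wc-label-u}: you reproduce that dangerous/bad priority-charging argument verbatim with $\indegree{\cdot}$ replaced by $\indegreeT{\cdot}$, which is exactly the intended adaptation (you even use the eligibility condition $\lab{u}{a} \geq \parof{1+\alpha/2}\indegreeT{u}$ that matches \algo{check-dec} in \reffigure{unweighted-worst-case-code} after the substitution, rather than the reversed inequality that appears in \reffigure{truncated-code}). Your closing observation---that decrements of $\indegree{u}$ above the threshold $T$ neither trigger useful processing nor change the dangerous/bad status, since both are defined via $\indegreeT{u}$---is the only genuinely new point the truncation introduces, and you handle it correctly.
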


We omit the proof of \reflemma{sa-label-u} as it is essentially the
same as the proof \reflemma{wc-label-u}, in the same way that the
proof of \reflemma{sa-label-v} matches the proof of
\reflemma{wc-label-v}.

\begin{lemma}
  \labellemma{sa-local-optimality} The data structure maintains
  $\parof{1+\bigO{\alpha}, \bigO{1}}$-local optimality.
\end{lemma}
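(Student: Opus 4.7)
The plan is to mirror the proof of \reflemma{wc-local-optimality} essentially verbatim, with $\indegreeT{\cdot}$ replacing $\indegree{\cdot}$ throughout. The lemma we want establishes the truncated local optimality condition of \refsection{truncated-local-optimality}, namely $\indegreeT{v} \leq \alphamore \indegreeT{u} + \bigO{1}$ for every arc $a = (u,v)$, which by the earlier sketch lemma in that subsection is what we need to invoke the global optimality guarantee when $\lpopt \leq T$.

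First I would fix an arbitrary arc $a = (u,v)$ in the current orientation. The strategy is then to chain the three label invariants already proved for this data structure. Starting from \reflemma{sa-label-v}, I get $\indegreeT{v} \leq \alphamore\parof{\lab{v}{a} + 1}$. Applying \reflemma{sa-labels} next bounds $\lab{v}{a} \leq \alphamore\parof{\lab{u}{a} + 1}$. Finally, \reflemma{sa-label-u} gives $\lab{u}{a} \leq \alphamore\parof{\indegreeT{u} + 1}$. Composing these three inequalities yields
\begin{align*}
  \indegreeT{v} \leq \alphamore^3 \indegreeT{u} + \bigO{1} + \bigO{\alpha},
\end{align*}
and since $\alphamore^3 = 1 + \bigO{\alpha}$ for $\alpha$ sufficiently small, this is exactly the $\parof{1 + \bigO{\alpha}, \bigO{1}}$-local optimality condition in terms of truncated in-degrees.

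The routine part is just tracking the constants and $\alpha$ terms through the composition, which is bookkeeping identical to what appeared in the proof of \reflemma{wc-local-optimality}. The only conceptual point worth flagging, and what I would take to be the main (albeit mild) obstacle, is making clear that all three input lemmas are stated with respect to the truncated in-degree $\indegreeT{\cdot}$ rather than $\indegree{\cdot}$, and hence the conclusion is correctly the \emph{truncated} local optimality condition, not the original one from \refsection{local-optimality}. This is the right notion here because the data structure in \reffigure{small-arboricity-code} only ever inspects $\indegreeT{\cdot}$, and by the sketch argument it is truncated local optimality that translates via the proof template of \reflemma{apx-local=>global} into the global guarantee of \reftheorem{low-arboricity} under the assumption $\lpopt \leq T$.
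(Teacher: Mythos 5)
Your proposal is correct and matches the paper's argument: the paper proves \reflemma{sa-local-optimality} by chaining \cref{lemma:sa-labels,lemma:sa-label-u,lemma:sa-label-v} exactly as \reflemma{wc-local-optimality} follows from the corresponding worst-case lemmas, which is precisely your composition $\indegreeT{v} \leq \alphamore^3 \indegreeT{u} + \bigO{1}$. Your remark that the conclusion is the truncated local optimality condition (feeding into the lemma of \refsection{truncated-local-optimality} when $\lpopt \leq T$) is also consistent with the paper's treatment.
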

\begin{proof}[Proof sketch]
  The claim follows from
  \cref{lemma:sa-labels,lemma:sa-label-u,lemma:sa-label-v} in the
  exact same way that \reflemma{wc-local-optimality} follows from
  \cref{lemma:wc-labels,lemma:wc-label-u,lemma:wc-label-v}.
\end{proof}

\subsection{Running time analysis}

\begin{lemma}
  Each edge insertion and deletion takes $\bigO{1 / \alpha} =
  \bigO{\log{n} / \eps^2}$ amortized time.
\end{lemma}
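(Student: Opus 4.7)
The plan is to adapt the charging argument from \reflemma{amortized-running-time} and \reflemma{wc-amortized} by substituting the truncated in-degree $\indegreeT{\cdot}$ for the raw in-degree $\indegree{\cdot}$ throughout. As in the earlier sections, I would first observe that the auxiliary data structures from \refsection{amortized-running-time} (nested doubly-linked lists of arcs in $\incut{v}$ and $\outcut{v}$ organized by label, with a pointer to the position corresponding to the current value of $\indegreeT{v}$) still allow each line of the pseudocode to run in $\bigO{1}$ time; it therefore suffices to bound the amortized number of arcs processed across all recursive calls to \algo{check-inc} and \algo{check-dec}.

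Next I would introduce the credit scheme: whenever an \algo{insert} or \algo{delete} changes $\indegreeT{x}$ by $1$ for some vertex $x$, spread $\bigO{1/\alpha}$ credits uniformly over the arcs currently in $\incut{x}$, where one unit of credit pays for constant-time arc processing. The key enabling observation for the truncated setting is that $\indegreeT{x}$ only changes while $\indegree{x}$ transitions within $[0,T]$, so at the moment of the change we have $|\incut{x}| = \indegree{x} \leq T$, and each arc in $\incut{x}$ receives $\bigOmega{1/(\alpha \indegreeT{x})}$ credit. The core claim, mirroring the earlier analyses, is that any arc $a=(u,v)$ processed by \algo{check-inc}$(v)$ or \algo{check-dec}$(u)$ has accumulated at least one unit of credit since its labels were last set. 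For \algo{check-inc}, the trigger $\indegreeT{v} \geq (1+\alpha/2)\lab{v}{a}$ combined with $\lab{v}{a}$ being a former value of $\indegreeT{v}$ forces $\indegreeT{v}$ to have gained $\bigOmega{\alpha \lab{v}{a}}$ unit increments in between, each contributing $\bigOmega{1/(\alpha \indegreeT{v})}$ credit to $a$, and summing yields $\bigOmega{1}$. The \algo{check-dec} case is symmetric, using decrements of $\indegreeT{u}$.

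The step I expect to require the most care is verifying that the claim goes through in the presence of truncation, since credits are only deposited while $\indegree{x}$ stays within $[0,T]$ and the size of $\incut{v}$ can in principle exceed $\indegreeT{v}$ when $\indegree{v} > T$. The reason this is not a problem is that labels themselves are always at most $T$, so the $(1+\alpha/2)$-ratio trigger forces $\indegreeT{v}$ (not the uncapped $\indegree{v}$) to grow by a multiplicative factor, which is exactly what causes credit deposits. Once the claim is established, the net effect of an \algo{insert} or \algo{delete} on $\indegreeT{}$ is at most a single unit change to a single vertex, contributing $\bigO{1/\alpha}$ credits per operation; all other work is paid by accumulated credits, so the amortized cost per update is $\bigO{1/\alpha} = \bigO{\log n / \eps^2}$.
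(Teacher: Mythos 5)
Your proposal is correct and matches the paper's intended argument: the paper proves this lemma simply by invoking the charging scheme of \reflemma{wc-amortized} (itself based on \reflemma{amortized-running-time}) ``with essentially no changes except the argument is now based on the truncated in-degrees,'' which is exactly the substitution of $\indegreeT{\cdot}$ for $\indegree{\cdot}$ that you carry out. Your extra verification that credit deposits and label triggers still align under truncation (labels are at most $T$, and flips and recursive calls occur only while in-degrees are at most $T$) is a more explicit rendering of the same reasoning, not a different route.
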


We refer the reader to the proof of \reflemma{wc-amortized}, which can
be applied here with essentially no changes except the argument is
now based on the truncated in-degrees.

\begin{lemma}
  Each edge insertion and deletion takes
  $\bigO{\log{n} \log{T} / \eps^4}$ worst-case time.
\end{lemma}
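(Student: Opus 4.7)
The plan is to repeat the two-step counting argument of \reflemma{wc-insert-time}, with the threshold $T$ now taking the role of the in-degree bound $\arboricity{G} + \bigO{\log{n}/\eps}$. Using the same auxiliary nested doubly linked lists described in \refsection{amortized-running-time} (one for each $\incut{v}$ in order of $\lab{v}{\cdot}$, one for each $\outcut{v}$ in order of $\lab{u}{\cdot}$, together with pointers to the current threshold positions), every line of pseudocode in \algo{check-inc} and \algo{check-dec} runs in $\bigO{1}$ time. The total worst-case cost of an edge operation is therefore proportional to the number of arcs processed across all recursive calls it triggers.

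First I would bound the work per invocation. The for-loop in step \refstep{truncated-check-inc-loop} (resp.\ \refstep{truncated-check-dec-loop}) processes at most $C/\alpha$ arcs before either returning (after a flip and one recursive call) or exiting when no further eligible arc remains; each iteration does $\bigO{1}$ work, so each invocation contributes $\bigO{1/\alpha}$ work beyond the recursive call it may spawn.

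The key step is to bound the recursion depth via $T$. Consider an insertion that triggers a chain $\algo{check-inc}(v_0), \algo{check-inc}(v_1), \ldots$ Each transition from $v_i$ to $v_{i+1}$ is caused by flipping an arc $a_i = (v_{i+1}, v_i)$ in step \refstep{truncated-check-inc-flip}, which requires
\begin{align*}
  \indegreeT{v_i} \geq \alphamore \parof{\indegreeT{v_{i+1}} + 1}.
\end{align*}
Because the flip increases $\indegreeT{v_{i+1}}$ by at most $1$ (and not at all if $\indegree{v_{i+1}} \geq T$), the value of $\indegreeT{v_{i+1}}$ on entry to the recursive call is at most $\indegreeT{v_i}/\alphamore$. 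This quantity is bounded above by $T$ by truncation and must remain a positive integer for the recursion to fire further, so the chain has length at most $\log[1+\alpha]{T} = \bigO{\log{T}/\alpha}$. For deletions a symmetric argument on \algo{check-dec} shows the truncated in-degrees along the recursive chain \emph{grow} by an $\alphamore$-factor per step, again capped by $T$, so the depth is also $\bigO{\log{T}/\alpha}$.

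Combining, the total worst-case cost of an edge operation is $\bigO{\parof{1/\alpha^2} \log{T}} = \bigO{\log{T} \log^2{n} / \eps^4}$, yielding the claimed bound. The main difficulty compared with \reflemma{wc-insert-time} is to verify that the truncation really does cut the recursion depth at $\log{T}/\alpha$ even when $\arboricity{G}$ is much larger than $T$: once $\indegree{v_i}$ exceeds $T$, the quantity $\indegreeT{v_i}$ is pinned at $T$ and each flip condition is assessed against the capped value, so the geometric contraction along the chain is measured from $T$ rather than from the true (possibly much larger) maximum in-degree. This is precisely what replaces the $\log{\arboricity{G}}$-factor of \refsection{worst-case} by $\log{T}$.
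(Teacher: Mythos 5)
Your argument is correct and essentially identical to the paper's proof sketch: each invocation of \algo{check-inc}/\algo{check-dec} does $\bigO{1/\alpha}$ work, and the recursion depth is $\bigO{\log{T}/\alpha}$ because the truncated in-degrees change by a $\alphamore$-factor along the recursive chain and are capped at $T$, giving $\bigO{\log{T}/\alpha^2}$ in total. Note that the $\bigO{\log{T}\log{n}^2/\eps^4}$ you derive matches the paper's own computation and \reftheorem{low-arboricity} rather than the lemma's stated $\bigO{\log{n}\log{T}/\eps^4}$, which is off by a $\log{n}$ factor --- an inconsistency already present in the paper's statement, not a flaw in your proof.
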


\begin{proof}[Proof sketch]
  The proof is similar to that of \reflemma{wc-insert-time} in
  \refsection{worst-case}. Consider first edge insertion. The two key
  ideas from that proof are as follows.  First, each call to
  \algo{check-inc($v$)} takes $\bigO{1/\alpha}$ time excluding
  recursive calls.  Second, each recursive call to
  \algo{check-inc($v$)}, $\indegree{v}$ increases by a
  $\alphamore$-factor. The difference now is that we recursion stops
  when $\indegree{v}$ exceeds $T$. The total running time is thus
  \begin{math}
    \bigO{\log[1+\alpha]{T} / \alpha} = \bigO{\log{T} / \alpha^2} =
    \bigO{\log{T} \log{n}^2 / \eps^4},
  \end{math}
  as desired.

  Likewise the key ideas used to bounding edge deletion in
  \reflemma{wc-insert-time} carry over here; the difference now is
  that the depth of the recursion is $\bigO{\log{T} / \alpha^2}$. The
  running time follows.
\end{proof}

\section{Extending to hypergraphs}

\labelsection{extensions}

\providecommand{\weight}{\fparnew{w}}%


Recall that a hypergraph generalizes undirected graphs by allowing each
edge to have any number of endpoints. Let $G = (V,E)$ be a
hypergraph. The maximum number of endpoints in any edge is called the
\emph{rank} of the hypergraph; we let $\therank$ denote the rank of
$G$. The \emph{size} of the hypergraph is defined as the sum, over all
edges, of the number of endpoints in that edge. We let $\thesize$
denote the size of $G$.  For a set of vertices $S \subseteq V$, let
$E(S)$ denote the set of hyperedges with all endpoints in $S$. The
density of a set $S$ is defined as
\begin{math}
  \sizeof{E(S)} / \sizeof{S}.
\end{math}
The densest subhypergraph problem is to find the set $S\subseteq V$
that maximizes the density. This can be solved optimally via
a reduction to network flow or via submodular function minimization.

To generalize our data structures to hypergraphs we need to generalize
the notion of orientations to hypergraphs in a natural fashion. An orientation of a
hypergraph $G = (V,E)$ consists of selecting, for each edge $e \in E$,
an endpoint $v \in e$ called the \emph{head}. In this case we say that
$e$ is \emph{directed to $e$}. Given an orientation of $G$, the
in-degree of a vertex $v$, denoted $\indegree{v}$, is defined as the
number of edges for which $v$ is the head. One can define fractional
orientations of hypergraphs analogously; here each edge $e$ is
associated with a convex combination of endpoints that fractionally
act as the head of $e$.

In hypergraphs, as in graphs, it is easy to see that the density of
any subgraph is bounded above by the maximum in-degree of any
orientation.

Recall the dual LPs for densest subgraph and fractional orientations
in \reffigure{dsglp}. One can easily generalize these LPs to
hypergraphs, as noted in prior work. To extend the densest subgraph
LP, for each summand corresponding to an edge $e$ in the objective, we
take the minimum over all endpoints in $e$. In the dual LP for
fractional orientations, we now have a variable $y(e,v)$ for every
edge $e$ and every endpoint $v \in e$, and these must sum to at least
one for every edge $e$. For the remainder of this section, we let
$\lpopt$ denote the common optimum values of these LPs for the
hypergraph $G$.

Next, we define a notion of approximate local optimality of
orientations of hypergraphs that leads to approximate densest
subgraphs.  Let $G = (V,E)$ by a hypergraph and fix an orientation of
$G$. Let $\alpha, \beta \geq 0$. We say that the orientation is
$(\alpha, \beta)$-locally optimal, or a \emph{local
  $(\alpha,\beta)$-approximation}, if for all edges $e \in E$ with
head $v \in e$, and all other endpoints $u \in e$, we have
\begin{align*}
  \indegree{u} \leq \alphamore \indegree{v} + \bigO{1}.
\end{align*}

\begin{lemma}
  Let $\eps \in (0,1)$. Let $G = (V,E)$ be an oriented hypergraph and
  let $\mu = \max_v \indegree{v}$. Suppose the orientations is
  $\parof{c \eps^2 / \log{n}, \bigO{1}}$-locally optimal for a
  sufficiently small constant $c$. Then
  \begin{math}
    \mu \leq \epsmore \lpopt + \bigO{\ln{n} / \eps}.
  \end{math}
\end{lemma}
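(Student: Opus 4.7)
The plan is to mirror the proof of \reflemma{apx-local=>global} with only one substantive change: the containment step used in the graph case must be re-established in the hyperedge setting by applying approximate local optimality to each non-head endpoint of a hyperedge, rather than to the single tail of a graph arc. Everything else transfers unchanged.

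First I would set up the same geometric chain $\mu_0 = 0$, $\mu_i = \alphamore \mu_{i-1} + \beta$ (with $\beta = \bigO{1}$), pick the unique $k$ with $\mu_{k-1} \leq \mu < \mu_k$, and record the bound $\mu_{k-i} \geq e^{-\alpha i}\mu - i\beta$. Define the level sets $S_i = \setof{v : \indegree{v} \geq \mu_{k-i}}$ for $i = 1, \dots, k$; by construction each $S_i$ is nonempty. The hypergraph analog of the inclusion $\inneighbors{S_i} \subseteq S_{i+1}$ is the claim that every hyperedge $e$ whose head lies in $S_i$ is entirely contained in $S_{i+1}$. The head itself lies in $S_i \subseteq S_{i+1}$; for any other endpoint $u$ of $e$, the local-optimality hypothesis relates the head's in-degree to $\indegree{u}$ and so (up to the built-in additive slack) yields $\indegree{u} \geq \mu_{k-i-1}$, placing $u \in S_{i+1}$. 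Notably, the rank of the hypergraph does not appear here as a loss factor.

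With this containment, every hyperedge with head in $S_i$ contributes to $E(S_{i+1})$, and since each edge has a unique head,
\begin{align*}
  \sizeof{E(S_{i+1})} \geq \sum_{v \in S_i} \indegree{v} \geq \sizeof{S_i}\,\mu_{k-i}.
\end{align*}
The rest of the argument is the same pigeonhole/growth computation as in \reflemma{apx-local=>global}: since $\sizeof{S_i} \leq n$, there exists $i \leq \bigO{\log n/\eps}$ with $\sizeof{S_{i+1}} \leq \epsmore \sizeof{S_i}$, whence $\lpopt \geq \sizeof{E(S_{i+1})}/\sizeof{S_{i+1}} \geq \mu_{k-i}/\epsmore$. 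Rearranging, inserting the lower bound on $\mu_{k-i}$, and tuning the free parameter to balance the two error terms, then specializing to $\alpha = c\eps^2/\log n$ and $\beta = \bigO{1}$, yields the stated inequality $\mu \leq \epsmore \lpopt + \bigO{\log n/\eps}$.

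The only real obstacle is verifying that the containment step truly generalizes --- that applying the hypothesis once per non-head endpoint suffices to place every endpoint of an admissible hyperedge into $S_{i+1}$. This works because the gap built into the geometric chain, $\mu_{k-i} \geq \alphamore \mu_{k-i-1} + \beta$, is exactly what absorbs the additive slack in the local-optimality hypothesis for a single endpoint, and the same estimate is then applied in parallel to every non-head endpoint. No new ideas beyond this generalization are required.
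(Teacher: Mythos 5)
Your proposal is correct and matches the paper's intended argument exactly: the paper omits the proof with a remark that it is "essentially the same" as the graph lemma, modulo reinterpreting the neighborhood of a level set as the set of non-head endpoints of hyperedges directed into that set, which is precisely the containment step you identify and verify. One small note: you implicitly use the local-optimality inequality in the form $\indegree{\text{head}} \leq (1+\alpha)\indegree{u} + O(1)$ for each non-head endpoint $u$ (the direct analog of the graph case); the paper's hypergraph definition appears to have $u$ and $v$ transposed relative to this, but your reading is the one under which the proof (and the algorithm) actually goes through.
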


We omit the proof as it is essentially the same as
\reflemma{apx-local=>global}. Here we point out that in oriented
hypergraphs, for a set of vertices $S$, $\outneighbors{S}$ is defined
as the set of vertices that are an endpoint to an edge directed
towards a vertex in $S$.

The data structures for graphs generalize in a straightforward
fashion.  The main difference is that we have a label $\lab{v}{e}$ for
every edge $e$ and every endpoint $v$ of $e$. \algo{check-inc($v$)} is
generalized as follows. Recall that in graphs, \algo{check-inc($v$)}
processes an edge $e$ oriented to $v$ when $\indegree{v}$ has grown
significantly larger than $\lab{v}{e}$. In this case it either makes a
favorable flip to the smaller in-degree endpoint, or relabels the
endpoints of $e$. Both of these ideas generalize to hyperedges. The
difference is that we now have to check all the endpoints of $e$ to
identify the endpoint of minimum in-degree. If the minimum in-degree
is smaller (or substantially smaller) than that of the head, then we
make the minimum in-degree endpoint the head and recurse on that
vertex; otherwise we relabel all the endpoints of $e$. In particular,
processing an edge $e$ now takes time proportional to the number of
endpoints of $e$, which is at most the rank $\therank$ of the
hypergraph. Likewise \algo{check-dec($u$)} generalizes to hypergraphs
in a straightforward manner with an additional running time overhead
of $\therank$. The proof of correctness follows by the exact same
arguments as for graphs and is therefore omitted.

The following theorem extends \reftheorem{wc-mixed-apx} to hypergraphs.

\begin{theorem}
  \labeltheorem{wc-hypergraph-mixed-apx} Let $G$ be an unweighted and
  undirected hypergraph over $n$ vertices and rank $\therank$,
  dynamically updated by edge insertions and deletions.  The one can
  maintain an orientation of $G$ with maximum in-degree
  $\epsmore \lpopt + \bigO{\log{n} / \eps}$, and (implicitly) a
  subgraph of density $\epsless \lpopt - \bigO{\log{n}/\eps}$, in
  $\bigO{\therank \log{n} / \eps^2}$ amortized time and
  $\bigO{\therank \log{n}^3 \log{\lpopt + \log{n} / \eps} / \eps^4}$
  worst-case time per edge insertion or deletion.  The data structure
  uses $\bigO{\thesize}$ space.
\end{theorem}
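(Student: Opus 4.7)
The plan is to lift the data structure of \reftheorem{wc-mixed-apx} (and its low-arboricity refinement from \reftheorem{low-arboricity}, if we want the tighter $\log{\lpopt + \log{n}/\eps}$ bound) to hypergraphs by generalizing arc labels to per-endpoint labels on each hyperedge. Concretely, each hyperedge $e$ carries a head $v \in e$ (the current orientation) and a label $\lab{w}{e}$ for \emph{every} endpoint $w \in e$, not just the head. The invariants we aim to preserve are the direct analogs of \cref{lemma:wc-labels,lemma:wc-label-v,lemma:wc-label-u}: for every $e$ with head $v$ and every other endpoint $u \in e$, we have $\lab{v}{e} \leq \alphamore(\lab{u}{e}+1)$, $\indegree{v} \leq \alphamore(\lab{v}{e}+1)$, and $\lab{u}{e} \leq \alphamore(\indegree{u}+1)$. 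Combining these gives $\parof{1+\bigO{\alpha}, \bigO{1}}$-local optimality in the hypergraph sense, which feeds into the hypergraph version of \reflemma{apx-local=>global} stated earlier in this section to yield the claimed global approximation guarantee for $\alpha = \bigTheta{\eps^2/\log n}$.

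Next I would generalize \algo{check-inc} and \algo{check-dec}. The routine \algo{check-inc($v$)} walks hyperedges $e$ currently directed to $v$ whose label $\lab{v}{e}$ is sufficiently stale (i.e.\ $\indegree{v} \geq (1+\alpha/2)\lab{v}{e}$), in increasing order of $\lab{v}{e}$. For each such $e$, instead of comparing two endpoints, we scan all $|e| \leq \therank$ endpoints to find $u \in e$ minimizing $\indegree{u}$; if $\indegree{v} \geq \alphamore(\indegree{u}+1)$ we redirect $e$ to $u$, reset labels for \emph{all} endpoints of $e$, and recurse via \algo{check-inc($u$)}, otherwise we just refresh the labels of all endpoints of $e$. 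The routine \algo{check-dec($u$)} is generalized symmetrically: walk hyperedges directed to $u$ with stale $\lab{u}{e}$, and for each such edge look for any endpoint $w \in e$ that is a much better head than $u$. The auxiliary nested doubly-linked-list structure from \refsection{amortized-running-time} is attached on a per-endpoint basis (each vertex $v$ keeps the incident hyperedges in nested lists by $\lab{v}{e}$), so each $e$ appears in $|e|$ such lists; this pushes total auxiliary storage up to $\bigO{\thesize}$ but keeps the ``find next stale edge'' and ``insert/relabel'' operations at $\bigO{1}$ each.

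For running time, the amortization argument from \reflemma{wc-amortized} carries over verbatim after rescaling each step by $\bigO{\therank}$: processing a single hyperedge costs $\bigO{\therank}$ rather than $\bigO{1}$ (because of the scan over endpoints and the $|e|$ label updates), and one insertion or deletion still changes a single vertex in-degree by one, so the credit scheme that distributes $\bigO{1/\alpha}$ credits to $\incut{x}$ each time $\indegree{x}$ changes pays off at amortized $\bigO{\therank/\alpha} = \bigO{\therank \log n/\eps^2}$. The worst-case bound is obtained by replaying \reflemma{wc-insert-time}: each call to \algo{check-inc} does $\bigO{\therank/\alpha}$ work locally, the label invariants force $\indegree{\cdot}$ to grow by a $\alphamore$ factor between recursive calls (or, in the low-arboricity variant, up to the truncation threshold $T$), bounding recursion depth by $\bigO{\log\parof{\lpopt + \log n/\eps}/\alpha}$. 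Multiplying yields the stated $\bigO{\therank \log^3 n \log\parof{\lpopt + \log n/\eps}/\eps^4}$ worst-case bound.

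The main obstacle I expect is bookkeeping rather than a genuine technical barrier: the label-refresh step for a hyperedge now touches $|e|$ auxiliary lists, so one has to be careful that the charging scheme from \reflemma{wc-amortized} still pays for all $|e|$ list updates (not just the one endpoint whose label change triggered processing), and that the ``process $C/\alpha$ stale edges'' policy from \reffigure{unweighted-worst-case-code} is applied per-endpoint in a way that keeps \reflemma{wc-label-v}'s priority-queue argument intact for each endpoint separately. Finally, the fractional-orientation upgrade via implicit edge duplication carries over exactly as in \refsection{worst-case-duplicate-space}: duplicates of a hyperedge share a single label tuple of length $|e|$, preserving the $\bigO{\thesize}$ space bound.
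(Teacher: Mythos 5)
Your proposal follows essentially the same route as the paper's own (brief) proof: the paper likewise keeps a label $\lab{w}{e}$ for every endpoint $w$ of every hyperedge, has \algo{check-inc($v$)} scan all of the at most $\therank$ endpoints of a stale edge to find the minimum in-degree endpoint, either re-heads the edge there and recurses or relabels all endpoints, and then observes that the label invariants, the hypergraph local-to-global lemma, the credit/charging scheme, and the recursion-depth bound all carry over verbatim with a $\bigO{\therank}$ overhead per processed edge, shared labels for implicit duplicates, and $\bigO{\thesize}$ space. (Your multiplication actually yields $\bigO{\therank \log{n}^2 \log{\lpopt + \log{n}/\eps}/\eps^4}$ worst-case, which is at least as strong as the stated bound, so no harm there.)

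One detail to fix: your description of \algo{check-dec($u$)} has the orientation reversed. When $\indegree{u}$ decreases, the invariant under threat is $\lab{u}{e} \leq \alphamore\parof{\indegree{u}+1}$ for hyperedges $e$ in which $u$ is a \emph{non-head} endpoint (the analogue of $\outcut{u}$), so \algo{check-dec($u$)} must walk those edges with $\lab{u}{e} \geq \parof{1+\alpha/2}\indegree{u}$ and, when the current head $v$ satisfies $\indegree{v} \geq \alphamore\parof{\indegree{u}+1}$, re-head $e$ at $u$ and recurse on $v$. Walking edges already directed to $u$ and re-heading them at a lighter endpoint is the \algo{check-inc} action and would leave the third invariant you state unrepaired after deletions and flips. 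The correct version is forced by your own invariants, so this is a slip in the description rather than a flaw in the approach; with it corrected, the proposal matches the paper.
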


We can also extend the truncated data structure from
\refsection{small-arboricity} to hypergraphs as described above, and
run it parallel with \reftheorem{wc-hypergraph-mixed-apx} similar to
the combination presented in \refsection{small-arboricity}.  Here,
however, a nonzero value of $\lpopt$ may be as small as $1/
\therank$. Thus to diminish the additive error in
\reftheorem{wc-hypergraph-mixed-apx} we have to duplicate each edge
$\bigO{\therank \ln{n} / \eps^2}$ time, rather than
$\bigO{\ln{n} / \eps^2}$ as in graphs.  Altogether we obtain the
following theorem generalizing \refcorollary{wc-best} to hypergraphs
of rank $\therank$.

\begin{theorem}
  Let $G$ be an undirected hypergraph of rank $\therank$ over $n$
  vertices dynamically updated by edge insertions and deletions. Let
  $\eps > 0$ be sufficiently small. Then one can maintain a
  $\epsless$-approximation of the density, and an implicit
  list-representation of the vertices of an $\epsless$-approximate
  densest subgraph, with linear space and within the following time
  bounds.
  \begin{mathresults}
  \item $\bigO{\therank^2 \log{n}^2 / \eps^4}$ amortized time per edge
    insertion or deletion.
  \item
    $\bigO{\frac{\therank \log{n} \log{\lpopt}}{\eps^4} +
      \frac{\therank^2 \log{n}^3 \parof{\log \log{n} +
          \log{1/\eps}}}{\eps^6}}$ worst-case time per edge insertion
    or deletion.
  \end{mathresults}
\end{theorem}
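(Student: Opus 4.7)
The plan is to follow the recipe used to prove \refcorollary{wc-best} in the graph setting, paying a $\bigO{\therank}$ factor per edge operation and using a larger edge-duplication factor to handle the fact that hypergraph densities can be as small as $1/\therank$. The two building blocks are (i) the hypergraph data structure of \reftheorem{wc-hypergraph-mixed-apx}, which is already available, and (ii) a hypergraph analogue of \refcorollary{log-arboricity} obtained by generalizing the truncated data structure of \refsection{small-arboricity} to hypergraphs.

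The first step is to build this truncated hypergraph structure. The generalization mirrors how \reftheorem{wc-hypergraph-mixed-apx} was obtained from \reftheorem{wc-mixed-apx}: each hyperedge $e$ carries a label $\lab{v}{e}$ at every endpoint $v \in e$, and a single pass of \algo{check-inc} or \algo{check-dec} over a hyperedge scans all of its endpoints, contributing an $\bigO{\therank}$ factor per edge operation. All comparisons are driven by the truncated in-degrees $\indegreeT{v}$ exactly as in \refsection{small-arboricity}. The analogues of \cref{lemma:sa-labels,lemma:sa-label-u,lemma:sa-label-v,lemma:sa-local-optimality} go through unchanged, and the amortized and worst-case charging arguments lift directly. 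This yields, for any $T \geq \bigOmega{\log{n}/\eps^2}$, a hypergraph data structure that maintains an orientation with maximum in-degree at most $\epsmore \lpopt + \bigO{\log{n}/\eps}$ whenever $\lpopt \leq T$, in $\bigO{\therank \log{n}/\eps^2}$ amortized and $\bigO{\therank \log{n}^2 \log{T}/\eps^4}$ worst-case time per update.

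The second step is to remove the additive $\bigO{\log{n}/\eps}$ error via implicit edge duplication, using $k = \bigO{\therank \log{n}/\eps^2}$ copies rather than the $\bigO{\log{n}/\eps^2}$ copies used for graphs; the extra $\therank$ is forced because hypergraph densities can be as small as $1/\therank$. The linear-space simulation of duplicates carries over verbatim from \refsection{amortized-duplicate-space} and \refsection{worst-case-duplicate-space}, since a common label can be shared across all copies of a hyperedge. Setting $T = \bigO{\log{n}/\eps^2}$ and running the $k$-duplicated truncated hypergraph structure in parallel with \reftheorem{wc-hypergraph-mixed-apx}, we select the truncated output when its reported maximum truncated in-degree is below $\epsmore T$, and otherwise use the output of \reftheorem{wc-hypergraph-mixed-apx}, whose additive $\bigO{\log{n}/\eps}$ slack is then at most an $\bigO{\eps}$-fraction of $\lpopt$. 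Summing running times gives the claimed amortized bound $\bigO{\therank^2 \log{n}^2/\eps^4}$, while the first worst-case term comes from \reftheorem{wc-hypergraph-mixed-apx} and the second from the duplicated truncated structure. Total space remains $\bigO{\thesize}$ because duplication is implicit in both halves.

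The only substantive check is that the charging and recursion-depth arguments of \refsection{small-arboricity} survive hyperedge scans intact; this is essentially mechanical given the template already laid out for \reftheorem{wc-hypergraph-mixed-apx}. The bookkeeping for running two structures in parallel and choosing between their outputs is identical to the graph case and poses no new difficulty, so the main (very mild) obstacle is just verifying that the constant hidden in the duplication count $k = \bigO{\therank \log{n}/\eps^2}$ is large enough to make the $\bigO{\log{n}/\eps}$ slack negligible when $\lpopt \geq 1/\therank$.
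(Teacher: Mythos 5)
Your proposal is correct and follows essentially the same route as the paper: generalize the truncated (threshold-$T$) data structure of \refsection{small-arboricity} to hypergraphs with an $\bigO{\therank}$ overhead per processed hyperedge, implicitly duplicate each edge $\bigO{\therank \log{n}/\eps^2}$ times (the extra $\therank$ factor precisely because a nonzero hypergraph density can be as small as $1/\therank$), and run the result in parallel with the structure of \reftheorem{wc-hypergraph-mixed-apx}, combining outputs as in \refcorollary{wc-best}. The paper's own argument is exactly this sketch, so no further comparison is needed.
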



\printbibliography

\appendix

\end{document}